\journal{Reliability Engineering \& System Safety}
\newcommand{\blind}{0}
\newtheorem{theorem}{Theorem}[section]
\newenvironment{proof}[1][Proof]{\begin{trivlist}
\item[\hskip \labelsep {\bfseries #1}]}{\end{trivlist}}
\begin{document}

\if0\blind
{
\title{Refined Stratified Sampling for efficient Monte Carlo based uncertainty quantification}
\author{Michael D. Shields$^{1*}$, \& Kirubel Teferra$^{1}$  \& Adam Hapij$^{2}$  \& Raymond P. Daddazio$^{2}$ \\
		$~^{1}$ Dept. of Civil Eng., Johns Hopkins University \\
		$~^{*}$ Corresponding Author: michael.shields@jhu.edu \\
		$~^{2}$ Applied Science \& Investigations, Weidlinger Associates, Inc. }

} \fi

\if1\blind
{
\title{Refined Stratified Sampling for efficient Monte Carlo based uncertainty quantification}
%  \bigskip
%  \bigskip
%  \bigskip
%  \begin{center}
%    {\LARGE Refined Stratified Sampling for efficient Monte Carlo based uncertainty quantification}
%\end{center}
%  \medskip
} \fi

\bigskip

%\linenumbers
%\setlength\linenumbersep{5pt}
%\renewcommand\linenumberfont{\normalfont\tiny\sffamily\color{gray}}

\begin{abstract}
A general adaptive approach rooted in stratified sampling (SS) is proposed for sample-based uncertainty quantification (UQ). To motivate its use in this context the space-filling, orthogonality, and projective properties of SS are compared with simple random sampling and Latin hypercube sampling (LHS). SS is demonstrated to provide attractive properties for certain classes of problems. The proposed approach, Refined Stratified Sampling (RSS), capitalizes on these properties through an adaptive process that adds samples sequentially by dividing the existing subspaces of a stratified design. RSS is proven to reduce variance compared to traditional stratified sample extension methods while providing comparable or enhanced variance reduction when compared to sample size extension methods for LHS - which do not afford the same degree of flexibility to facilitate a truly adaptive UQ process. An initial investigation of optimal stratification is presented and motivates the potential for major advances in variance reduction through optimally designed RSS. Potential paths for extension of the method to high dimension are discussed. Two examples are provided. The first involves UQ for a low dimensional function where convergence is evaluated analytically. The second presents a study to asses the response variability of a floating structure to an underwater shock.
%Some properties of stratified sampling, Latin hypercube sampling (LHS), and simple random sampling are studied and a case is made for the use of stratified sampling for certain classes of problems. A new sample size extension methodology - Refined Stratified Sampling (RSS) - is proposed wherein strata are progressively refined to more effectively add samples to a stratified design. The method is proven to reduce the variance of statistical estimators when compared to traditional stratified sample size extension methods. Optimality of the stratum refinement is studied and its benefits and drawbacks are explored. The method is demonstrated to provide comparable convergence rates to LHS for low-dimensional problems with the added benefit of allowing sample size extension by as many or as few samples as desired. RSS is placed in a general adaptive uncertainty quantification framework intended to minimize the computational effort necessary for complex and computationally expensive problems of practical interest. Two examples are provided. The first involves the forward propagation of uncertainty through a low-dimensional function where convergence is explicitly evaluated at each sample size extension. The second involves a real multi-physics computational model possessing uncertain parameters for the response of a floating structure subject to an underwater shock. 
\end{abstract}

\begin{keyword}
uncertainty quantification \sep Monte Carlo simulation \sep stratified sampling \sep Latin hypercube sampling \sep sample size extension
\end{keyword}

\maketitle

%%%%%%%%%%%%%%%%%%%%%%%%%%%%%%%%%%%%%%%%%%%%%%%%%%%%%%
% 
% 										SECTION 1
%
%%%%%%%%%%%%%%%%%%%%%%%%%%%%%%%%%%%%%%%%%%%%%%%%%%%%%%

\section{Introduction}
Monte Carlo methods are used for uncertainty quantification (UQ) in nearly every field of engineering and computational science. They are often (rightfully) criticized for their high computational cost - especially for reliability analysis and other applications where extreme response is of interest. Yet, they remain the most robust and effective means of quantifying uncertainty in computational analyses. For this reason, there has been much interest in improving the computational efficiency of Monte Carlo methods. As Janssen \cite{Janssen_RESS_13} points out, there are two means of accomplishing this computational savings: 1.\ Improve the convergence rate of the sampling routine; and/or 2.\ Perform a sample set that is optimally small. This work addresses both of these aspects.

To achieve an improved convergence rate, we focus on a class of variance reduction techniques called stratified sampling \citep{Tocher_63,Helton_Davis_RESS_03} that includes the popular Latin hypercube sampling (LHS) method \citep{McKay_et_al_Tech_79,Helton_Davis_RESS_03}. Stratified sampling methods operate by subdividing the sample space into smaller regions and sampling within these regions. In so doing, the produced samples more effectively fill the sample space and therefore reduce the variance of computed statistical estimators. LHS in particular has been widely used for uncertainty quantification (e.g. \citep{Olsson_et_al_SS_03,Wang_JMD_03}) and its properties have been studied extensively \citep{McKay_et_al_Tech_79,Stein_Tech_87,Owen_JRSSB_92,Huntington_Lyrintzis_PEM_98,Fang_Ma_JOC_01,Fang_et_al_MoC_02}. True stratified sampling, on the other hand, has not received a similar level of attention in the UQ literature - largely due to the desirable properties of LHS for many applications - although it is widely recognized as an effective means of variance reduction \cite{Glasserman_2003,Rubinstein_Kroese_08, Kalos_Whitlock_2008}. In this work, some properties of stratified sampling methods (including LHS) are studied and the case is made for the use of true stratified sampling for certain classes of problems. In particular, there are many low-to-moderate dimensional applications where true stratified sampling is competitive with or even more effective than LHS. This motivates the need to extend these benefits to high dimensional problems and some insights for future work along these lines are provided.

As for minimizing the sample size, there is no generally accepted means to identify an optimally small sample set {\it{a priori}}. In this work, we espouse a general adaptive paradigm for UQ (Figure \ref{fig:UQ_Paradigm}) wherein samples are progressively added and analyses conducted until a user-defined convergence condition is met. This is not a new concept, but it is one that is perhaps underdeveloped. Early work by Gilman \citep{Gilman_Conf_68} demonstrated this idea for classical Monte Carlo analyses using simple random sampling (SRS) and a recent work by Janssen \citep{Janssen_RESS_13} provides a renewed emphasis on its importance. Until recently though, most stratified sampling methods required {\it{a priori}} prescription of the sample size and precluded sample size extension. Therefore, it was necessary to conservatively oversample - otherwise the entire analysis would need to be restarted if the convergence criteria were not met. 
\begin{figure}[H]
\centering
\includegraphics[width=1.\columnwidth]{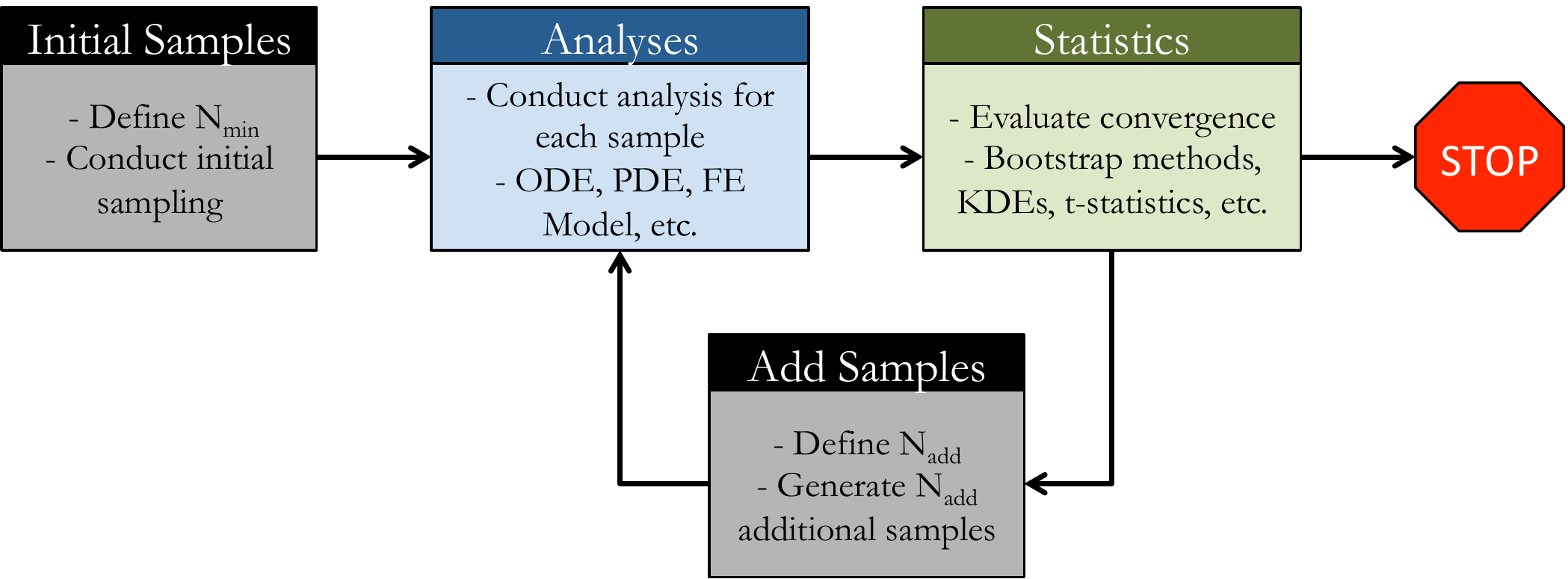}
\caption{General adaptive process for UQ.}
\label{fig:UQ_Paradigm}
\end{figure}

Recent developments in LHS enable sample size extension using so-called Hierarchical Latin Hypercube Sampling (HLHS) \citep{Tong_RESS_06,Sallaberry_et_al_RESS_08,Vorechovsky_et_al_ICOSSAR_13}, Replicated Latin Hypercubes (RLHs) \citep{Iman_Conf_81,McKay_TR_95}, and related methods \citep{Qian_Bio_09,Xiong_EO_09,Rennen_et_al_SMO_10}. These methods, while important, do not afford the flexibility necessary to fully capitalize on the UQ paradigm promoted in this work. Hierarchical methods, for example, produce - at a minimum - a new sample for each existing sample during the extension. Consequently, sample size increases exponentially with the number of extensions performed. RLHs grow linearly with the number of extensions ($n_{i+1}=2n_{i}$) and require each subsequent LHS to be the same size and utilize the same stratification as the original. Since there is no refinement of the strata using RLHs, there is a tradeoff between loss of desirable sample properties and sample size growth rate. These considerations limit the adoption of a truly adaptive UQ process.

A few adaptive SS methods have been proposed in the past - primarily in the physics community focusing on applications in multidimensional integration \cite{Lepage_JCP_78,Press_Farrar_CiP_90}. However, these methods have slightly different aims and, again, do not possess the desired level of adaptability. To enable effective sample size extension from a stratified design, a new methodology - Refined Stratified Sampling (RSS) - is proposed that is conceptually simple and adds as many or as few samples as desired at each extension. The method operates by dividing existing strata and generating samples in the newly created empty strata. The method is proven to unconditionally reduce the variance of the sample set when compared with existing methods for sample size extension in stratified sampling. It is demonstrated that the method affords convergence rates for statistical estimators that are comparable or superior to LHS for certain classes of problems with the added advantage of maximal flexibility in its sample size extension capabilities. Moreover, through an exploration of stratum optimality, we suggest that the RSS method provides a convenient and rigorous avenue for identifying a stratification of the input space that samples optimally in the output space of interest. This is the basis of a parallel effort for applications in reliability analysis \cite{Shields_Sundar_IJRS_2015}.

The emphasis of this paper is on developing the RSS methodology, which is then demonstrated on low-dimensional applications where the benefits of true stratified sampling are observed. Considerations for its extension to high dimensional problems are discussed and motivation provided for future research along these lines. Two specific example applications are provided.

\section{Review of Sampling Methods}

Consider a stochastic system described by the relation:
\begin{linenomath}
\begin{equation}
\label{eqn:general_system}
Y=F(\mathbf{X})
\end{equation}
\end{linenomath}
where the random vector $\mathbf{X}=\{X_1,X_2,\dots,X_n\}$ possesses independent components defined over the sample space $\mathbf{\mathcal{S}}$ describing $n$ input random variables with marginal cumulative distribution functions (CDFs) $D_{X_i}(\cdot)$. Correlated random variables are not considered in this work as it is common practice to produce a set of uncorrelated random variables from a correlated set using methods such as Principal Component Analysis and the Nataf or Rosenblatt transformations. The operator $F(\cdot)$ commonly represents a computer simulation such as a finite element model possessing strong nonlinearities and/or instabilities such that $Y$ is difficult to assess probabilistically. The following presents a brief review of common Monte Carlo methods for randomly sampling $\mathbf{X}$ in order to perform statistical analysis of $Y$.

For notational clarity, subscript $i$ is used to denote a vector component, subscript $k$ is used to denote a stratum of the space, and subscript $l$ denotes a specific sample. Additionally, $n$ refers to the total number of vector components (dimension), $M$ refers to the number of strata in a design, and $N$ refers to the total number of samples.

%%%%%%%%%%%%%%%%%%%%%%%%%%%%%%%%%%%%%%%%%%%%%%%%%%%%%%
% 										SECTION 2.1
%%%%%%%%%%%%%%%%%%%%%%%%%%%%%%%%%%%%%%%%%%%%%%%%%%%%%%

\subsection{Simple Random Sampling}

Traditional Monte Carlo methods rely on so-called Simple Random Sampling (SRS) or Monte Carlo Sampling in which realizations of $\mathbf{X}$, denoted $\mathbf{x}_l;\hspace{3pt} l=1,\cdots,N$ (samples), are generated as independent and identically distributed (iid) realizations on $\mathbf{\mathcal{S}}$ by:
\begin{linenomath}
\begin{equation}
\label{eqn:1}
x_{li}=D_{X_{i}}^{-1}(U_i); i=1,2,\dots,n
\end{equation} 
\end{linenomath}
where $U_i$ are iid uniformly distributed samples on $[0,1]$. The realizations $\mathbf{x}$ are then applied to the system $y=F(\mathbf{x})$ and $y$ is statistically evaluated. 

%%%%%%%%%%%%%%%%%%%%%%%%%%%%%%%%%%%%%%%%%%%%%%%%%%%%%%
% 										SECTION 2.2
%%%%%%%%%%%%%%%%%%%%%%%%%%%%%%%%%%%%%%%%%%%%%%%%%%%%%%

\subsection{Stratified Sampling}
\label{sec:SS}
Stratified Sampling (SS) divides the sample space $\mathbf{\mathcal{S}}$ into a collection of $M$ disjoint subsets (strata) $\mathbf{\Omega}_k;  k=1,2,\dots,M$ with $\cup_{k=1}^M\mathbf{\Omega}_k=\mathbf{\mathcal{S}}$ and $\mathbf{\Omega}_p\cap\mathbf{\Omega}_q=\emptyset; p\ne q$. Samples $\mathbf{x}_l=[x_{l1},x_{l2},\dots,x_{ln}]; l=1,2,\dots,N$ are generated by randomly drawing $M_k$ ($\sum_{k=1}^M M_k=N$) samples within each stratum $k$ according to:
\begin{linenomath}
\begin{equation}
\label{eqn:2}
x_{li}^{(k)}=D_{X_{i}}^{-1}(U_{ik}); i=1,2,\dots,n
\end{equation} 
\end{linenomath}
where $U_{ik}$ are iid uniformly distributed samples on $[\xi_{ik}^{lo},\xi_{ik}^{hi}]$ with $\xi_{ik}^{lo}=D_{X_{i}}(\zeta_{ik}^{lo})$ and $\xi_{ik}^{hi}=D_{X_{i}}(\zeta_{ik}^{hi})$ and $\zeta_{ik}^{lo}$ and $\zeta_{ik}^{hi}$ denote the lower and upper bounds respectively of the $i^{th}$ vector component of stratum $\mathbf{\Omega}_k$. For our purposes, stratification is performed directly in the probability space meaning that the strata are defined by prescribing the bounds $\xi_{ik}^{lo}$ and $\xi_{ik}^{hi}$ on the n-dimensional unit hypercube. 

Beyond being disjoint ($\mathbf{\Omega}_p\cap\mathbf{\Omega}_q=\emptyset; p\ne q$) and filling the space ($\cup_{k=1}^M\mathbf{\Omega_k}=\mathbf{\mathcal{S}}$), there are no restrictions on the strata definitions. Strata can be defined with different sizes and shapes in general and can even be defined a posteriori based on an existing sample set (so-called post stratification) \citep{Tocher_63}. The size of the strata in the probability space, $p_k$, is equal to its probability of occurrence $p_k=P[\mathbf{\Omega}_k]$. With these considerations in mind, we define three general classes of stratified designs.
\begin{enumerate}
\item \emph{Symmetrically Balanced Stratified Design} (SBSD): A stratified design is said to be symmetrically balanced if all strata possess equal probability $p_i=p_j;\hspace{3pt} \forall i,j$ and the probability space is divided equally in all variables of the design. In an SBSD, each stratum is an n-dimensional hypercube of equal size.
\item \emph{Asymmetrically Balanced Stratified Design} (ABSD): A stratified design is said to be asymmetrically balanced if all strata possess equal probability $p_i=p_j;\hspace{3pt} \forall i,j$ but the strata limits are assigned arbitrarily. In the simplest case of a ABSD, each stratum is an n-dimensional orthotope (or hyperrectangle). However, general ABSDs may possess polyhedral or other arbitrarily shaped strata.
\item \emph{Unbalanced Stratified Design} (UBSD): A stratified design is said to be unbalanced if all strata do not possess equal probability (i.e. $\exists\hspace{3pt} p_i,p_j:p_i\ne p_j$).
\end{enumerate}
We refer to these terms throughout the paper and samples drawn from these designs are referred to as Symmetrically Balanced Stratified Samples (SBSS), Asymmetrically Balanced Stratified Samples (ABSS), and Unbalanced Stratified Samples (UBSS) respectively.

%%%%%%%%%%%%%%%%%%%%%%%%%%%%%%%%%%%%%%%%%%%%%%%%%%%%%%
% 										SECTION 2.3
%%%%%%%%%%%%%%%%%%%%%%%%%%%%%%%%%%%%%%%%%%%%%%%%%%%%%%

\subsection{Latin Hypercube Sampling}
Latin Hypercube Sampling (LHS) divides the range of each vector component $\mathbf{X}_i; i=1,2,\dots,n$ into $M$ disjoint subsets (strata) of \emph{equal probability} $\mathbf{\Omega}_{ik};  i=1,2,\dots,n; k=1,2,\dots,M$. Samples of each vector component are drawn from the respective strata according to:
\begin{linenomath}
\begin{equation}
\label{eqn:5}
x_{li}^{(k)}=D_{X_{i}}^{-1}(U_{ik}); i=1,2,\dots,n;  k=1,2,\dots,M
\end{equation} 
\end{linenomath}
where $U_{ik}$ are iid uniformly distributed samples on $[\xi_{k}^{lo},\xi_{k}^{hi}]$ with $\xi_{k}^{lo}=\dfrac{k-1}{M}$ and $\xi_{k}^{hi}=\dfrac{k}{M}$. The samples $\mathbf{x}_l=[x_{l1},x_{l2},\dots,x_{ln}]; l=1,2,\dots,N$ are assembled by (uniformly) randomly grouping the terms of the generated vector components. That is, a term $x_{li}$ is generated by randomly selecting from the generated components $x_{li}^{(k)}$ (without replacement) and these terms are grouped to produce a sample. This process is repeated $M$ times. 

Because the component samples are randomly paired, an LHS is not unique; there are $(M!)^{n-1}$ possible combinations. Improved LHS algorithms have been developed to determine optimal pairings that either enhance space-filling or reduce spurious correlation (increasing orthogonality). To improve space-filling, several Latin hypercube design methods have been developed that minimize the $L_2$-discrepancy \citep{Fang_et_al_MoC_02}, maximize the minimum distance between points (`maximin' designs) \citep{Johnson_et_al_JSPI_90,Morris_Mitchell_JSPI_95,Ye_et_al_JSPI_2000,Joseph_Hung_SS_08}, optimize projection properties to ensure samples are evenly spread when projected onto a known subspace \citep{Liefvendahl_Stocki_JSPI_06}, and minimize integrated mean square error and maximize entropy \citep{Park_JSPI_94}, among others. Similarly, numerous efforts have been made to reduce spurious correlations beginning with Iman and Connover \citep{Iman_Conover_CSSC_82} and later Florian \citep{Florian_PEM_92} who rearrange the matrix of samples based on a transformation of the rank number matrix. Huntington and Lyrintzis \citep{Huntington_Lyrintzis_PEM_98} and Vorechovsky and Novak \citep{Vorechovsky_Novak_PEM_09} utilized iterative optimization methods to reduce spurious correlations while others use orthogonal arrays \citep{Tang_JASA_93}. Further, several authors have developed methods for constructing orthogonal Latin hypercubes that additionally possess enhanced space-filling properties \citep{Ye_JASA_98,Cioppa_Lucas_Tech_07}. Many of these designs are complex, laborious to implement, and/or computationally intensive. Meanwhile, there is no consensus on whether space-filling or minimal spurious correlations are preferable but certainly the two properties are linked and there may be a trade-off in achieving either. 

%For comparison with stratified sampling, we will consider two different LHS methods - one that iterates to identify a maximin design and one that reduces spurious correlations using the method proposed by Iman and Conover \citep{Iman_Conover_CSSC_82} since these are among the most widely employed in practice.

%%%%%%%%%%%%%%%%%%%%%%%%%%%%%%%%%%%%%%%%%%%%%%%%%%%%%%
% 										SECTION 2.4
%%%%%%%%%%%%%%%%%%%%%%%%%%%%%%%%%%%%%%%%%%%%%%%%%%%%%%

\subsection{Statistical evaluation \& variance reduction}
\label{sec:stat_eval}
For the purposes of evaluating statistical properties of the different sampling methods, consider the general statistical estimator defined by:
\begin{linenomath}
\begin{equation}
T(y_1, \dots, y_N) = \sum\limits_{l=1}^Nw_lg(y_l)
\end{equation}
\end{linenomath}
where $y_l=F(\mathbf{x}_l)$ and $\mathbf{x}_l$ denotes a sample generated according to SRS, LHS, or SS, $w_l$ are weights attributed to each sample ($w_l=\frac{1}{N}$ for SRS and LHS but may differ for SS - see below), and $g(\cdot)$ is an arbitrary function. If $g(y)=y^r$, then $T$ represents an estimate of the $r^{th}$ moment while $g(y)=\mathbf{1}\{y\le Y\}$, where $\mathbf{1}\{\cdot\}$ denotes the indicator function, specifies estimation of the empirical CDF. In keeping with past notational conventions, we denote $T_R$, $T_L$, and $T_S$ as the statistical estimates produced from SRS, LHS, and SS respectively. 

For SRS, it is well known that the sample variance of the statistical estimator is given by:
\begin{linenomath}
\begin{equation}
\text{Var}\left[T_R\right]=\dfrac{\sigma^2}{N}
\end{equation}
\end{linenomath}
where $\sigma^2$ denotes the variance of $g(Y)$. This estimate serves as a benchmark for comparison with the considered variance reduction techniques.

Stratifed sampling does not necessitate an equal probability of occurrence for each sample. Consequently, in order to statistically evaluate the response quantity $Y$ generated using stratified samples, probabilistic weights are assigned to each sample according to the probability of occurrence of the stratum being sampled. That is, for a sample $\mathbf{x}_l\in\mathbf{\Omega}_k$, the sample weight is defined as:
\begin{linenomath}
\begin{equation}
w_l=\dfrac{P[\mathbf{\Omega}_k]}{M_k}=\dfrac{p_k}{M_k}
\label{eqn:3}
\end{equation}
\end{linenomath}
where $M_k$ is the total number of samples in $\mathbf{\Omega}_k$ subject to $\sum_{k=1}^M M_k=N$ and $\sum_{k=1}^M{P[\mathbf{\Omega}_k]}=1$. 
%In practice, the statistical moments are estimated by replicating samples to construct a larger sample set with each sample possessing equal weight and then using classical unbiased statistical estimators. Similarly, the empirical CDF is computed using the sample weights as:
%\begin{equation}
%\label{eqn:4}
%\hat{D}_Y(y) = \sum_{l=1}^{N}w_l\mathbf{1}\{y_l\le y\}
%\end{equation} 
%where $\mathbf{1}\{\cdot\}$ is, again, the indicator function. 
Stratified sampling has been proven to unconditionally reduce the variance of statistical estimators when compared to SRS. 
%Tocher \citep{Tocher_63} has shown that:
%\begin{equation}
%\text{Var}\left[T_S\right]=\sum\limits_{k=1}^M\dfrac{p_k^2}{M_k}\sigma_k^2
%\end{equation}
%where again $M$ denotes the total number of strata, $M_k$ denotes the number of samples in stratum $k$, and $\sigma_k^2$ denotes the sampling variance of $Y$ from stratum $k$. Further, 
McKay et al.\citep{McKay_et_al_Tech_79} have shown that, for a balanced stratified design (SBSD or ABSD):
\begin{linenomath}
\begin{equation}
\text{Var}[T_S]=\text{Var}[T_R]-\dfrac{1}{N}\sum_{k=1}^Mp_k(\mu_k-\tau)^2
\label{eqn:8}
\end{equation}
\end{linenomath}
where $\mu_k$ is the mean value of the response evaluated over stratum $k$, and $\tau$ is the overall response mean. 
%Thus, the variance of the estimate is reduced by the weighted sample variance of the strata means. 
%The variance reduction resulting from a UBSD will be considered in Section 5.2.

Similarly, McKay et al.\ \citep{McKay_et_al_Tech_79} have shown that the variance of the statistical estimator produced from a Latin hypercube design ($T_L$) is given by :
\begin{linenomath}
\begin{equation}
\text{Var}[T_L]=\text{Var}[T_R]+\dfrac{N-1}{N}\dfrac{1}{N^n(N-1)^n}\sum_R(\mu_p-\tau)(\mu_q-\tau)
\label{eqn:9}
\end{equation}
\end{linenomath}
where $\tau$ is defined as before, $\mu_p$ is the mean value of LHS cell $p$ defined by the bounds of the strata on each of the $n$ marginal distributions for sample $l$, and $R$ denotes the summation over the restricted space of $N^n(N-1)^n$ pairs $(\mu_p,\mu_q)$ of cells having no cell coordinates in common. Notice that LHS does not unconditionally reduce the variance of the estimate although it has been shown that the variance is reduced for any function $F(\cdot)$ having finite second moment when $N>>n$ \citep{Stein_Tech_87}. Stein \citep{Stein_Tech_87} has also shown that the closer $F(\mathbf{X})$ is to additive ($F(\mathbf{X})=\sum\limits_{i=1}^nF_i(X_i)$), the more the variance of $T_L$ is reduced. 
%In other words, LHS filters out the effects of the additive components of $F(\mathbf{X})$.

To the authors' knowledge, no comprehensive study exists which compares the variance reductions from Eqns.\ \eqref{eqn:8} and \eqref{eqn:9}. Although such a study is beyond the scope of this paper, it is immediately clear that SS and LHS reduce variance through different statistical mechanisms. Thus, different classes of problems exist for which each method will afford superior variance reduction. LHS, for example, has been demonstrated to perform well for applications where $F(\mathbf{X})$ has strong additive components while SS will be shown to perform well for applications with strong interactions among the variables of $\mathbf{X}$.

%%%%%%%%%%%%%%%%%%%%%%%%%%%%%%%%%%%%%%%%%%%%%%%%%%%%%%
% 
% 										SECTION 3
%
%%%%%%%%%%%%%%%%%%%%%%%%%%%%%%%%%%%%%%%%%%%%%%%%%%%%%%

\section{Space-filling, orthogonality, and projective properties of sample designs}
\label{Sec:comp}

In this section, we study and compare the space-filling, orthogonality, and projection properties of the designs considered in the previous section. It is demonstrated that, under certain conditions, SS has merits that, to date, have not been exploited and can lead to meaningful improvements in sample design.

%To compare these properties of the various sampling methods, consider the general two-dimensional stochastic system given by Eq.\ \eqref{eqn:general_system} with $\mathbf{X}=[X_1,X_2]$ for demonstration purposes - although generalization to n-dimensions is discussed. For simplicity and without loss of generality, the study will be performed in the probability space confined to the 2-dimensional hypercube on $[0,1]^2$.

%%%%%%%%%%%%%%%%%%%%%%%%%%%%%%%%%%%%%%%%%%%%%%%%%%%%%%
% 										SECTION 3.1
%%%%%%%%%%%%%%%%%%%%%%%%%%%%%%%%%%%%%%%%%%%%%%%%%%%%%%

\subsection{Space-Filling}
\label{sec:space-filling}

Numerous metrics have been developed to quantify the space-filling of a sample design including various discrepancy measures and the maximin and minimax distances. Each metric provides a slightly different interpretation of space-filling and can therefore yield apparently inconsistent results (one design may be ``good" by one measure and ``poor" by another). It is not our intention to discuss the relative merits of these measures in general but the interested reader is referred to the works of Fang et al.\ \cite{Fang_et_al_06} and Dalbey and Karystinos \cite{Dalbey_Karystinos_AIAA_10} for further discussion. Instead, we aim to compare SRS, LHS, and SS using a couple of metrics to gain some insight into how these methods fill the space. For this purpose, we will utilize the commonly employed wrap-around $L_2$-discrepancy \cite{Hickernell_98} and a new method derived from the Voronoi decomposition \citep{Aurenhammer_ACS_91}.

The Voronoi decomposition is defined such that each point in space is associated with the nearest sample point. That is, each Voronoi cell is defined as the region of the space $\mathbf{\mathcal{P}}$, denoted $R_k$ and associated with point $P_k$, containing all points in $\mathbf{\mathcal{P}}$ whose distance to $P_k$ is less than or equal to the distance, $d$, to any other point $P_j, j\ne k$. That is:
\begin{linenomath}
\begin{equation}
R_k=\left\{\mathbf{x}\in\mathbf{X}|d(\mathbf{x},P_k)\le d(\mathbf{x},P_k)\forall j\ne k\right\}
\label{eqn:Voronoi_Cell}
\end{equation}
\end{linenomath}
where, for the purposes of this work $d(\mathbf{x},\mathbf{y})$ is the Euclidean distance. Dividing the space in this way facilitates evaluation of different sample designs because the cell sizes provide a direct measure of space-filling. In particular, large Voronoi cells correspond to regions of the space that are sparsely sampled while small cells correspond to regions of the space that are densely sampled. By this measure, an optimal space-filling design produces cells whose sizes have minimal ensemble variance. In low dimension, the Voronoi decomposition can be established explicitly. In high dimension, however, the Voronoi decomposition must be estimated implicitly. To do so, the space is populated with a large number of points ($N_p\sim10^5-10^8$) and the samples are each attributed to their nearest point in the sample set. The size (volume) of each Voronoi cell is therefore determined by the proportion of points attributed to each sample, $N_i$:
\begin{linenomath}
\begin{equation}
V_i=\dfrac{N_i}{N_p}.
\end{equation}
\end{linenomath}

With this general procedure for estimating cell sizes, we propose that the coefficient of variation of the cell sizes ($\dfrac{\sigma_{V}}{\mu_V}$), referred to as the ``V-metric" herein, provides an effective metric of space-filling that can be applied regardless of sample size or dimension. It is straightforward to interpret (low value means that the points occupy approximately the same amount of space) and is visually clear when observed in 2D. Consider 100 samples drawn using SRS, LHS, and SBSS. A plot of the Voronoi decomposition (Figure \ref{fig:voronoi_comp}) shows that the SBSS produces cells that are more uniform in size than LHS and SRS. The inset statistics verify that, indeed, SBSS produces cells whose size variance is significantly smaller than both LHS and SRS. These relations are valid regardless of sample size as demonstrated in Figure \ref{fig:voronoi_convergence}, which shows the V-metric along with maximum and minimum observed cell sizes (from all sample sets) relative to sample size as evaluated by Monte Carlo simulation averaging over 10 independent trials. Note also that, by this metric, the space-filling of LHS degrades while the SBSS and ABSS maintain their space-filling properties as the sample size grows. 

%To begin, consider 100 samples drawn using SRS, LHS, and SS on the 2D probability space $[0,1]^2$. Dividing the space according to the Voronoi decomposition \citep{Aurenhammer_ACS_91} (Figure \ref{fig:voronoi_comp}) - where each location in the space is associated with the nearest sample point - facilitates evaluation of the space-filling properties of the different sample designs in low-dimension. 
%A Voronoi decomposition defines a division of the space into disjoint regions $R_l$ that completely fills the space. That is:
%\begin{equation}
%\cup_{l=1}^N R_l=\mathbf{\mathcal{P}}; \hspace{6pt}
%\cap_{l=1}^N R_l=\emptyset
%\label{eqn:Voronoi_space}
%\end{equation}
%Since each cell is attributed to a sample, the cell sizes provide a direct measure of space-filling. In particular, large Voronoi cells correspond to regions of the space that are sparsely sampled while small cells correspond to regions of the space that are densely sampled. By this measure, an optimal space-filling design produces cells whose sizes have minimal ensemble variance.
%Figure \ref{fig:voronoi_comp} displays a Voronoi tessellation of the probability space from 100 samples for a two-dimensional system produced using SRS, LHS, and SBSS designs with the associated maximum cell size, minimum cell size, and cell size standard deviation noted. 
\begin{figure}
\includegraphics[width=1.0\columnwidth]{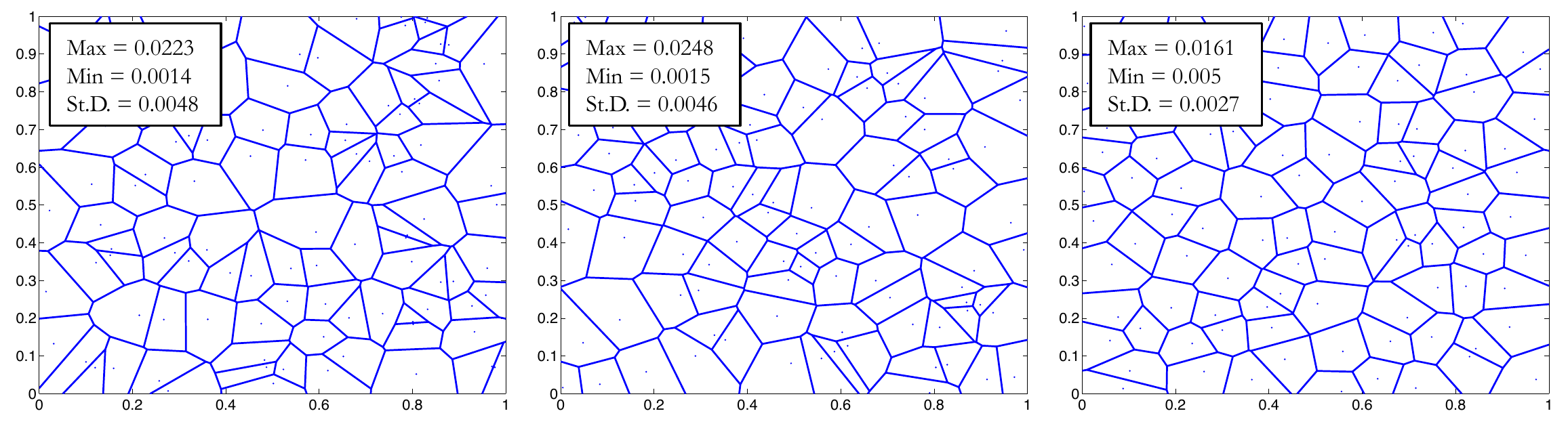}
\caption{Voronoi tessellation of the 2D probability space from 100 samples drawn using SRS (left), LHS (middle), and SBSS (right).}
\label{fig:voronoi_comp}
\end{figure}
\begin{figure}
\subfigure[\label{fig:3a}]{
\centering
\includegraphics[width=0.5\columnwidth]{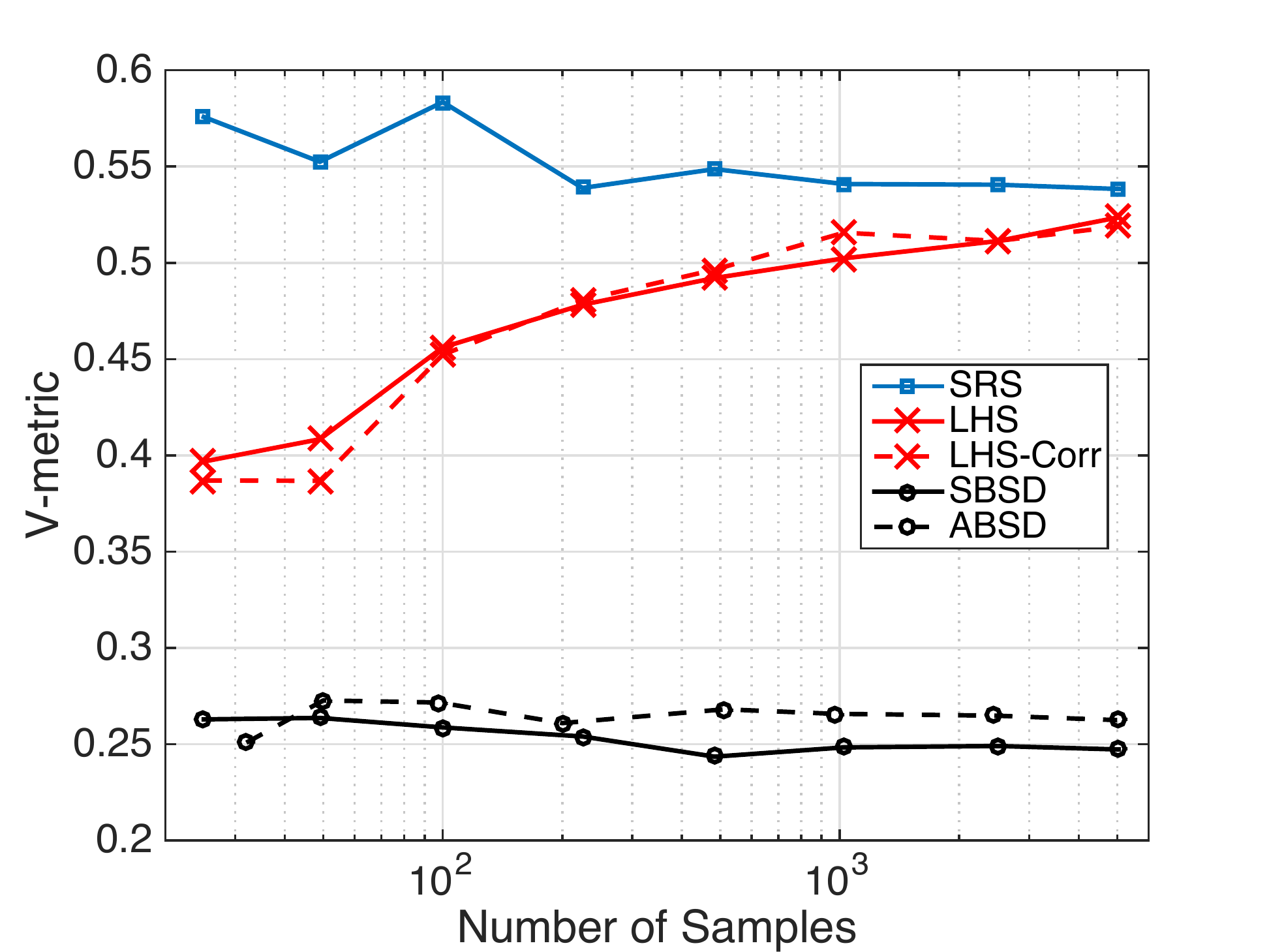}
}
\subfigure[\label{fig:3b}]{
\centering
\includegraphics[width=0.5\columnwidth]{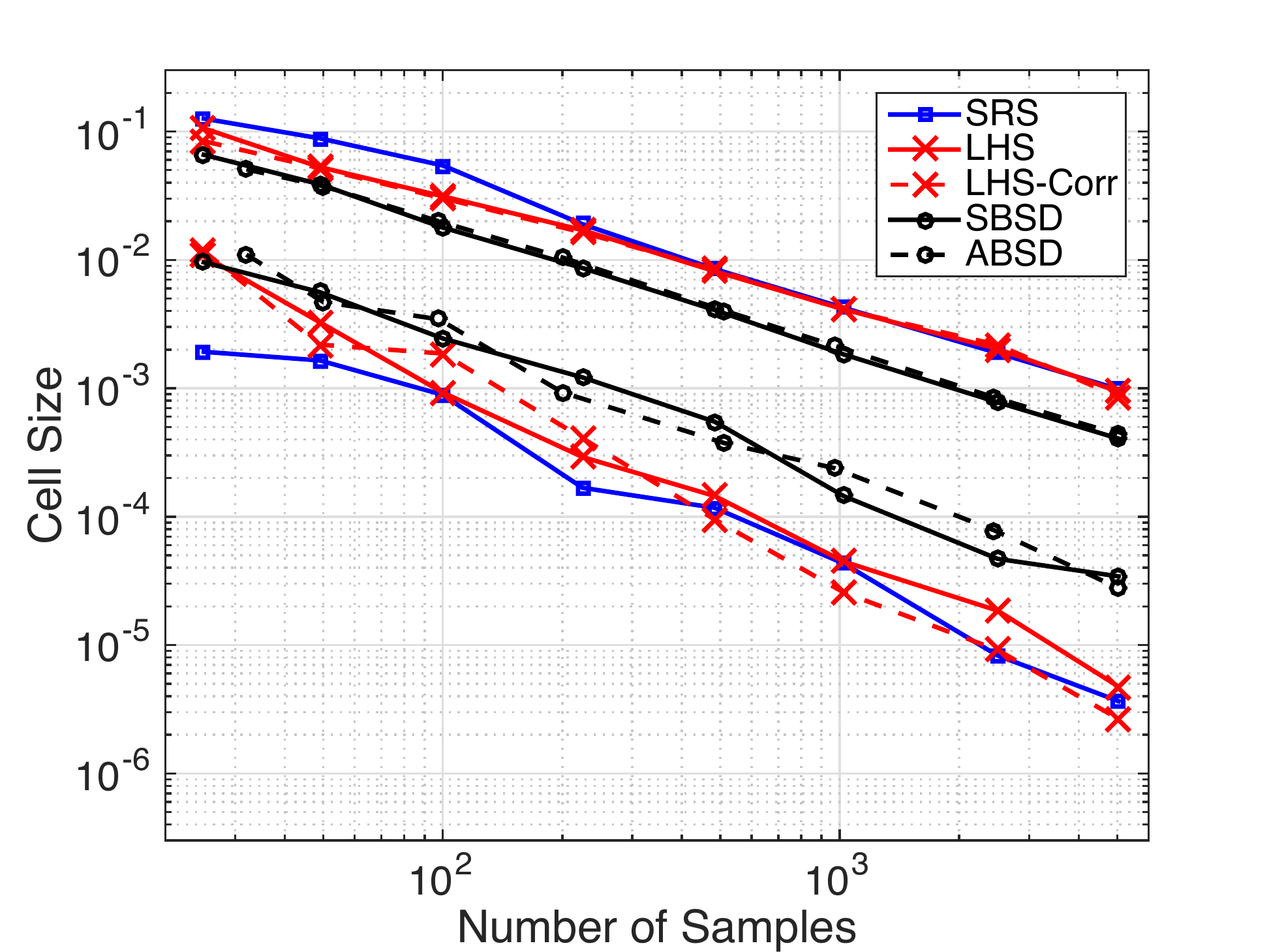}
}
\caption{(a) Voronoi cell size coefficient of variation (V-metric) and (b) maximum/minimum cell size for different sample designs as a function of sample size for 2D samples.}
\label{fig:voronoi_convergence}
\end{figure}
%This figure clearly demonstrates that balanced stratified designs (symmetric or asymmetric) produce significantly better space-filling samples than either SRS or LHS. Notice that the standard deviation of cell size for the stratified designs are nearly equivalent (with a very modest improvement from using a symmetric balanced design) and both are considerably lower than LHS and SRS. Notice also that LHS possesses benefits over SRS for small sample sizes but converges to the same poor space-filling properties for large samples. The maximum and minimum cell sizes shown in Figure \ref{fig:voronoi_convergence} (right) show that balanced stratified designs (both symmetrically and asymmetrically balanced) result in a narrow band of cell sizes. Meanwhile, those produced with SRS and LHS produce designs that result in some very large and some very small cells - indicative of sparsity and clustering in the samples respectively.

To study the dependence on dimension of the space-filling properties for these sample designs, we compute the average V-metric for each design from 1,024 samples (repeated 10 times) of different dimension. Figure \ref{fig:5a} shows that, by this metric, SS provides a more even distribution of samples than either LHS or SRS for low-dimensional random vectors while all sampling methods lose effectiveness for higher dimension. Notice that, for LHS, the use of the procedure by Iman and Conover \citep{Iman_Conover_CSSC_82} (denoted LHS-corr) to reduce spurious correlation has only a small effect on space-filling although, as we will see in the following section, it significantly improves the orthogonality of the sample set.

The second metric we consider is the wrap-around $L_2$-discrepancy ($D_{L_2}$) \cite{Hickernell_98} defined through the standard discrepancy measure:
\begin{linenomath}
\begin{equation}
D(\mathbf{X})=\left\| \dfrac{\mathbf{X}\cap c^M}{N}-\text{Vol}(c^M)\right\|
\end{equation}
\end{linenomath}
with $c^M\subset[0,1]^M$ such that $\|\cdot\|$ denotes an $L_2$ norm and the subset $c^M$ includes all hyperrectangles that can exist in a periodic domain $[0,1]^M$. Figure \ref{fig:5b}, which plots $D_{L_2}$ for different sample designs and dimensions, is in contrast with the Voronoi metric in that LHS is shown to provide better space-filling properties. This is due, as highlighted by Dalbey and Karystinos \cite{Dalbey_Karystinos_AIAA_10}, to the fact that $D_{L_2}$ is sensitive to differences in low-dimensional subspaces. Since LHS discretizes the 1D subspaces very evenly, $D_{L_2}$ indicates a high degree of space-filling. Meanwhile the V-metric quantifies the uniformity over the whole $M$-dimensional space.

\begin{figure}
\subfigure[\label{fig:5a}]{
\centering
\includegraphics[width=0.5\columnwidth]{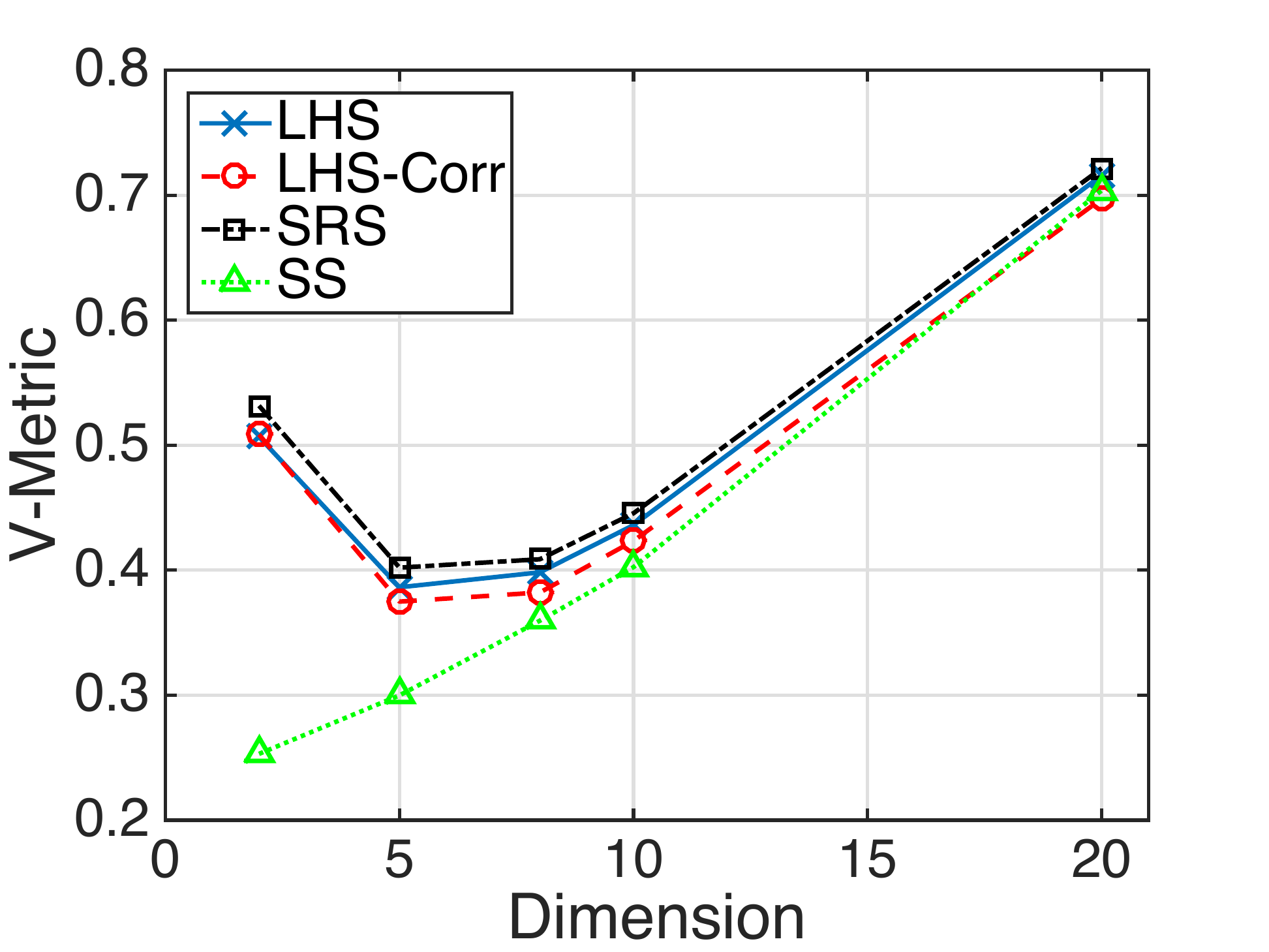}
}
\subfigure[\label{fig:5b}]{
\centering
\includegraphics[width=0.5\columnwidth]{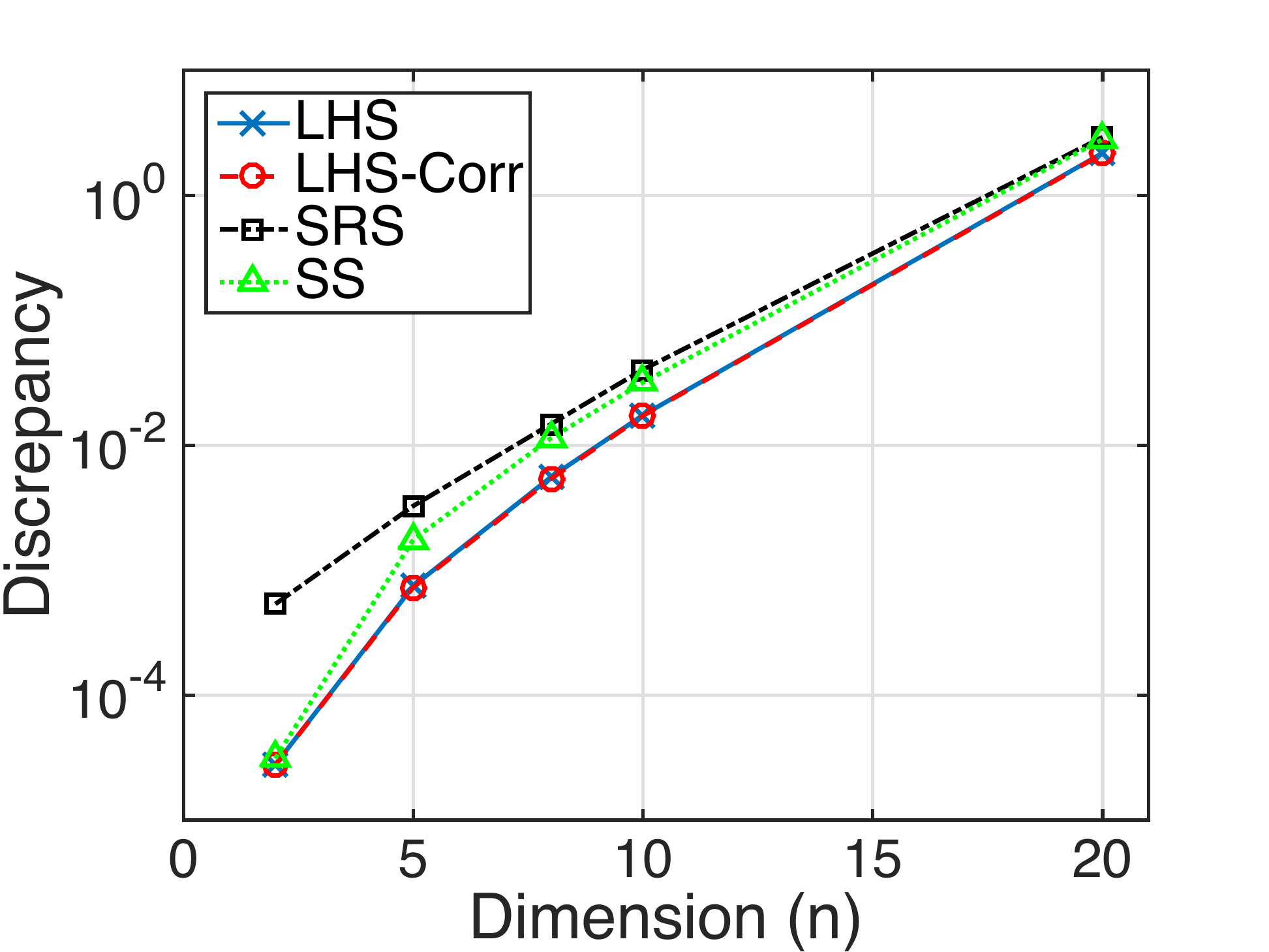}
}
\caption{Space-filling properties of LHS, LHS with correlation correction (LHS-Corr), SRS, and SS as measured using (a) the Voronoi cell metric (`V-metric') and (b) the wrap-around $L_2$ discrepancy. Note that for $n=2,5,10$ SS designs are SBSD while for $n=8,20$ SS designs are ABSD.}
\label{fig:space_filling_v_2}
\end{figure}

The comparison above does not conclude therefore that one sampling method is ``better" than the other at space-filling. Instead, it serves to highlight that LHS and SS fill the space in different ways. Owing to the hierarchical ordering principle \cite{Wu_Hamada_00}, which states that main effects and low-order interactions are usually more important than higher-order effects, the first-order space-filling of LHS is desirable for many problems. However, when interaction effects play an important role, the higher-order space-filling of SS may be desirable.

\subsection{Orthogonality}
An orthogonal sample is one whose sample correlation matrix is diagonal. That is, the random samples are perfectly uncorrelated. In practice, an orthogonal (or near orthogonal) design can be difficult to achieve. Latin hypercube designs often produce spurious correlations unless corrected or iterated in some way (e.g. \citep{Iman_Conover_CSSC_82, Florian_PEM_92, Huntington_Lyrintzis_PEM_98, Vorechovsky_Novak_PEM_09}). Stratified samples however, intuitively possess little artificial correlation to begin with. This is demonstrated in the following where we consider two measures of orthogonality similar to those considered by Cioppa and Lucas \citep{Cioppa_Lucas_Tech_07}. 

%The orthogonality of a sample design is quite challenging to explore analytically. For this reason, Monte Carlo simulations are again performed to evaluate the correlations produced by various designs as a function of sample size. Specifically, 
First, we evaluate the spurious correlations among the sample variates for a two dimensional sample produced using the various designs and consider their variations from the intended zero correlation 
%(specifically standard deviation and maximum correlation) 
%over the 1,000 
using Monte Carlo simulation and averaging over 10 independent trials. Figure \ref{fig:6a} and \ref{fig:6a} show the standard deviation and maximum spurious correlation respectively for five different sample designs as a function of sample size. 
\begin{figure}[!ht]
\centering
\subfigure[\label{fig:6a}]{
\centering
\includegraphics[width=0.3\columnwidth]{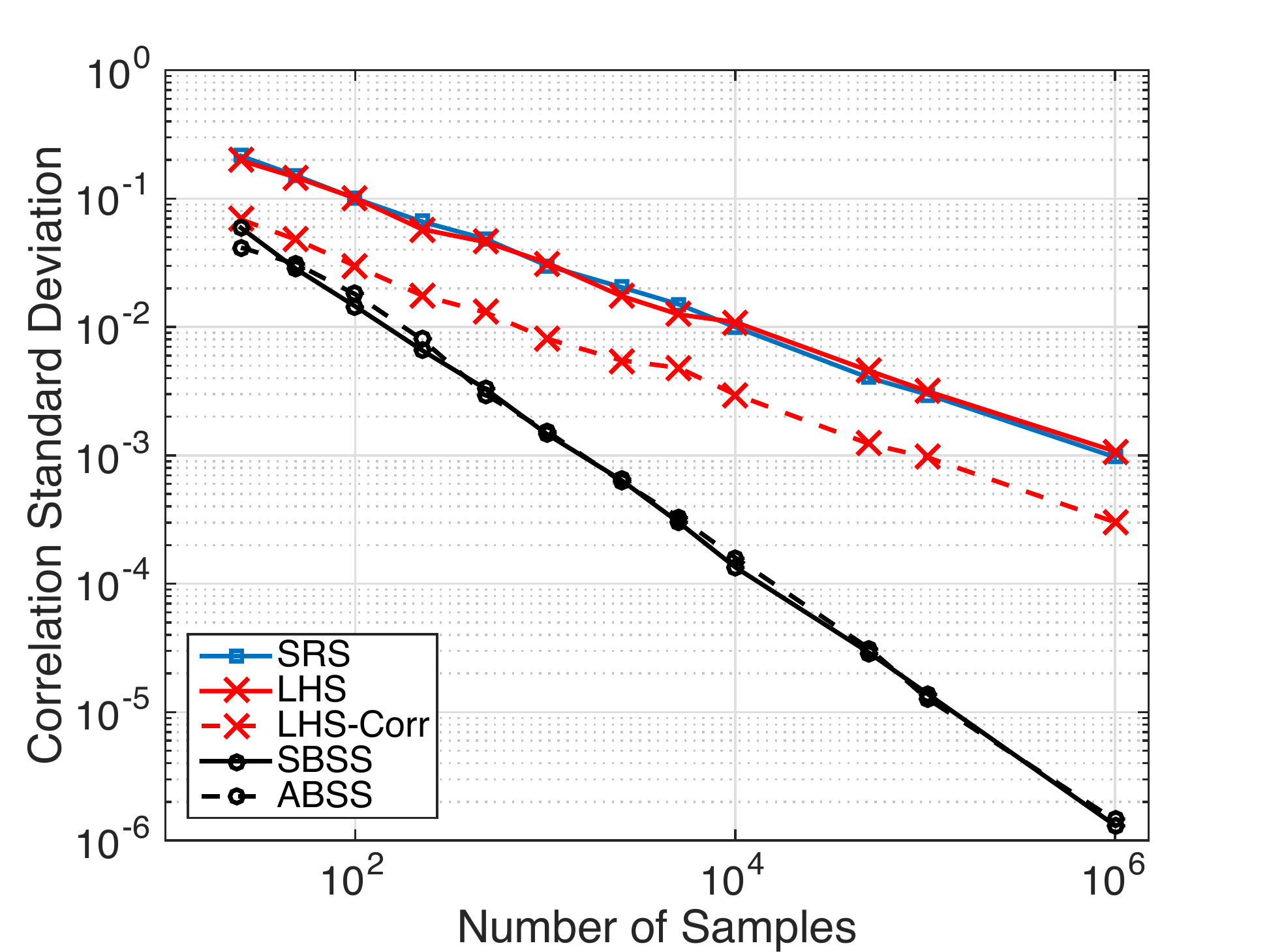}
}
\subfigure[\label{fig:6b}]{
\centering
\includegraphics[width=0.3\columnwidth]{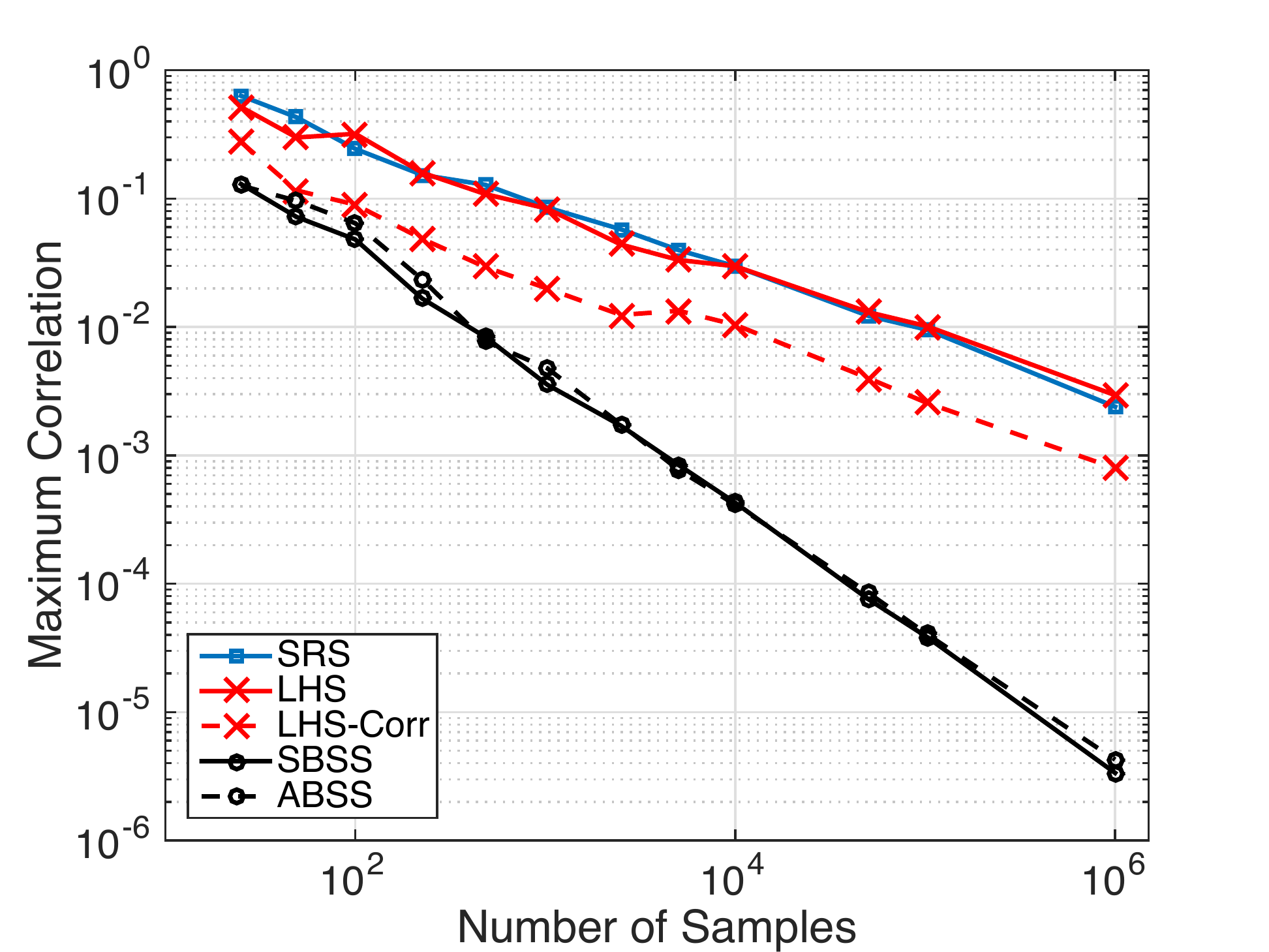}
}
\subfigure[\label{fig:6c}]{
\centering
\includegraphics[width=0.3\columnwidth]{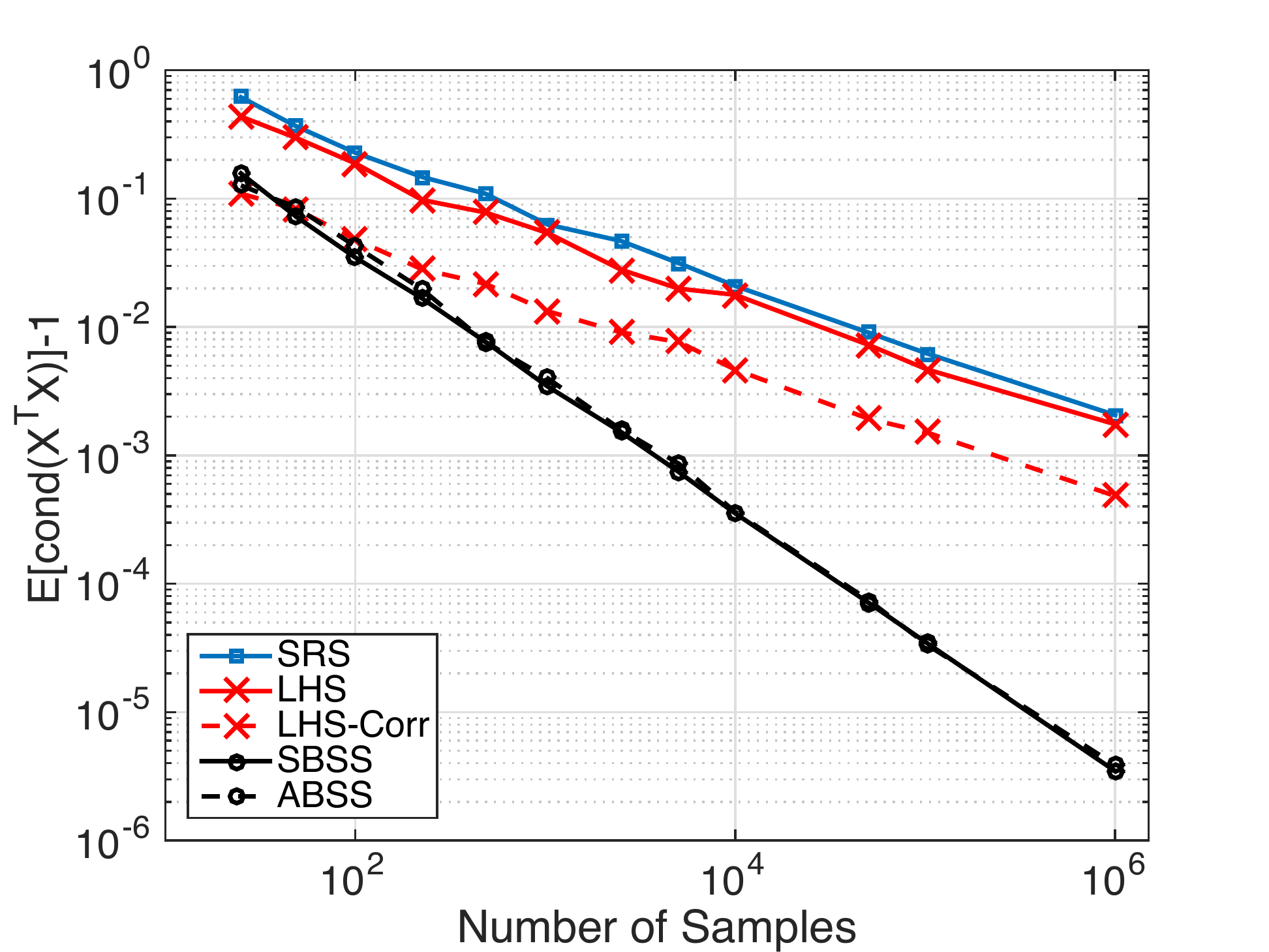}
}
\caption{Evaluation of spurious correlation produced by various 2D sample designs as a function of sample size: (a) standard deviation of spurious correlation, (b) maximum spurious correlation, and (c) condition number.}
\label{fig:spurious_corr} 
\end{figure}
Notice that the stratified samples (both symmetric [SBSS] and asymmetric with 2:1 aspect ratio [ABSS]) produce smaller spurious correlations than even the LHS with explicit correlation control (LHS-Corr) while LHS without correlation control produces relatively large spurious correlations. Moreover, the relative improvement of a stratified design over LHS and LHS-Corr increases with sample size.

Next, we evaluate the condition number of $\mathbf{X}^T\mathbf{X}$, where $\mathbf{X}$ is the $N\times n$ design matrix with values in the probability space scaled to the range $[-1,1]$, defined by:
\begin{linenomath}
\begin{equation}
\text{cond}\left(\mathbf{X}^T\mathbf{X}\right)=\dfrac{\phi_1}{\phi_n}
\end{equation}
\end{linenomath}
where $\phi_1$ and $\phi_n$ are the $1^{st}$ and $n^{th}$ eigenvalues of $\mathbf{X}^T\mathbf{X}$ respectively. The condition number is a more general metric of orthogonality than any single correlation value with a value of $1$ indicating a perfectly orthogonal sample. It is therefore useful for higher dimensional samples.
%Let us first consider a two-dimensional sample $\mathbf{X}=[X_1,X_2]$. 
%The LHS design with correlation correction produces samples with reduced spurious correlation when compared to SRS. However, the SS designs (both symmetrically and asymmetrically balanced) perform very well on their own - i.e. no need for correction - producing considerably smaller variation from the mean zero correlation (left) and maximum correlations that are much lower than both SRS and LHS (right). 
We consider first the condition number for the 2D samples above as a function of sample size (Figure \ref{fig:6c}) and observe the same general trend. However, as the dimension grows, stratified designs are less effective at achieving orthogonality as demonstrated in Figure \ref{fig:cond_v_D}, which shows the mean condition number from 10 independent trials for samples of size $N=256$ (Figure \ref{fig:7a}) and $N=1024$ (Figure \ref{fig:7b}) as a function of dimension for the different sampling methods. 
\begin{figure}[!ht]
\centering
\subfigure[\label{fig:7a}]{
\centering
\includegraphics[width=0.48\columnwidth]{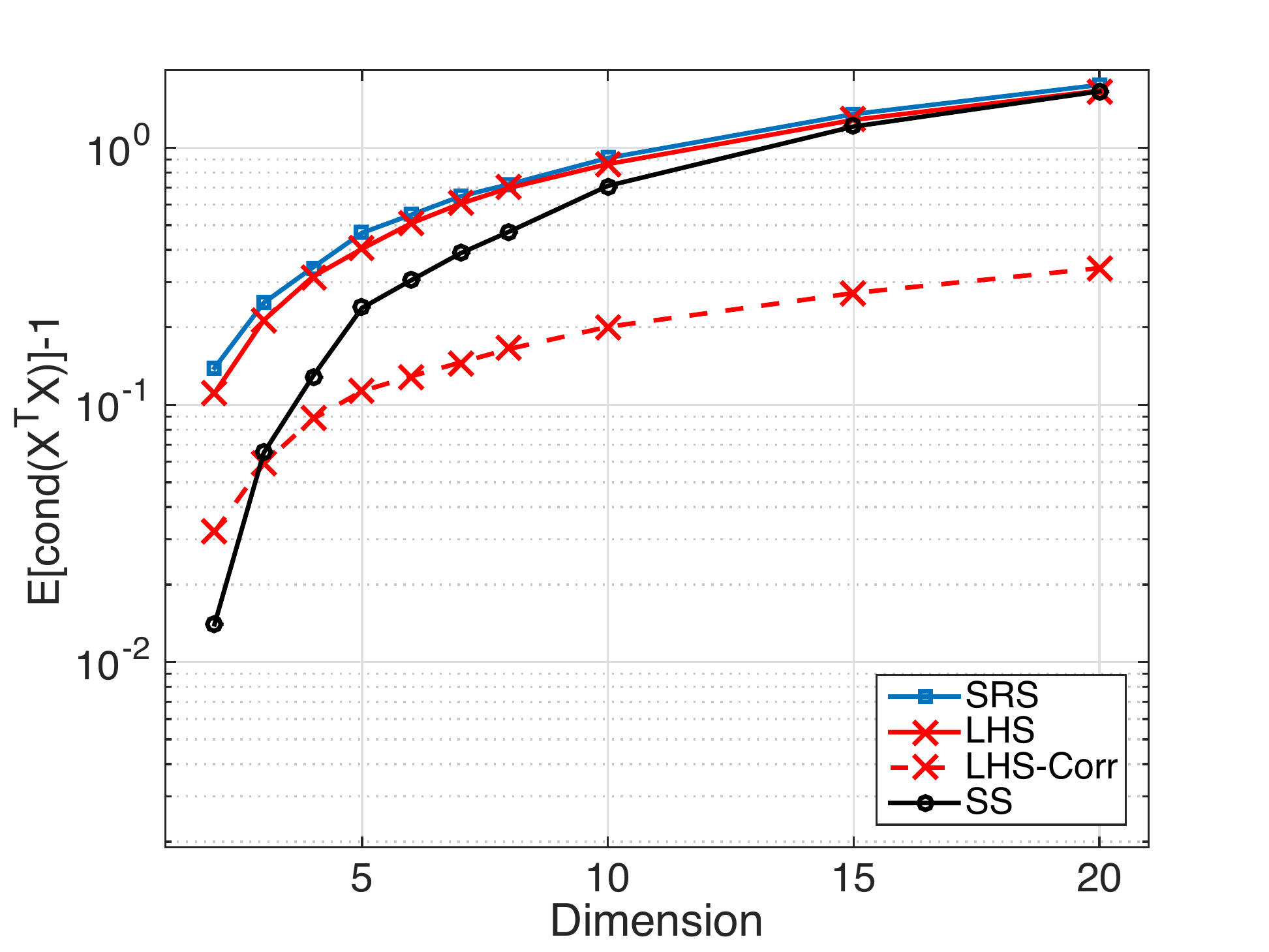}
}
\subfigure[\label{fig:7b}]{
\centering
\includegraphics[width=0.48\columnwidth]{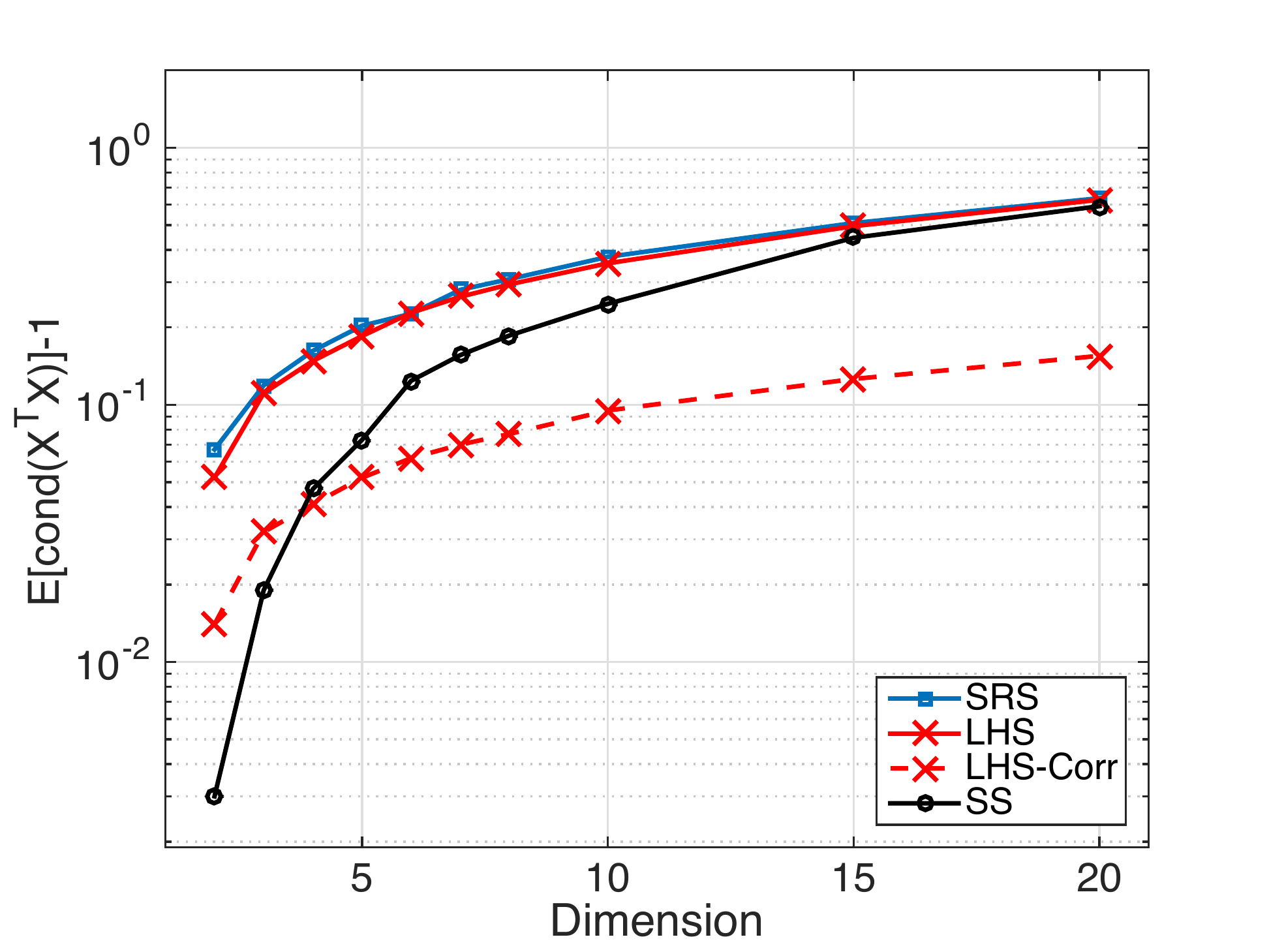}
}
\caption{Condition number of the design matrix as a function of dimension for the different sampling methods from (a) $N=256$ samples and (b) $N=1024$ samples.}
\label{fig:cond_v_D} 
\end{figure}
As the dimension grows, LHS methods with active correlation reduction significantly improves the orthogonality of the sample while SS produces sample sets with reduced spurious correlations up to moderate-dimensional samples without the need for any sample adjustment.
%Furthermore, the condition number - plotted in Figure \ref{fig:condition_2D} as a function of sample size - demonstrates that even small stratified samples ($<100$ samples) produce designs that are very close to orthogonal ($\text{cond}\left(\mathbf{X}^T\mathbf{X}\right)<1.1$) for low dimensional problems.
%\begin{figure}
%\centering
%\includegraphics[width=0.5\columnwidth]{condition_2D.pdf}
%\caption{Spurious correlation produced by various sample designs as a function of sample size: condition number of $\mathbf{X}^T\mathbf{X}$ for a 2D problem.}
%\label{fig:condition_2D} 
%\end{figure}
%
%Stratified designs are less effective at producing orthogonal designs as the dimension grows. The reason for this is clear: As the dimension grows, fewer stratifications are produced in each component thus increasing the stratum size in each variate and the possibility for spurious correlations to arise. For a samples size of 1,024, a two dimensional vector sampled with an SBSD will divide each component into 512 strata while a ten-dimensional vector sampled in the same way will divide each component in only 2 strata. Nonetheless, the SBSD produces reduced correlation when compared to maximin LHS designs and SRS designs. However, LHS with correlation correction will produce samples that are significantly closer to orthogonal than a stratified design in high dimension. But, this correlation correction comes at the cost of further reduced space-filling properties in the Latin hypercube design when compared to those discussed in Section \ref{sec:space-filling}.
%

\subsection{Projective properties and variable interactions}

Often, in an uncertainty analysis, certain variables have a strong effect on the response while others are relatively insignificant. Global, or variance-based, sensitivity analysis is one means of assessing the significance of each variable (and interactions) \cite{Saltelli_et_al_08}. But, this significance can usually only be established {\it{a posteriori}} (i.e.\ after concluding the Monte Carlo study). Consequently, it necessary that a sampling method project all variables through the transformation effectively in order to capture the effects of the significant variables (even when they are not known). This property of a sample design is referred to as the projective property and it is perhaps the greatest strength of LHS. Because LHS discretizes each variable finely (highly resolving all of the marginal distributions), it projects each variable through the transformation well. 

The projective properties of LHS follow directly from the properties illuminated by Stein \cite{Stein_Tech_87}, who showed that LHS has the effect of filtering out the additive components (or main effects) of the transformation. More specifically, Stein showed that if the transformation $h(\mathbf{x})$ is decomposed in terms of its main effects $h_a(\mathbf{x})$ and interaction effects $r(\mathbf{x})$ as:
\begin{linenomath}
\begin{equation}
h(\mathbf{x})=h_a(\mathbf{x})+r(\mathbf{x})
\end{equation}
\end{linenomath}
then the variance of the LHS estimator becomes:
\begin{linenomath}
\begin{equation}
\text{Var}(T_L)=\dfrac{1}{N}\int r(\mathbf{x})^2dF(\mathbf{x})+\mathcal{O}\left(\dfrac{1}{N}\right)
\end{equation}
\end{linenomath}
The variance associated with the main effects $h_a(\mathbf{x})$ is very small, $\mathcal{O}(N^{-1})$, while the variance associated with the interactions $r(\mathbf{x})$ is equal to that of a standard Monte Carlo estimate (i.e.\ there is no variance reduction on the interactions). By contrast, SS reduces the variance on the main effects and the interactions in equal measure. This leads to the following conclusion. LHS has excellent projective properties for individual variables. SS, on the other hand, has moderate projective properties for individual variables but much better projective properties for variable interactions. This benefit, however, diminishes as the dimension grows as demonstrated by the following example.

%Latin hypercube sampling is useful for many applications because it has nice projective properties \cite{Stein_Tech_87} (i.e. it captures the effect of each individual variable well). That is, because it discretizes each variable finely, it projects each variable effectively through the transformation. This, as Stein shows \cite{Stein_Tech_87}, has the effect of filtering out the additive components of the transformation. However, many realistic transformations - particularly those defined by strongly nonlinear computational analyses - are non-separable and possess strong interactions among the variables. In this section, we demonstrate that stratified sampling is especially effective at reducing variance associated with variable interactions while Latin hypercube is considerably less so. However, these advantages diminish with dimension given the constraints on stratifying high dimensional spaces. 

Consider two simple transformations. The first is an additive function defined by:
\begin{linenomath}
\begin{equation}
Y_1=\dfrac{2}{n}\sum_{i=1}^nX_i.
\end{equation}
\end{linenomath}
where $X_i\sim U(0,1)$. The second is a multiplicative function with strong variable interactions given by:
\begin{linenomath}
\begin{equation}
Y_2=\prod_{i=1}^nX_i.
\label{eqn:prod}
\end{equation}
\end{linenomath}
where $X_i\sim U(1-\sqrt(3),1+\sqrt(3))$. Each transformation has been designed to produce $E[Y_1]=E[Y_2]=1$.  

Figure \ref{fig:add_v_mult} shows the average standard deviation from 1,000 Monte Carlo estimates of $E[Y_1]$ and $E[Y_2]$ from 1024 samples using SRS, LHS, and SS for varying problem dimensions. Note that since $E[X_i]\ne0$, the transformation in Eq.\ \eqref{eqn:prod} possesses both main effects and interaction effects. This can be seen by expressing $X_i$ in terms of zero mean variables $\hat{X}_i$ as $X_i=\hat{X}_i-1$ such that $Y_2=(\hat{X}_1-1)(\hat{X}_2-1)\dots(\hat{X}_n-1)$. It is these main effects that enable a variance reduction from LHS on Eq.\ \eqref{eqn:prod}.
\begin{figure}[!ht]
\centering
\subfigure[\label{fig:8a}]{
\centering
\includegraphics[width=0.48\columnwidth]{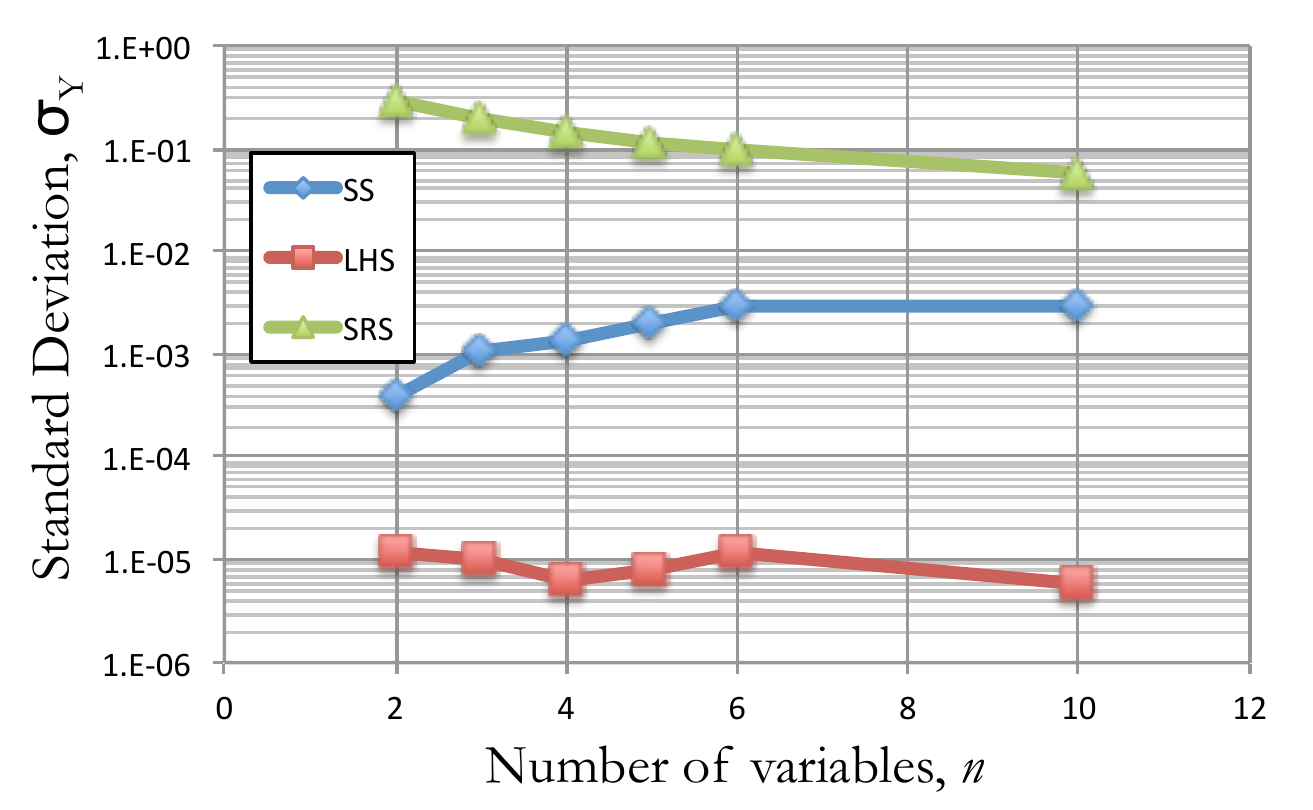}
}
\subfigure[\label{fig:8b}]{
\centering
\includegraphics[width=0.48\columnwidth]{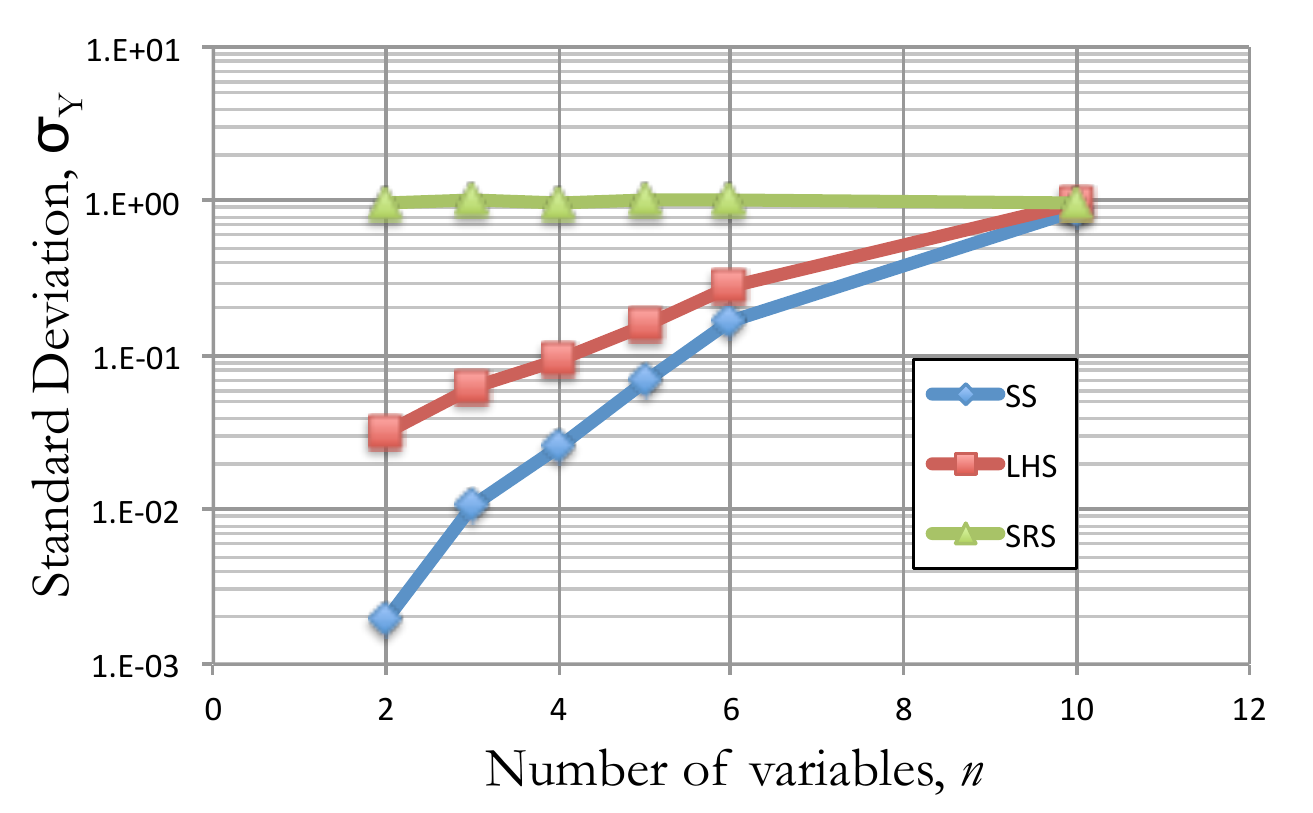}
}
\caption{Response standard deviation for (a) an additive transformation and (b) a multiplicative transformation as a function of dimension for SRS, LHS, and SS.}
\label{fig:add_v_mult} 
\end{figure}
As expected, LHS performs exceptionally well for transformation $Y_1$ regardless of dimension. However, SS reduces variance considerably over LHS for $Y_2$. This variance reduction diminishes with dimension but remains more effective than LHS up to $n=\log_2(N)$ (here $n=10$ for $N=1024$), after which neither method is capable of producing a meaningful variance reduction.

%Thus, we conclude that stratified designs should be considered for problems where space-filling is deemed important (in both low and high dimensions) and for low-dimensional problems where orthogonality is important (especially if space-filling is also important).

%%%%%%%%%%%%%%%%%%%%%%%%%%%%%%%%%%%%%%%%%%%%%%%%%%%%%%
% 
% 										SECTION 4
%
%%%%%%%%%%%%%%%%%%%%%%%%%%%%%%%%%%%%%%%%%%%%%%%%%%%%%%

\section{Sample size extension} 

The adaptive UQ methodology used in this paper requires the ability to easily add samples to an existing set. Sample size extension for SRS is straightforward because samples are iid realizations. However, in SS and LHS, sample size extension is not necessarily trivial. For SS, samples are traditionally added to existing strata although, as we will show, this is not optimal. Extension of LHS meanwhile requires to maintain equal weight for all samples for which a few methodologies have been developed recently. 

The first methodologies developed to extend Latin hypercube samples are referred to as Replicated Latin Hypercubes (RLH) \citep{Iman_Conf_81,McKay_TR_95}. % read these papers to verify
RLH entails performing an additional (independent) LHS with identical stratification upon completion of the prior LHS. Using RLH, the minimum achievable sample size extension is limited by the original LHS size (it requires adding $N$ additional samples at each refinement) and the sample does not benefit from any additional space refinement.

Recent works by Tong, \citep{Tong_RESS_06}, Sallaberry et al.\, \citep{Sallaberry_et_al_RESS_08}, and Vorechovsky et al. \citep{Vorechovsky_et_al_ICOSSAR_13} have proposed methods commonly referred to as Hierarchical Latin Hypercube Sampling (HLHS). Although these methods differ in their details (e.g.\ the methods in \citep{Sallaberry_et_al_RESS_08,Vorechovsky_et_al_ICOSSAR_13} enable extension for samples of correlated variables), the basic premise is the same. Given an LHS of size $N$, the strata of each sample component are further divided into $t+1$ strata (one containing the original sample) - where $t$ is referred to as the refinement factor - and the new components are randomly paired as in a typical LHS implementation. 
%This process is shown graphically in Figure \ref{fig:2} \citep{Vorechovsky_et_al_ICOSSAR_13}. 
%\begin{figure}[!ht]
%\centering
%\includegraphics[width=0.7\columnwidth]{figure2.pdf}
%\caption{Hierarchical Latin Hypercube Sample - Hierarchy of three refinements with $t=2$ starting from one initial sample. Image reproduced from \citep{Vorechovsky_et_al_ICOSSAR_13}.}
%\label{fig:2}
%\end{figure}
HLHS introduces $N_{new}=tN$ new samples to the set and in general, through $r$ sample size extensions, the size of the sample set grows exponentially as:
\begin{linenomath}
\begin{equation}
\label{eqn:HLHS_growth}
N_{tot}=N(t+1)^r.
\end{equation}
\end{linenomath}
Its space-filling properties and statistical convergence are improved over RLH though because each sample size extension involves a division of the sample strata.

Lastly, an alternate means of sample size ``extension" is through a method called Nested Latin Hypercube Sampling (NLHS) developed in \citep{Qian_Bio_09} and \citep{Rennen_et_al_SMO_10}. Extension is presented in quotations because it is not truly an extension technique. Rather, it works in the opposite manner as HLHS by producing one very large LHS sample (larger than would presumably be necessary for the problem at hand) and dividing this sample into smaller ``nested" Latin hypercubes. It can therefore be used as an extension method by considering first a single nest and then considering subsequent nests as the sample extensions.

These developments for adding samples to a Latin hypercube design represent a significant development toward achieving the desired adaptive Monte Carlo framework. Each of the methods, with the exception of RLH, strictly maintain the character (and all associated properties) of a Latin hypercube design as discussed in Section 3. Given the broad appeal of LHS and its many desirable features, these methods are attractive for many practical applications.

%These LHS sample size extension methods are useful but are not without limitations. First, and most notably, the sample size extensions grow rapidly (for HLHS growth is exponential as evidenced by Eq.\ \eqref{eqn:HLHS_growth}). Consequently, if minimal sample size is critical, these methods simply do not provide the flexibility to achieve it. Ideally, the method should be flexible enough to allow extension by as many or as few samples as desired. However, this is impossible using LHS because the design necessitates equal weight distribution to all samples. For this reason, it is desirable to make use of stratified sampling where sample weights can differ. SS possesses the additional advantages of improved space-filling properties and orthogonality without the requirement of iterative rearrangement (see Section \ref{Sec:comp}) and, for many applications, improved convergence characteristics when compared to LHS. However, the sample size extension method for stratified designs of adding samples to existing strata is sub-optimal. The following sections present a new refinement method for stratified designs and subsequently analyze its advantages and drawbacks.

%%%%%%%%%%%%%%%%%%%%%%%%%%%%%%%%%%%%%%%%%%%%%%%%%%%%%%
% 
% 										SECTION 5
%
%%%%%%%%%%%%%%%%%%%%%%%%%%%%%%%%%%%%%%%%%%%%%%%%%%%%%%

\section{Refined Stratified Sampling}

The primary drawback of the LHS extension methods is the rapid sample size growth associated with sample size extension. The methodology developed in this section, referred to as Refined Stratified Sampling (RSS), makes use of the unequal sample weighting allowed by stratified sampling to extend a stratified sample set by a single sample. Furthermore, after each extension, the sample maintains the properties of a stratified sample with weights that can be easily computed if the sample components are independent and uncorrelated. 

Consider the input random vector $\mathbf{X}$ with $n$ independent and uncorrelated components defined on the sample space $\mathbf{\mathcal{S}}$ as described in Section 2. Given an initial stratified sample set of size $N$ distributed over $M=N$ strata, the Refined Stratified Sampling methodology proceeds as follows:
\begin{enumerate}
\item{Select a stratum $\mathbf{\Omega}_k$ to divide according to the following criteria:
\begin{itemize}
\item{If a single stratum $\mathbf{\Omega}_k$ exists such that $w_k>w_j\hspace{6pt}\forall j\ne k$, divide this stratum.}
\item{If $N_s$ strata $\mathbf{\Omega}_k; k=1,2,\dots,N_s$ exist with $w_k=\underset{j}\max{(w_j)}$, randomly select the stratum $\mathbf{\Omega}_k$ to divide with the probability of dividing $\mathbf{\Omega}_k$ equal to $P[D(\mathbf{\Omega}_k)]=\dfrac{1}{N_s}$.}
\end{itemize}
\item{Divide stratum $\mathbf{\Omega}_k$ in half according to the following criteria:
\begin{itemize}
\item{Compute the unstratified lengths of $\mathbf{\Omega}_k$, defined as:
\begin{linenomath}
\begin{equation}
\lambda_{ik}=\xi_{ik}^{hi}-\xi_{ik}^{lo}; i=1,2,\dots,n
\label{eqn:11}
\end{equation}
\end{linenomath}
where $\xi_{ik}^{hi}$ and $\xi_{ik}^{lo}$ are defined as in Section \ref{sec:SS}.} 
\item{Determine the maximum unstratified length of $\mathbf{\Omega}_k$ as $\Lambda_{k}=\underset{i}\max{(\lambda_{ik})}$
}
\item{Divide stratum $\mathbf{\Omega}_k$ along the component $i^*$ corresponding to $\Lambda_{k}$. If $N_c$ components exists such that individual unstratified lengths $\lambda_{i^*k}=\Lambda_k; i^*=1,2,\dots,N_c$ then randomly select the component to divide with the probability of dividing component $i^*$ equal to $P[D(i^*)]=\dfrac{1}{N_c}$.}
\end{itemize}
}

}
\item{Keeping all existing samples $\mathbf{x}_l; l=1,2,\dots,N$, randomly sample in the newly defined empty stratum. Compute new weights for each of the existing samples and the new sample according to Eq. \eqref{eqn:3}.}
\item{Repeat steps 1 - 3 for every new extension.}
\end{enumerate}

The RSS process is described graphically in the flowchart provided in Figure \ref{fig:3} for a two-component random vector starting with $N=M=1$ and showing the first four sample-size extensions using RSS.
%\clearpage
\begin{figure}
\centering
\includegraphics[width=1.\columnwidth]{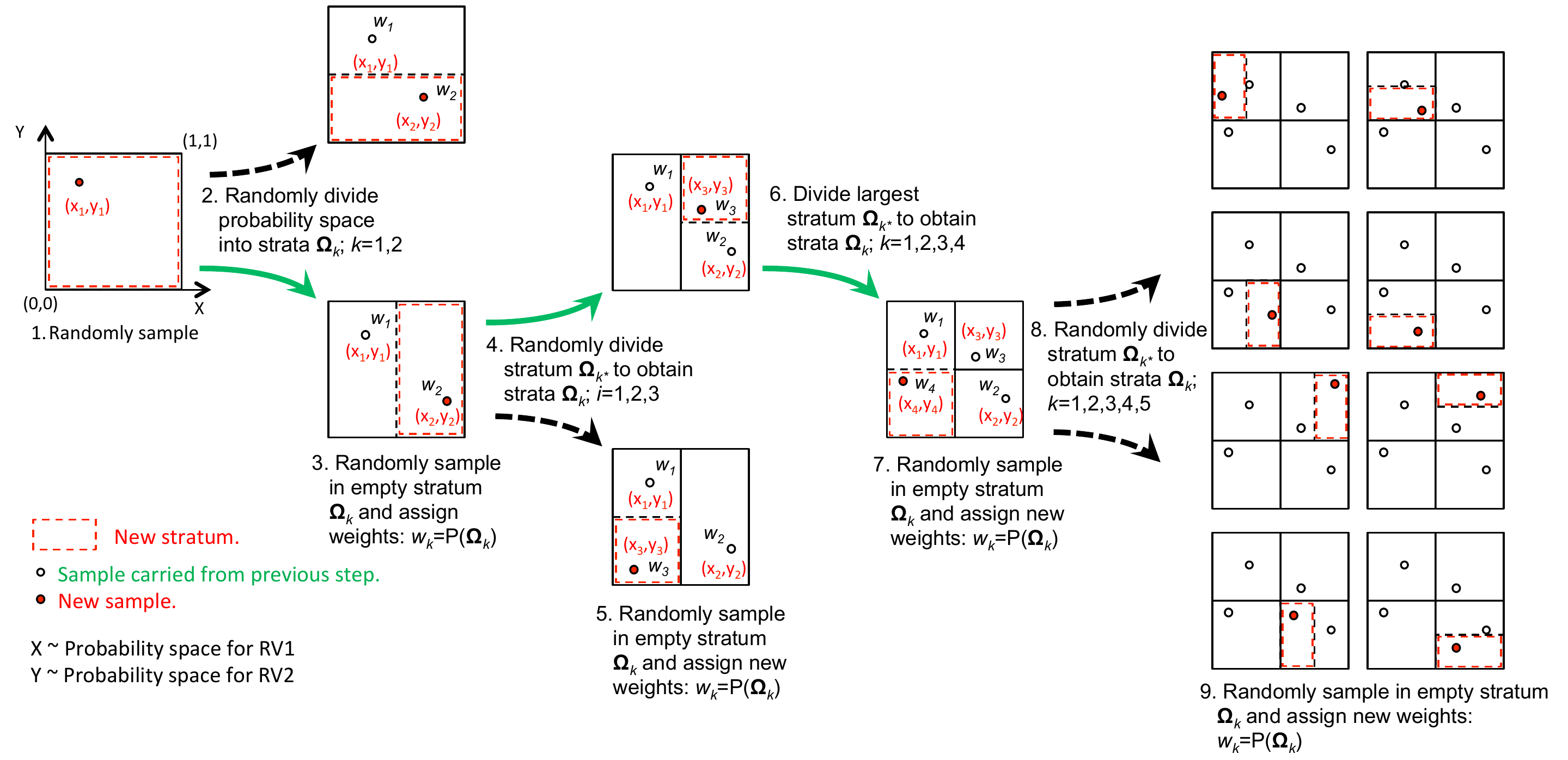}
\caption{Refined Stratified Sampling: Flowchart of sample size extension procedure for two random variables.}
\label{fig:3}
\end{figure}
\clearpage

%%%%%%%%%%%%%%%%%%%%%%%%%%%%%%%%%%%%%%%%%%%%%%%%%%%%%%
% 										SECTION 5.1
%%%%%%%%%%%%%%%%%%%%%%%%%%%%%%%%%%%%%%%%%%%%%%%%%%%%%%

\subsection{Why refine strata?}
The methodology presented herein utilizes a refinement of the strata definitions rather than adding samples to existing strata. The benefits of this strategy can be significant - but can also be detrimental if not implemented appropriately. The benefits of the method rely on the following theorem.

\begin{theorem}
%\emph{(Lagrange's Theorem)}
\label{thrm:restratification}
Let $\mathbf{\Omega}_1$ denote a stratum of space $\mathbf{\mathcal{S}}$ and $\mathbf{\omega}_k;\hspace{3pt}k=1,\dots,N_{ss}$ denote $N_{ss}$ disjoint substrata of $\mathbf{\Omega}_1$ such that $\cup_{k=1}^N\mathbf{\omega}_k=\mathbf{\Omega}_1$ and $P[\omega_k]=P[\omega_j] \hspace{3pt} \forall k,j \in 1,\dots,N_{ss}$. The variance of a statistical estimator computed using $N_{ss}$ samples drawn singularly from $\mathbf{\omega}_k;\hspace{3pt}k=1,\dots,N_{ss}$ will always be less than the variance of the same estimator computed using $N_{ss}$ samples drawn from $\mathbf{\Omega}_1$.
\end{theorem}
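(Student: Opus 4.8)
The plan is to recognize this statement as a \emph{localized} instance of the McKay et al.\ variance-reduction identity already quoted as Eq.~\eqref{eqn:8}, with the subspace $\Omega_1$ playing the role of the full space $\mathcal{S}$ and the equal-probability substrata $\omega_k$ playing the role of the global strata. The first thing I would establish is that the two sampling schemes assign \emph{identical} weights to every sample, so that the variances are directly comparable. Since the $\omega_k$ partition $\Omega_1$ into $N_{ss}$ pieces of equal probability, each substratum has probability $P[\Omega_1]/N_{ss}$, and drawing one sample per substratum gives weight $w_l = P[\Omega_1]/N_{ss}$ by Eq.~\eqref{eqn:3}. Drawing $N_{ss}$ samples directly from $\Omega_1$ (a single stratum with $M_k = N_{ss}$) gives the same per-sample weight. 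Hence the two estimators differ only through the dependence structure of the $g(y_l)$.

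Next I would write the two local variances explicitly. Put $\mu_k = E[g(Y)\mid \mathbf{X}\in\omega_k]$, $\mu_{\Omega_1}=E[g(Y)\mid\mathbf{X}\in\Omega_1]$, $\sigma_k^2=\text{Var}[g(Y)\mid\mathbf{X}\in\omega_k]$, and $\sigma_{\Omega_1}^2=\text{Var}[g(Y)\mid\mathbf{X}\in\Omega_1]$. In the direct scheme the $N_{ss}$ draws are i.i.d.\ on $\Omega_1$, so their contribution to the estimator variance is $(P[\Omega_1]/N_{ss})^2\, N_{ss}\,\sigma_{\Omega_1}^2$. In the refined scheme the draws are independent \emph{across} the $\omega_k$, so the contribution is $(P[\Omega_1]/N_{ss})^2\sum_{k=1}^{N_{ss}}\sigma_k^2$. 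Because samples in disjoint strata are independent, refining $\Omega_1$ leaves the contributions of all other strata untouched, so it suffices to compare these two local terms.

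The key step is the law of total variance, invoked using the fact that the substrata are equiprobable, so the stratum index is uniform on $\{1,\dots,N_{ss}\}$: this gives $\sigma_{\Omega_1}^2 = \frac{1}{N_{ss}}\sum_k \sigma_k^2 + \frac{1}{N_{ss}}\sum_k(\mu_k-\mu_{\Omega_1})^2$. Substituting $\sum_k\sigma_k^2 = N_{ss}\,\sigma_{\Omega_1}^2 - \sum_k(\mu_k-\mu_{\Omega_1})^2$ into the refined-scheme variance reproduces exactly the structure of Eq.~\eqref{eqn:8}: the refined variance equals the direct variance minus the non-negative between-substratum term $(P[\Omega_1]/N_{ss})^2\sum_k(\mu_k-\mu_{\Omega_1})^2$. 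This establishes the reduction.

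The main obstacle is not the computation but the word \emph{always} in the statement, i.e.\ strictness. The difference vanishes precisely when $\mu_k=\mu_{\Omega_1}$ for every $k$ --- that is, when $g(Y)$ has identical conditional mean on every substratum --- so the inequality is only weakly $\le$ in full generality and strict off this degenerate set. I would state this explicitly, arguing that for any non-degenerate transformation the substratum means differ and the reduction is genuine, and flagging the equal-means case as the boundary of the claim. I would also note that the argument requires only the partition and equiprobability hypotheses of the theorem, so no product/independence structure of $\mathbf{X}$ is needed at this stage.
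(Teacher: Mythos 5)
Your proposal is correct and takes essentially the same route as the paper: both compare the two designs stratum-by-stratum, cancel the (independent) contributions outside $\mathbf{\Omega}_1$, and reduce the claim to the sign of $\sigma_1^2-\frac{1}{N_{ss}}\sum_{k}\sigma_k^2$ exactly as in Eq.~\eqref{eqn:difference}. You in fact go one step further than the paper, which merely asserts that this term is ``straightforward to show'' strictly positive: your law-of-total-variance identity makes the difference explicit as $\left(p_1/N_{ss}\right)^2\sum_{k}(\mu_k-\mu_{\Omega_1})^2$, and your observation that it vanishes precisely when all conditional substratum means coincide is a genuine (if minor) correction to the paper, whose claim of an unconditional \emph{strict} reduction fails in that degenerate case.
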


\begin{proof}
Consider a statistical estimator from a stratified design, $T_S$, as given in Section \ref{sec:stat_eval}. Next, consider two different stratified designs wherein all strata are identical except in one region, denoted $\mathbf{\Omega}_1$. In design one, the region $\mathbf{\Omega}_1$ has only one stratum possessing $N_{ss}$ samples. In design two, $\mathbf{\Omega}_1$ is divided into $N_{ss}$ balanced strata, each possessing a single sample. The variances of the statistical estimators from these two designs are given by:
%\begin{equation}
\begin{linenomath}
\begin{subequations}
\begin{align}
\text{Var}\left[T_{s1}\right]&=\dfrac{p_1^2}{N_{ss}}\sigma_1^2+\sum_{j=2}^M\dfrac{p_j^2}{N_j}\sigma_j^2\\
\text{Var}\left[T_{s2}\right]&=\sum_{k=1}^{N_{ss}}p_k^2\sigma_k^2+\sum_{j=2}^M\dfrac{p_j^2}{N_j}\sigma_j^2
\end{align}
\end{subequations}
\end{linenomath}
%\end{equation}
where $\sum_{k=1}^{N_{ss}}p_k=p_1$ and the $M$ term summation refers to all points in the stratified design outside of region $\mathbf{\Omega}_1$. The difference in variance between these estimators is given by:
\begin{linenomath}
\begin{equation}
\text{Var}\left[T_{s1}\right]-\text{Var}\left[T_{s2}\right]=\dfrac{p_1^2}{N_{ss}}\sigma_1^2-\sum_{k=1}^{N_{ss}}p_k^2\sigma_k^2
\end{equation}
\end{linenomath}
Under the condition that $p_k=\dfrac{p_1}{N_{ss}}$ (i.e. balanced stratum refinement of $\mathbf{\Omega}_1$), the difference reduces to:
\begin{linenomath}
\begin{equation}
\text{Var}\left[T_{s1}\right]-\text{Var}\left[T_{s2}\right]=\dfrac{p_1^2}{N_{ss}}\left(\sigma_1^2-\dfrac{1}{N_{ss}}\sum_{k=1}^{N_{ss}}\sigma_k^2\right)
\label{eqn:difference}
\end{equation}
\end{linenomath}
It is straightforward to show that the term on the right hand side of Eq.\ \eqref{eqn:difference} is strictly positive. Therefore, variance is reduced using a balanced restratification of $\mathbf{\Omega}_1$.\qed
\end{proof}

%%%%%%%%%%%%%%%%%%%%%%%%%%%%%%%%%%%%%%%%%%%%%%%%%%%%%%
% 										SUBSECTION 5.1.1
%%%%%%%%%%%%%%%%%%%%%%%%%%%%%%%%%%%%%%%%%%%%%%%%%%%%%%

\subsubsection{Optimal strata refinement for various output distributions}

Theorem \ref{thrm:restratification} states that a balanced stratum refinement of any given stratum will \emph{always} reduce the variance of statistical estimates when compared with simply adding samples to the existing stratum. Note, however, that Theorem \ref{thrm:restratification} does not say that any stratum refinement in general will result in reduced variance. In fact, improper stratum refinement can increase the variance. Furthermore, Theorem \ref{thrm:restratification} does not imply that a balanced stratum refinement produces the optimal variance reduction. It often will not.

Consider a statistical estimate of the expected value of $Y$, $T_S=E\left[Y\right]$, from a stratified design. Here, we establish the optimal stratum division (on the basis of variance reduction) for different distributions of $Y$ produced from the transformation $Y=F(\mathbf{X})$. For demonstration purposes, the estimate is computed initially from two samples, one in each stratum $\mathbf{\Omega}_1$ and $\mathbf{\Omega}_2$ with $p_1=P\left[\mathbf{\Omega}_1\right]=0.5$ and $p_2=P\left[\mathbf{\Omega}_2\right]=0.5$ as depicted in Figure \ref{fig:opt_RSS_norm}.
\begin{figure}
\centering
\includegraphics[width=0.5\columnwidth]{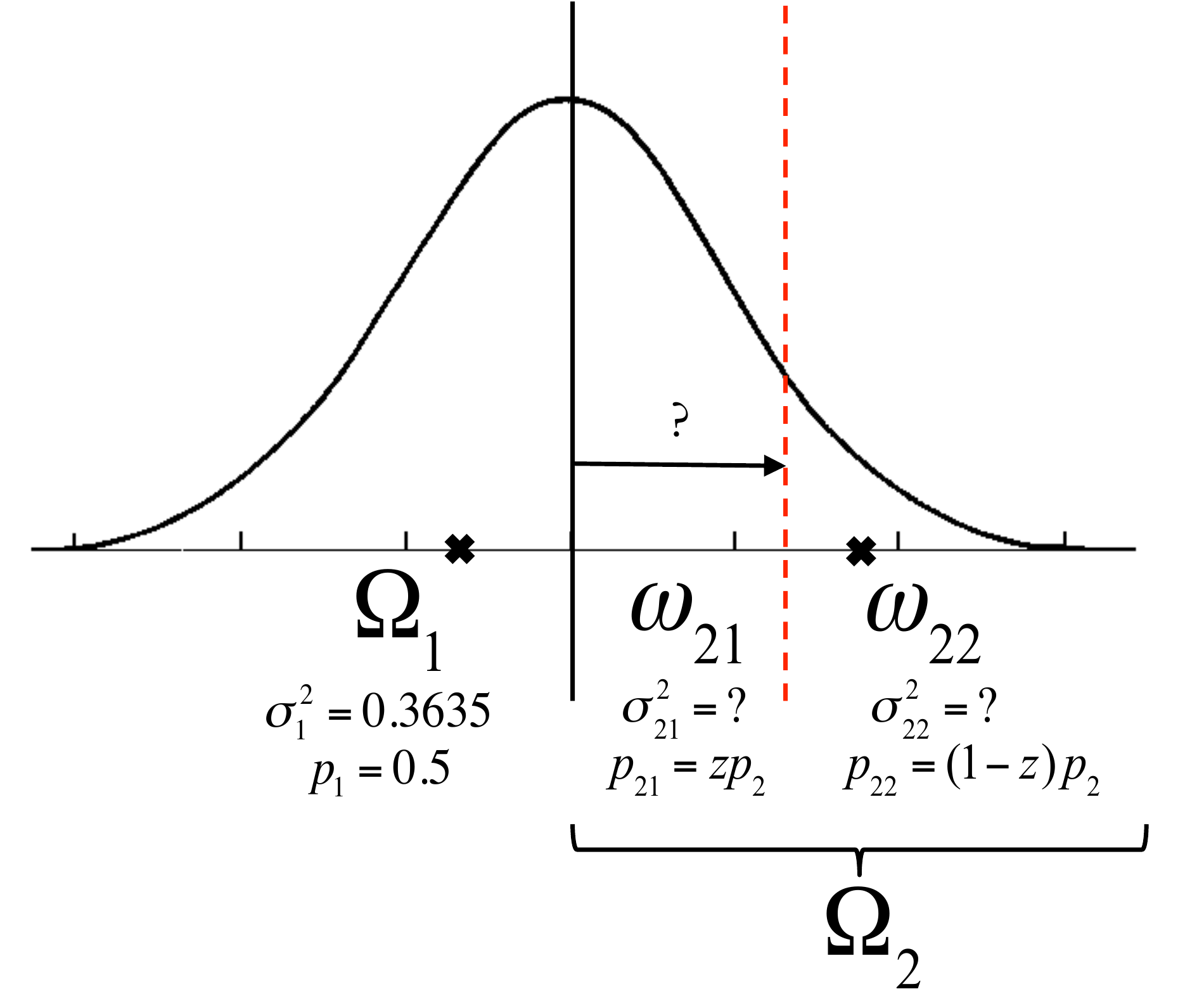}
\caption{Identification of the optimal stratum refinement location for a normal distribution.}
\label{fig:opt_RSS_norm}
\end{figure}
We are interested in identifying the optimal location to divide the strata such that a single sample can be added while minimizing $\text{Var}\left[T_S\right]$. In each example, stratum $\mathbf{\Omega}_2$ is divided into two substrata denoted $\omega_{21}$ and $\omega_{22}$ possessing probability weights $p_{21}$ and $p_{22}$ subject to:
%$p_{21}=zp_{22}$ and $p_{21}+p_{22}=p_2$ (Figure \ref{fig:opt_RSS_norm}) such that 
\begin{linenomath}
\begin{subequations}
\begin{align}
p_{21}&=P\left[\mathbf{\omega}_{21}\right]=zp_2\\
p_{22}&=P\left[\mathbf{\omega}_{22}\right]=(1-z)p_2
\end{align}
\end{subequations}
\end{linenomath}
where $z\in[0,1]$, referred to as the `Imbalance Factor,' is a measure of imbalance of the resulting stratum division (Figure \ref{fig:opt_RSS_norm}). Note that $z=0.5$ corresponds to balanced stratum refinement. The objective of the optimization can be stated as:
\begin{linenomath}
\begin{equation}
	\begin{aligned} 
	& \underset{z}{\text{minimize:}}
	& & \text{Var}\left[T_s\right]=p_1^2\sigma_1^2+p_{21}^2\sigma_{21}^2+p_{22}^2\sigma_{22}^2\\
	& \text{subject to:}
	& & p_{21}=zp_{2} \\
	& & & p_1+p_{21}+p_{22} = 1
	\end{aligned}
	\label{eq:Error}
\end{equation}
\end{linenomath}

We consider three different output distributions: $Y\sim \text{Normal}(0,1)$; $Y\sim \text{Uniform}(0,1)$, $Y\sim \text{LogNormal}(-1.49,1.27)$. Optimization as defined by Eq.\ \eqref{eq:Error} yields the stratum refinement outlined in Table \ref{tab:opt_division} for each distribution.
\begin{table}
\centering
\caption{Optimal strata division for uniform, normal, and lognormal output distributions.}
\begin{tabular}{llllll}
\hline
Dist. & $p_{21}$ & $p_{22}$ & $\sigma_{21}^2$ & $\sigma_{22}^2$ & $z$ \\\hline
$U(0,1)$ & 0.25 & 0.25 & $5.2e-3$ & $5.2e-3$ & 0.5 \\
$N(0,1)$ & 0.3162 & 0.1838 & $6.53e-2$ & $0.21$ & 0.632 \\
$LN(-1.49,1.27)$ & 0.4426 & 0.0574 & $0.1165$ & $6.97$ & 0.885 \\
\hline
\end{tabular}
\label{tab:opt_division}
\end{table}
Table \ref{tab:var_divisions} meanwhile compares the resulting variances of $T_S$ for each distribution considering different strata definitions: 1. Two samples - one from $\mathbf{\Omega}_1$ and one from $\mathbf{\Omega}_2$; 2. Three samples with no refinement - a single sample is added to stratum $\mathbf{\Omega}_2$; 3. Three samples with balanced refinement - stratum $\mathbf{\Omega}_2$ is divided such that $p_{21}=p_{22}=0.25$; 4. Three samples with optimal refinement of stratum $\mathbf{\Omega}_2$ as defined in Table \ref{tab:opt_division}.
\begin{table}
\centering
\caption{Variance of the estimated mean value for different strata refinements and different output distributions.}
\begin{tabular}{lllll}
\hline
 \multicolumn{5}{c}{$\text{Var}[T_S]$}\\\hline
 & & Three Samples & Three Samples & Three Samples \\
Dist. & Two Samples & No Refinement & Bal. Refinement & Opt. Refinement \\\hline
$U(0,1)$ & 0.0104 & $7.8125e-3$ & $5.859e-3$ & $5.859e-3$ \\
$N(0,1)$ & 0.18175 & 0.1363 & 0.1083 & 0.1045  \\
$LN(-1.49,1.27)$ & 0.4138 & 0.2073 & $0.1696$ & $0.04667$  \\
\hline
\end{tabular}
\label{tab:var_divisions}
\end{table}
Notice that optimal stratification can have a profound influence on variance reduction.

Figure \ref{fig:imbalance_factor} shows the variance of the mean estimates for the normal, uniform, and lognormal distributions as a function of the imbalance factor $z$ with the optimal value demarcated by an `x'. 
\begin{figure}
\centering
\subfigure[\label{fig:10a}]{
\centering
\includegraphics[width=0.31\columnwidth]{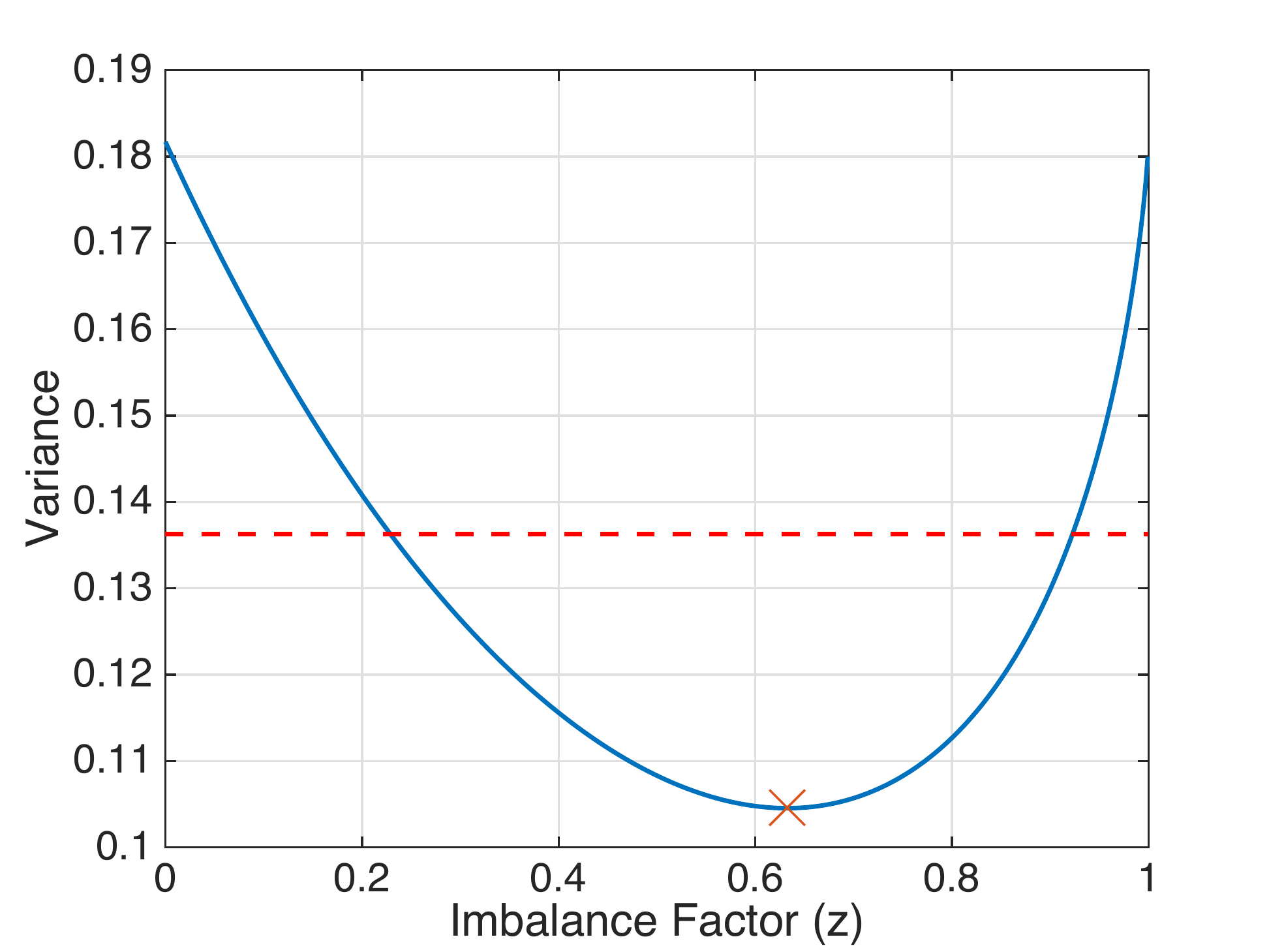}
}
\subfigure[\label{fig:10b}]{
\centering
\includegraphics[width=0.31\columnwidth]{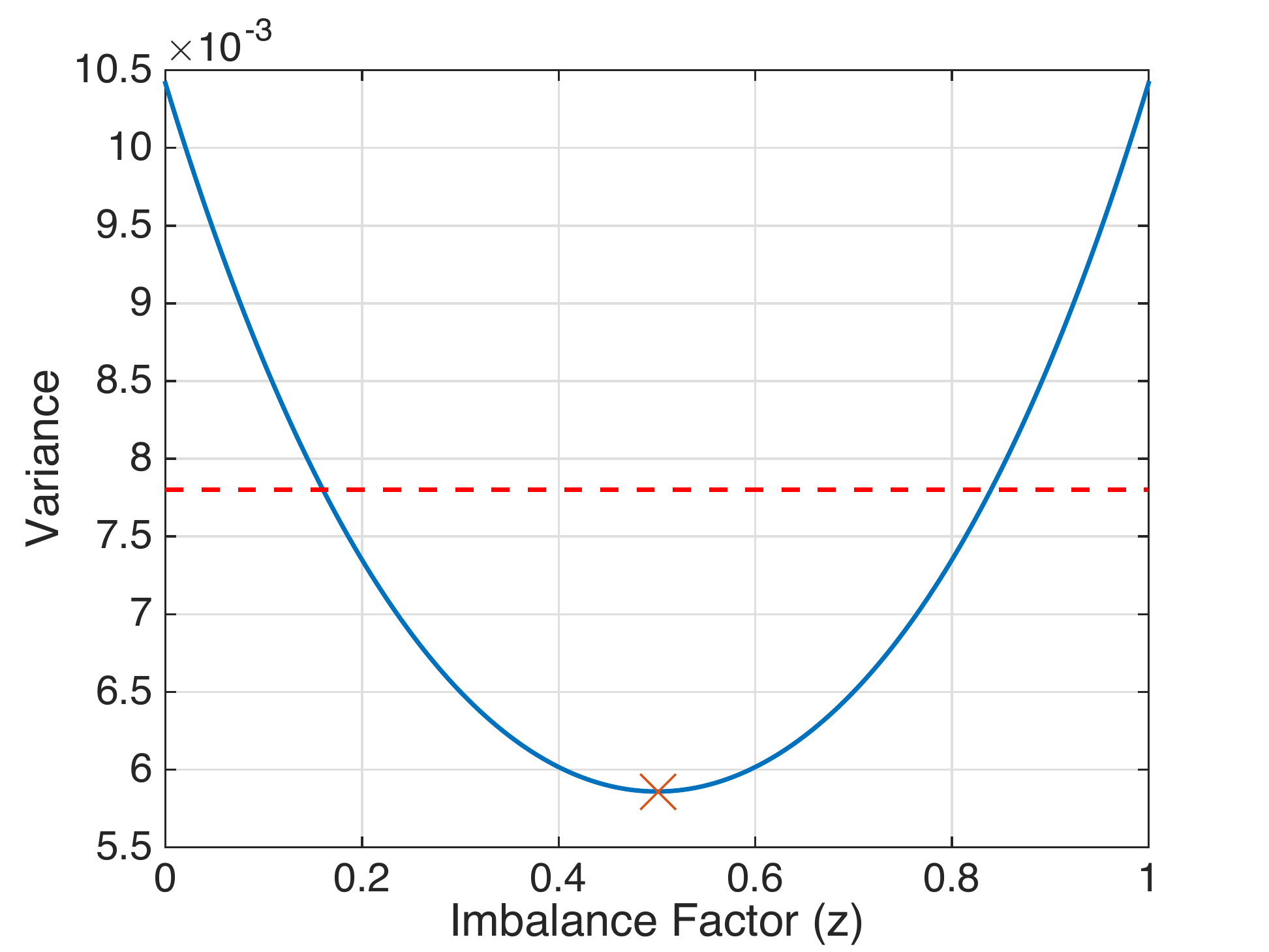}
}
\subfigure[\label{fig:10c}]{
\centering
\includegraphics[width=0.31\columnwidth]{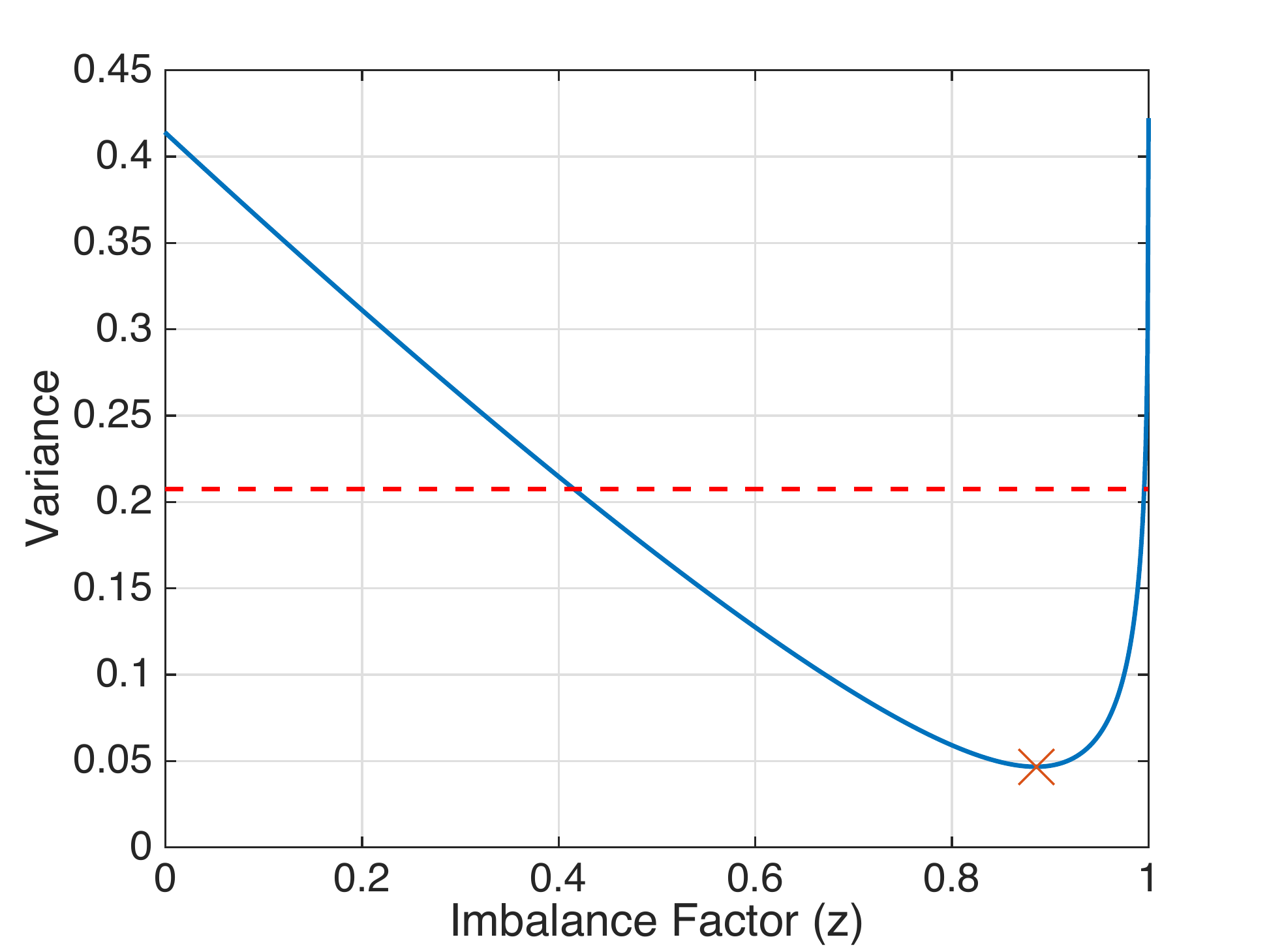}
}
\caption{Variance of the estimated mean value from three samples with (a) normal, (b) uniform, and (c) lognormal output distributions as a function of stratum imbalance factor.}
\label{fig:imbalance_factor}
\end{figure}
%As stated above, the optimal stratum refinement for statistical estimation is not necessarily a balanced design. 
The dashed line shows the variance resulting from samples produced without stratum refinement. Notice that the balanced refinement is not necessarily the optimal but always reduces variance while many refinement strategies (e.g.\ $z<0.4$ for lognormal output) actually increase the variance relative to estimates produced without refinement. Analysis of these results shows that the closer the stratum $\mathbf{\Omega}_2$ is to possessing symmetric conditional distribution $D_Y(y|y\in \mathbf{\Omega}_2)$, the closer the balanced refinement strategy becomes to optimal. Conversely, when the conditional distribution $D_Y(y|y\in \mathbf{\Omega}_2)$ possesses strong asymmetry, the optimal refinement will be unbalanced.
%such as possessing an off-center peak and a heavy tail, it follows from Eq. \eqref{eq:Error} that  
In such cases, the balanced refinement procedure proposed herein produces diminished return in terms of variance reduction. 

The analysis performed in this section assumes stratification in the domain of the output $Y$. In general, the corresponding strata in the input domain of $\mathbf{X}$ cannot be identified unless $F(\mathbf{X})$ is known and invertible. Additionally, the distribution of $Y$ is not known in general. Consequently, it can be difficult and sometimes impossible to identify the stratum division for $\mathbf{X}$ that corresponds to the optimal division on $Y$ - particularly if $F(\mathbf{X})$ is strongly nonlinear, non-monotonic, discontinuous, or otherwise complex. Balanced stratum refinement on the input space, on the other hand, necessarily corresponds to a balanced stratum refinement on the output space and will always provide a reduction in variance. For this reason, its use is recommended when stratum optimization is not possible. 

Identifying optimal stratum refinement, however, represents a potentially important direction for future research. While some similar concepts have been developed for optimally sampling the input space with probabilistically weighted draws to match the input probability model (e.g.\ \cite{Grigoriu_AMM_09}) the potential capability to use RSS within the adaptive framework promoted herein represents an opportunity to define a partitioning of the input space that optimizes the partitioning of the output space. This development is one that is potentially very powerful for efficient uncertainty analysis. An initial exploration along these lines is undertaken in a parallel effort for applications in reliability analysis \cite{Shields_Sundar_IJRS_2015} where strata in the input variable space are divided such that samples are concentrated in the output space in the region of the limit state function. 

%%%%%%%%%%%%%%%%%%%%%%%%%%%%%%%%%%%%%%%%%%%%%%%%%%%%%%
% 										SECTION 5.2
%%%%%%%%%%%%%%%%%%%%%%%%%%%%%%%%%%%%%%%%%%%%%%%%%%%%%%

\subsection{RSS yields unbalanced stratified designs}

Globally, RSS produces stratified designs that are balanced only when $N=M=2^p$, are symmetrically balanced only when $N=M=2^{np}$ for integer values of $p$, and are otherwise unbalanced (assuming an initial sample size of one). In general, producing unbalanced designs is recommended only when the output distribution calls for it (see the previous section). However, the output distribution is unknown in general and therefore balanced stratification should be used whenever optimization is not possible. For this reason, it is recommended that the initial stratified design be a balanced one (asymmetric or symmetric). The proposed methodology will then produce unbalanced designs during most iterations but this is justified by Theorem 1.
%which states that these unbalanced designs serve to reduce the variance when compared to existing methods that add samples to pre-existing strata.
That said, the analysis conducted in the previous section and the variance reduction expression in Eq.\ \eqref{eqn:8} together imply that optimal refinement will not rely on randomly selecting the stratum to divide but rather selecting it based on some established criteria that may optimize variance reduction. One condition that could be considered, for example, is to enforce some symmetry conditions in stratum selection such that the stratum design does not become too heavily imbalanced. 

%%%%%%%%%%%%%%%%%%%%%%%%%%%%%%%%%%%%%%%%%%%%%%%%%%%%%%
% 										SECTION 5.3
%%%%%%%%%%%%%%%%%%%%%%%%%%%%%%%%%%%%%%%%%%%%%%%%%%%%%%

\subsection{Expected performance and extension to high dimensional applications}

As demonstrated in Section 5.1, the efficiency of the method is dependent on the resulting distribution (and more specifically the conditional distributions of the response from within each stratum). Given these considerations, the method is expected to converge more rapidly for conditional output distributions that are closer to symmetric. For problems whose output distribution is strongly asymmetric and characterized by a dominant peak and heavy tails (i.e. those with high skewness and kurtosis), the convergence rate of the RSS method will be reduced due to the sub-optimality of the stratum refinement. Optimal stratum refinement methods need to be developed to fully capitalize on the variance reduction potential of RSS.

In its present form, the proposed methodology is not well-suited to problems with very high dimensional input vectors. As the analysis in Section 3 implies, stratified sampling loses considerable effectiveness for dimensions above $n=\log_2(N)$ and its benefits are most pronounced only for low dimensional problems. To extend the RSS method to very high dimensional spaces, it is currently being considered to conduct RSS on low-dimensional subspaces with samples generated on these subspaces randomly paired as in LHS. Moreover, a major deficiency of the proposed method from a dimensionality perspective is that rectilinear stratification requires stratum division in only a single dimension. A generalized RSS algorithm would refine strata of arbitrary shape (i.e.\ Voronoi cells) to allow refinement in multiple dimensions simultaneously - thereby greatly reducing dimensional dependence. Such algorithms are the topic of a forthcoming work.

\section{Applications}

%%%%%%%%%%%%%%%%%%%%%%%%%%%%%%%%%%%%%%%%%%%%%%%%%%%%%%
% 										SECTION 7.1
%%%%%%%%%%%%%%%%%%%%%%%%%%%%%%%%%%%%%%%%%%%%%%%%%%%%%%

\subsection{Statistical analysis of a stochastic function}
Consider the cubic polynomial function given by:
\begin{linenomath}
\begin{equation}
Y=F(\mathbf{X})=X_1^2X_2-\alpha X_1X_2^2+X_1X_2
\label{eqn:cubic_function}
\end{equation}
\end{linenomath}
The function possesses three random variables: $X_1$, $X_2$, and $\alpha$. Several sets of input distributions are considered to elicit significantly different response distributions as described through analytically computed moments in Table \ref{tab:1}. The cases cover a vast range of distributions from those approaching uniform to those with sharp peaks and heavy tails.
%Table \ref{tab:1} also displays analytically computed moments of the response $Y$ with the parameter sets ordered according to their resulting excess kurtosis - which covers a vast range of distributions from those approaching a uniform response (i.e. negative excess kurtosis) to those with very strong peaks and heavy tails (large positive kurtosis).
Convergence is defined based on the following relative error:
\begin{linenomath}
\begin{equation}
\left|\dfrac{\sigma_y^2-\text{Var}[Y]}{\text{Var}[Y]}\right|\le\epsilon_{th}
\label{eqn:10}
\end{equation}
\end{linenomath}
where $\text{Var}[Y]$ is the analytically determined variance, $\sigma^2_y$ is the variance computed from the generated samples, and $\epsilon_{th}=0.01$ (in general, analytical moments will not be available - this example is selected for demonstration purposes to allow a simple and clearly defined convergence criterion). 

\begin{table}
\caption{Input distributions and resulting analytical moments of the output. [LN = LogNormal($\mu,\sigma$), U = Uniform($a,b$), N = Normal($\mu,\sigma$)].}
\centering
\begin{tabular}{llllllll}
\hline
Dist. & $X_1$ & $X_2$ & $\alpha$ & $E[Y]$ & Var[$Y$] & Skew[$Y$] & Kurt[$Y$]\\\hline
A & LN(0,0.01) & U(0,20) & N(1,0.1) & -113.33 & 12012.0 & -0.77 & -0.55\\
B & LN(0,0.1) & U(0,10) & N(1,0.1) & -23.37 & 621.18 & -0.89 & -0.24\\
C & LN(0,0.1) & U(0,7) & N(1,0.1) & -9.32 & 121.97 & -0.97 & -0.09\\
D & LN(0,0.1) & U(0,6) & N(1,0.1) & -5.98 & 58.46 & -1.01 & 0.002\\
E & LN(0,0.1) & U(0,5) & N(1,0.1) & -3.31 & 23.65 & -1.08 & 0.17\\
F & LN(0,0.3) & U(0,5) & N(1,0.1) & -3.10 & 25.03 & -1.17 & 0.81\\
G & LN(0,0.4) & U(0,5) & N(1,0.1) & -2.87 & 26.80 & -0.99 & 2.08\\
H & LN(0,0.45) & U(0,5) & N(1,0.1) & -2.70 & 28.85 & -0.48 & 9.42\\
I & LN(0,0.475) & U(0,5) & N(1,0.1) & -2.60 & 30.56 & 0.09 & 23.84\\
J & LN(0,0.5) & U(0,5) & N(1,0.1) & -2.48 & 33.05 & 1.08 & 60.62\\
\hline
\end{tabular}
\label{tab:1}
\end{table}

For each of the parameter sets, 1,000 calculation sets were performed using RSS, HLHS, and SRS. Each calculation set begins with 20 samples and is extended until the convergence criterion is satisfied. The number of samples required for convergence is recorded for each set. The results of this study are summarized in Figure \ref{fig:comparison}.
\begin{figure}
\centering
\subfigure[\label{fig:11a}]{
\centering
\includegraphics[width=0.48\columnwidth]{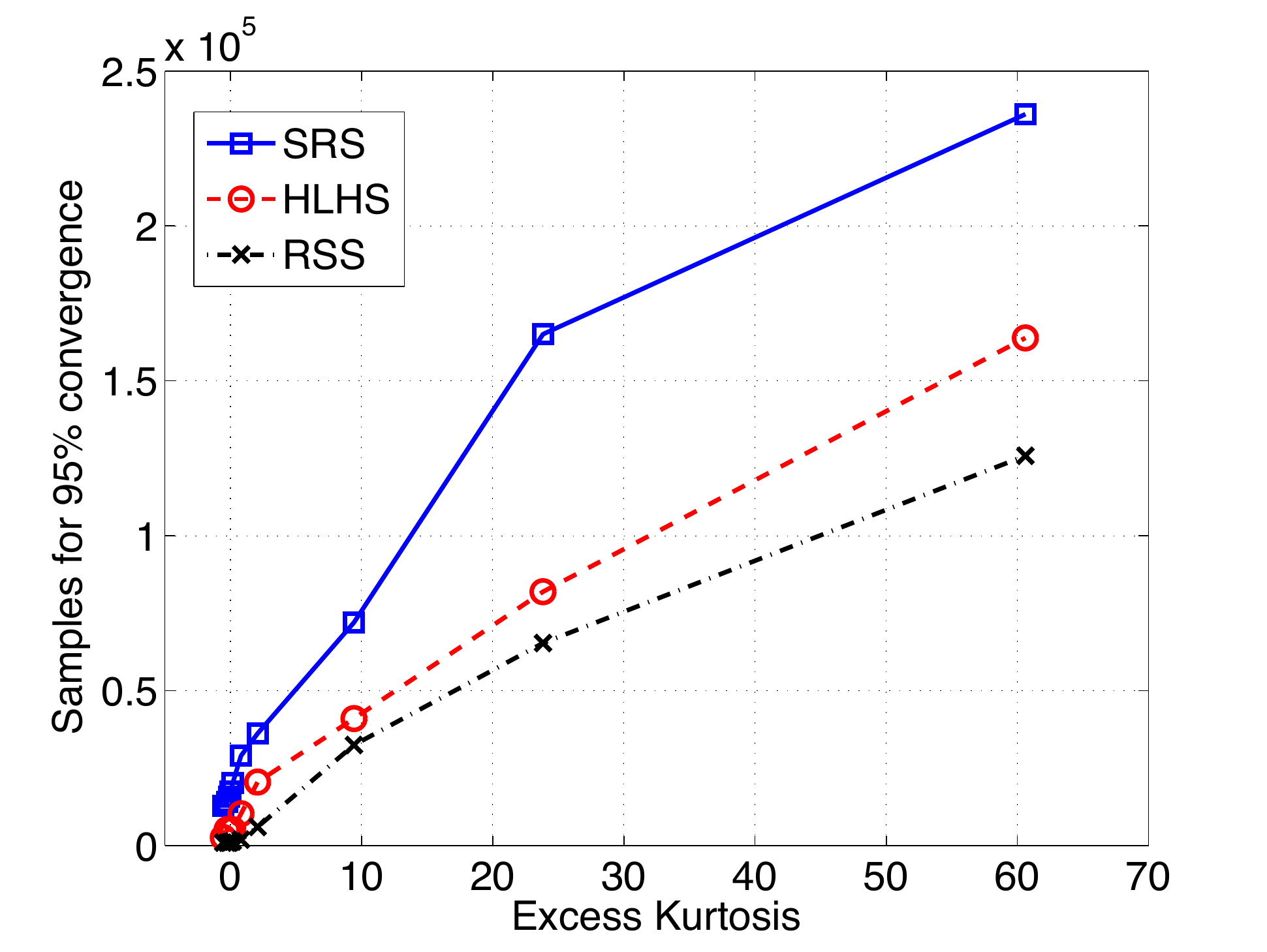}
}
\subfigure[\label{fig:11b}]{
\centering
\includegraphics[width=0.48\columnwidth]{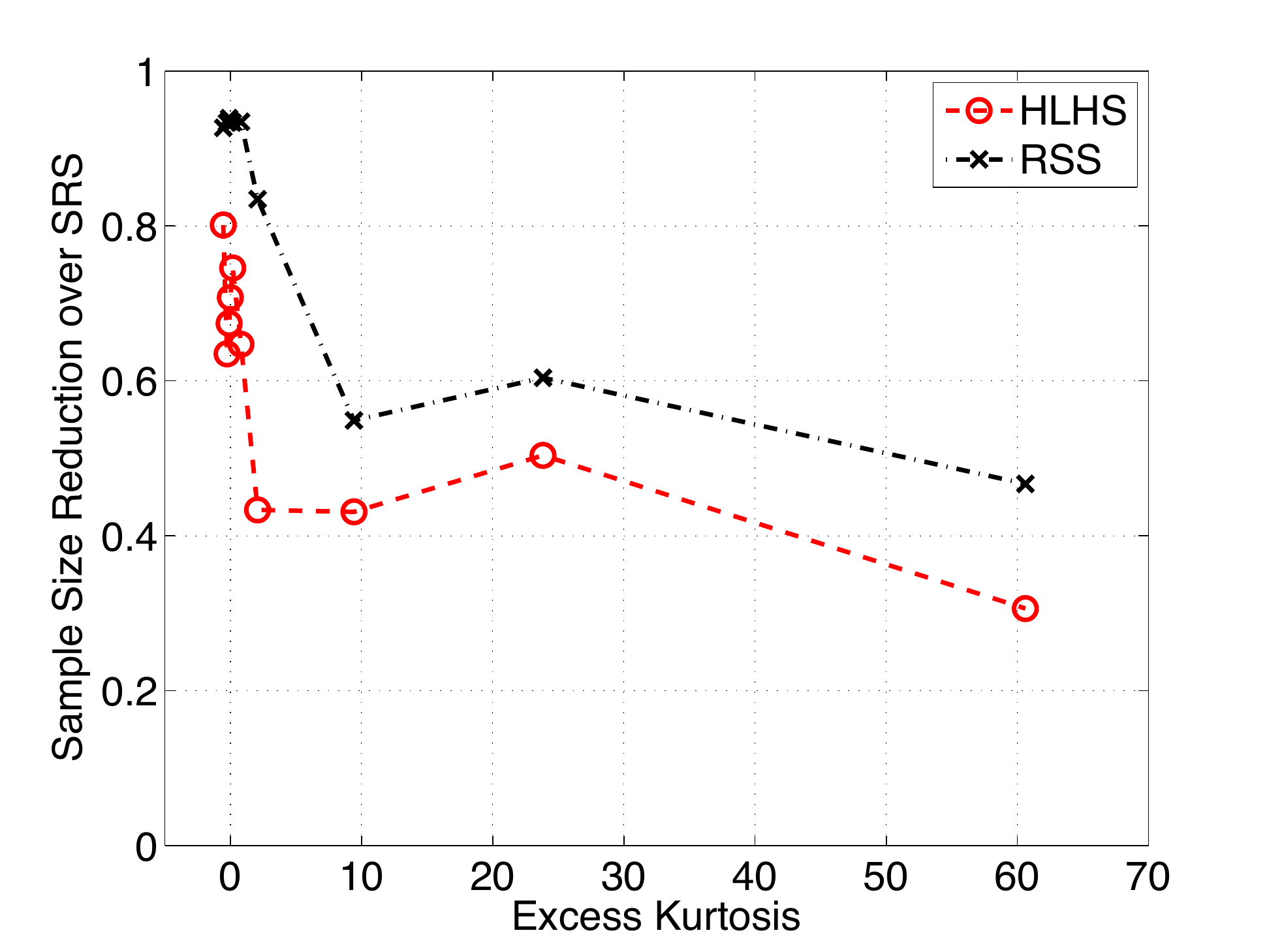}
}
\caption{(a) Number of sample required to obtain 95\% convergence using SRS, HLHS, and RSS as a function of response kurtosis. (b) Reduction in sample size for HLHS and RSS compared to SRS as a function of response kurtosis.}
\label{fig:comparison}
\end{figure}
Figure \ref{fig:11a} shows the number of samples required to achieve convergence for 95\% of the calculation sets using SRS, HLHS, and RSS as a function of response excess kurtosis. In general, a larger number of samples are required for convergence as the kurtosis increases although the proposed RSS method consistently outperforms both SRS and HLHS in this regard. Figure \ref{fig:11b} shows the relative reduction in sample size over SRS for both HLHS and RSS defined by:
\begin{linenomath}
\begin{equation}
\dfrac{N_{S}-N_{H/R}}{N_{S}}
\end{equation}
\end{linenomath}
where $N_S$, $N_{H/R}$ are the number of samples required for 95\% convergence (plotted on the left) for SRS and HLHS/RSS respectively. By this measure, the RSS method affords upwards of a 90\% reduction in sample size when compared to SRS for responses with low kurtosis. Initially, the effectiveness drops precipitously as the kurtosis increases. After kurtosis $\approx$ 2, it stabilizes while still providing significant sample size reduction. Notice also that RSS consistently produces a 10-20\% greater reduction in sample size than HLHS. A complete tabular summary of this study is provided in the Appendix.

%Another statistical estimator, $T$, that is useful in evaluating the response probabilistically is the empirical CDF computed using Eq.\ \eqref{eqn:4} for a stratified design. Empirical CDFs have been constructed from 100,000 stratified samples for each of the distributions A-J using Eq.\ \eqref{eqn:4}. These CDFs are assumed to be sufficiently accurate to be considered the ``true" CDF $D_Y(y)$. 

Next, we consider the convergence of the empirical CDFs. The ``true" CDFs $D_Y(y)$ for each case A-J are determined by Monte Carlo simulation with 100,000 stratified samples. Convergence to $D_Y(y)$ is measured for the first 10,000 samples using the area validation metric (a.k.a.\ the Minkowski $L_1$ norm) as \citep{Roy_Oberkampf_CMAME_11}:
\begin{linenomath}
\begin{equation}
\delta(l)=\int_{-\infty}^{\infty}|D_Y(y)-\hat{D}_Y^{(l)}(y)|dy.
\label{eqn:14}
\end{equation} 
\end{linenomath}
where $\hat{D}_Y^{(l)}(y)$ denotes the empirical CDF computed from $l$ samples. Sample plots showing the convergence of $\delta(l)$ using SRS, HLHS, and RSS are provided in Figure \ref{fig:5}. While HLHS and RSS provide comparable rates of convergence, HLHS allows only those sample sizes shown with an `x'; highlighting the need to add many samples during an HLHS sample size extension. RSS meanwhile affords extension by as many or as few samples as desirable for the application. 
%Lastly, note that the area validation metric is known to possess the units of the random quantity and therefore direct comparison between the $\delta$-values for the different distributions is not appropriate as the distributions possess quite different scales as indicated in Table \ref{tab:1}.
\begin{figure}[H]
\centering
\includegraphics[width=0.49\columnwidth]{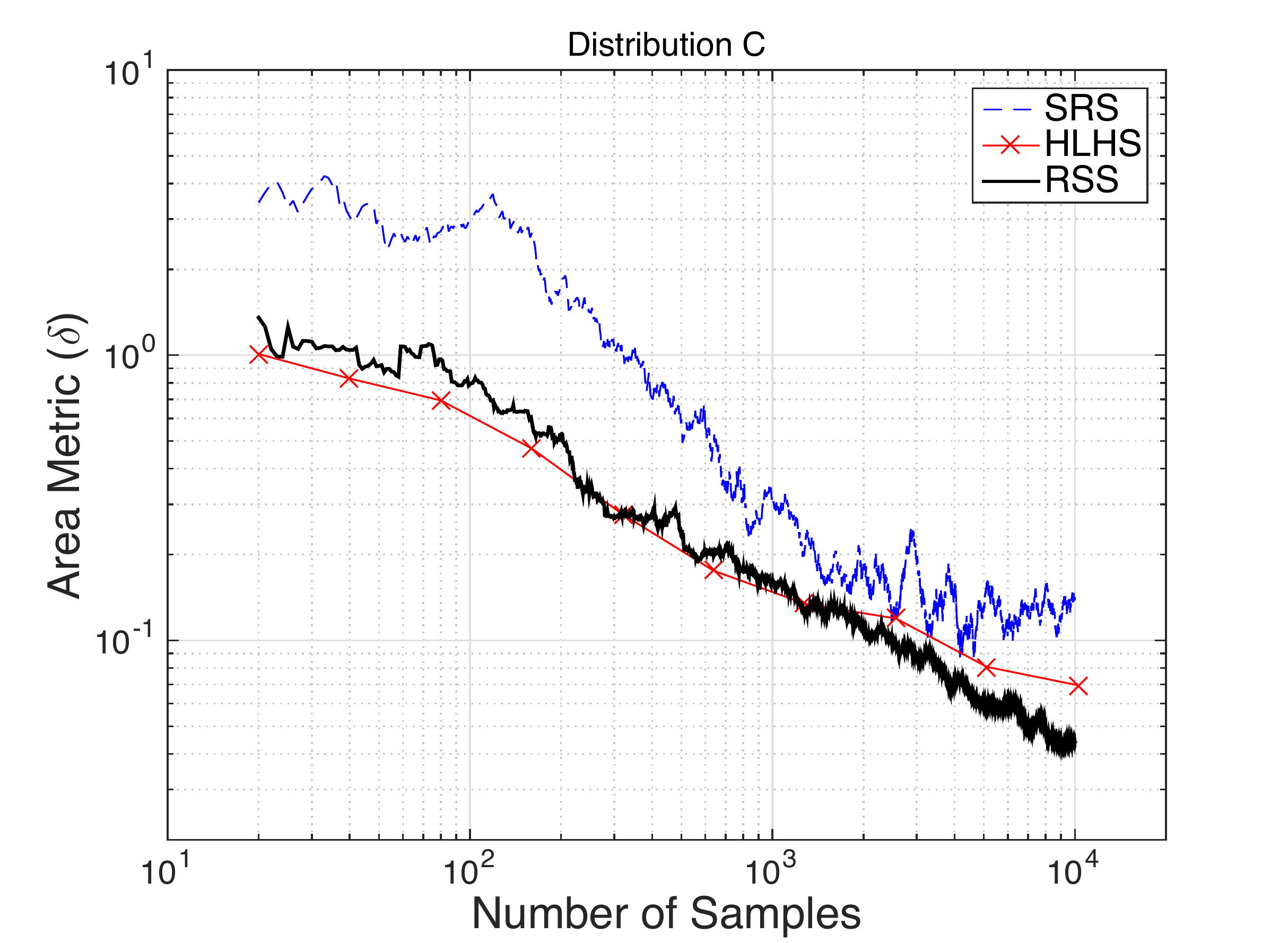}
\includegraphics[width=0.49\columnwidth]{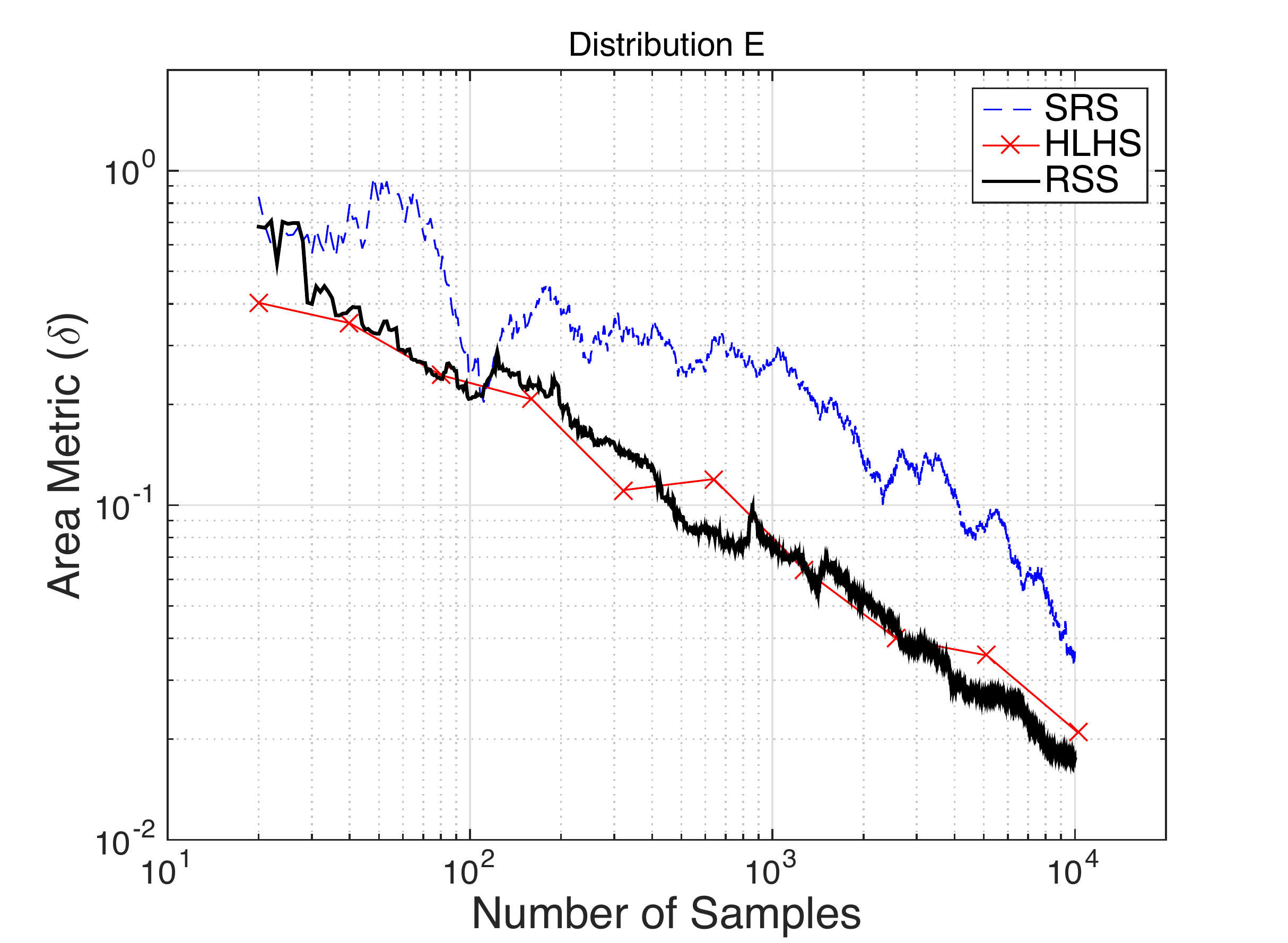}
\includegraphics[width=0.49\columnwidth]{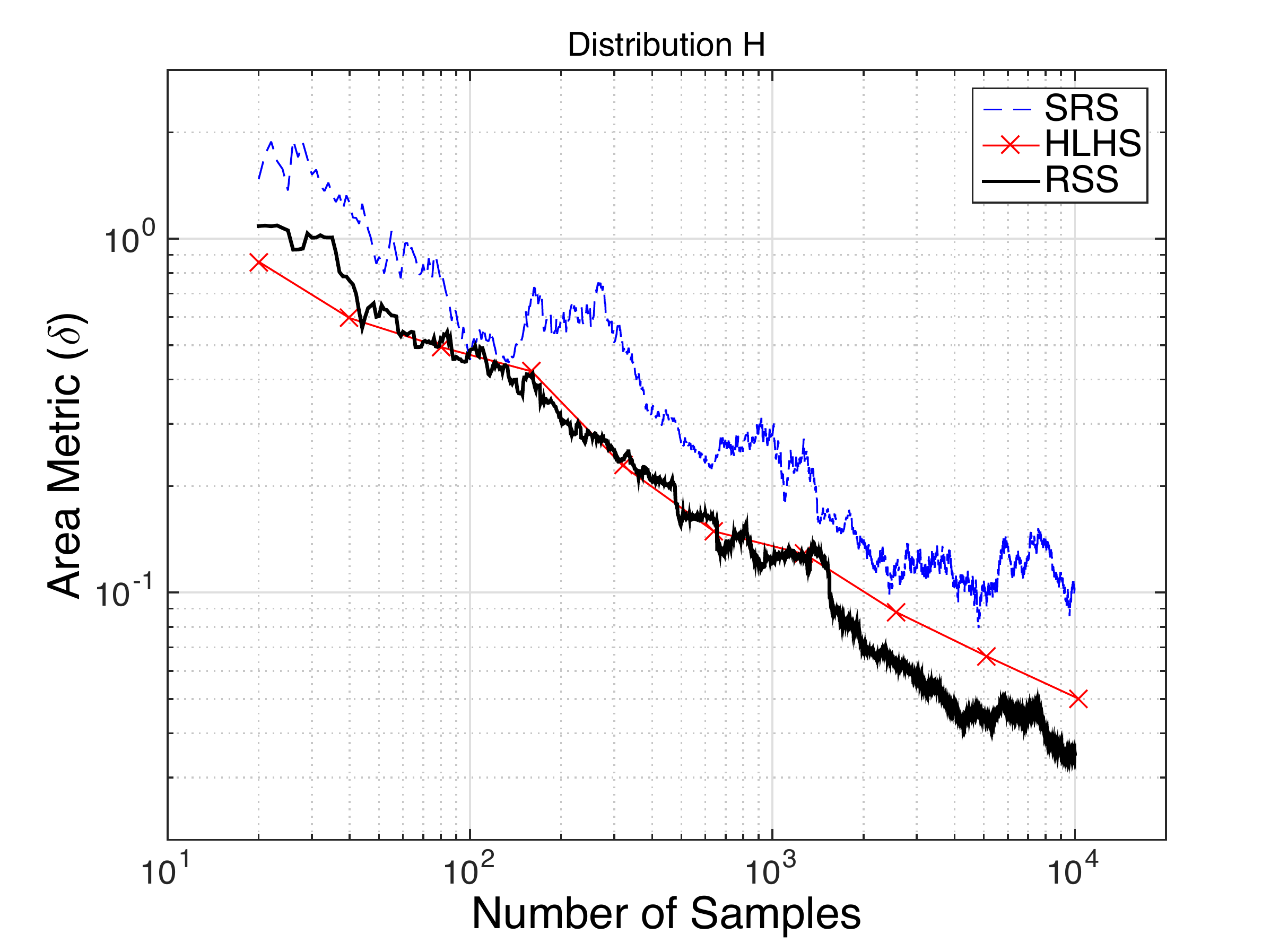}
\includegraphics[width=0.49\columnwidth]{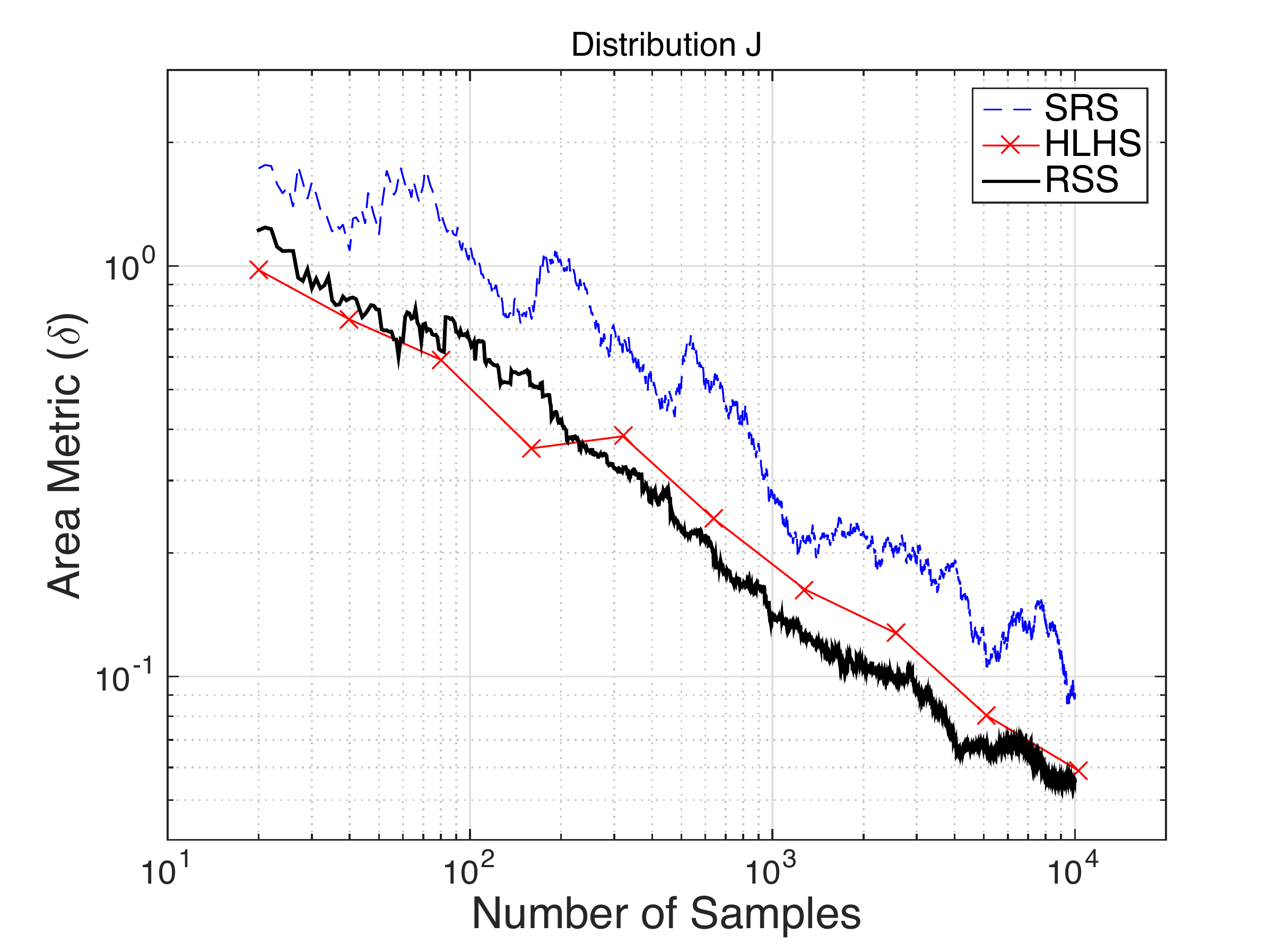}
\caption{Convergence of the empirical CDF to the ``true" CDF for select cases using SRS, HLHS, and RSS.}
\label{fig:5}
\end{figure}

%%%%%%%%%%%%%%%%%%%%%%%%%%%%%%%%%%%%%%%%%%%%%%%%%%%%%%
% 										SECTION 7.2
%%%%%%%%%%%%%%%%%%%%%%%%%%%%%%%%%%%%%%%%%%%%%%%%%%%%%%

\subsection{Structural response to underwater shock}

Growing emphasis is being placed on uncertainty quantification in computational physical modeling and simulation where it is desirable to evaluate the response variability for large and complex physical systems.  We consider the case of the multi-physical response of a floating structure to underwater shock. These calculations involve large Lagrangian structural finite element models (with tens of thousands of elements) coupled with extremely large Eulerian fluid models (with tens to hundreds of millions of fluid elements) as depicted in Figure \ref{fig:6}(a). Such calculations often require several days to weeks of computation time on massively parallel computers.  An accompanying paper on the validation of computational models for this problem was recently published by the authors \cite{Teferra_et_al_RESS_14}.

\begin{figure}[H]
\centering
\includegraphics[width=1.\columnwidth]{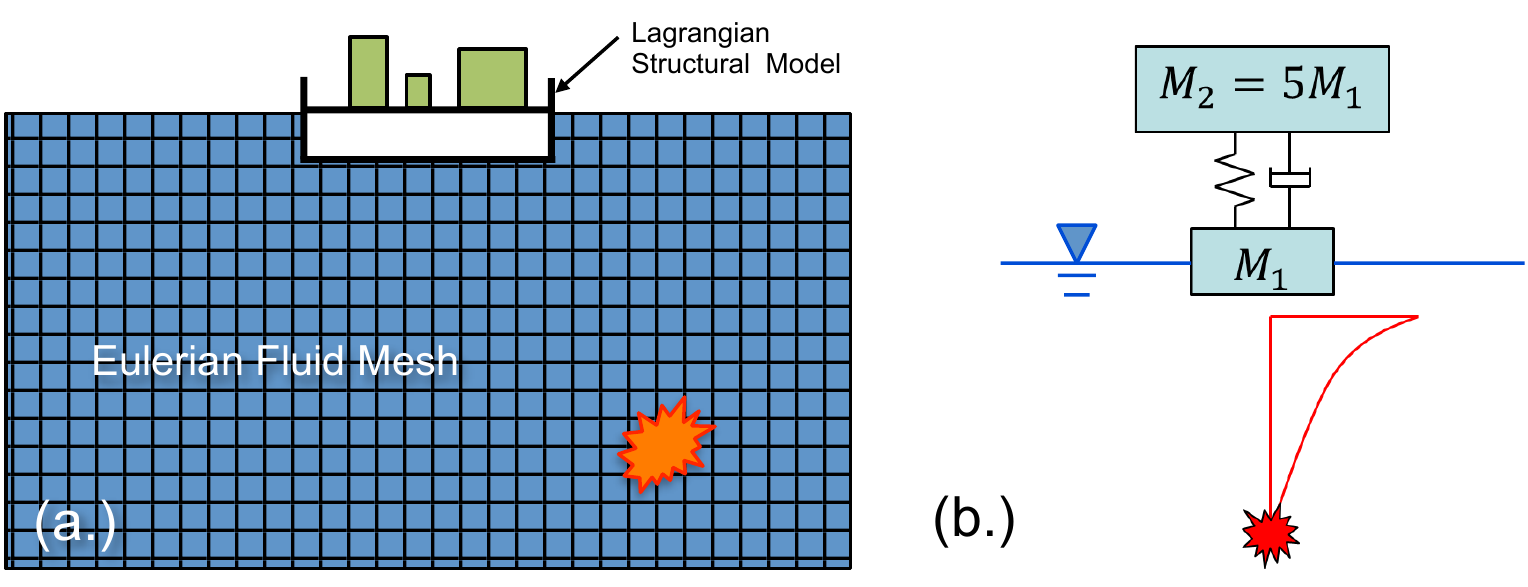}
\caption{(a.) Schematic of coupled fluid structure model for underwater shock response of a surface vessel. (b.) Simplified model used for demonstration purposes. }
\label{fig:6}
\end{figure}

To demonstrate the tractability of uncertainty quantification for these problems, it was necessary to show that statistical convergence could be accomplished for the response in a small number of samples. To do this, a simplified representation of the system was constructed as shown in Figure \ref{fig:6}(b.). The simplified model is an uncoupled one dimensional, two degree of freedom damped system consisting of a mass $M_1$ representing a primary structure supporting a much larger piece of equipment with $M_2=5M_1$. Simplified representations of this form have been used for decades to study fluid-structure interaction in near-surface underwater shock. Early work by Bleich and Sandler \citep{Bleich_Sandler_IJSS_70} presented the effects of bulk surface cavitation on the velocity response of a floating structure subject to underwater shock in a bilinear cavitating fluid. This was followed by the development of a numerically uncoupled treatment of the surface shock response problem in 1D that accounts for the effects of cavitation by DiMaggio et al.\ \cite{DiMaggio_et_al_JAM_81}. In a few more recent papers describing a numerical technique for capturing the effects of cavitation in finite elements models, Sprague and Geers \citep{Sprague_Geers_SaV_01, Sprague_Geers_JCP_03}, with corrections by Stultz and Daddazio \citep{Stultz_Daddazio_SaV_02}, presented the two mass problem (neglecting the effects of structural damping) as it is studied here.  

Unlike the previous authors, we are not specifically concerned with the effects of bulk cavitation and the complex physics that drive the response following the initial kickoff. Instead, we are interested in capturing the variability in peak velocity of the mass $M_2$ given uncertainties in the model using a minimal number of samples. This is motivated by the need to verify that simulation-based uncertainty quantification is a tractable option for the more complex multi-physics model. By considering the sensitivity of the model to various parameters and assessing realistic uncertainties associated with this problem, we identify two uncertain quantities. First, damping of real structures is notoriously difficult to ascertain and model with any accuracy. This is further complicated by the dissipation of energy resulting from fluid-structure interaction. Linear viscous damping ($\xi_d$) is assumed and considered to possess a truncated (non-negative) normal distribution with mean $\mu_d=0.025$ and standard deviation $\sigma_d=0.01$. Second, noticeable variabilities are known to arise in the peak pressure produced from underwater detonation of certain explosives. Here, we consider charge weight to be a normal random variable with mean $\mu_c=117$ lb.\ TNT and standard deviation $\sigma_c=1.17$ lb.\ TNT to account for this variability in peak pressure.

%%%%%%%%%%%%%%%%%%%%%%%%%%%%%%%%%%%%%%%%%%%%%%%%%%%%%%
% 										SECTION 7.2.1
%%%%%%%%%%%%%%%%%%%%%%%%%%%%%%%%%%%%%%%%%%%%%%%%%%%%%%

\subsubsection{Modified bootstrap convergence evaluation}
To evaluate statistical convergence, a bootstrap procedure is used that enables us to estimate approximate confidence bounds for response statistics of interest. The traditional bootstrap procedure, developed by Efron \citep{Efron_AoS_79}, performs a random resampling (with replacement) of a dataset to obtain a large number of surrogate sample sets. Statistical analysis of these so-called bootstrap samples provides an estimate of the variability of the computed statistic. The traditional bootstrap method resamples from the original dataset with each sample possessing equal probability of occurrence. That is, the probability of selecting sample $S_l$ from a set of $N$ samples is given by $P[S_l]=1/N$. However, in stratified samples, and more specifically the RSS procedure developed herein, the samples possess associated weights that are not necessarily equal. To account for this unequal weighting (imbalanced stratified design), a modified bootstrap method is proposed as follows:
\begin{enumerate}
\item Find the least common denominator $D$ for all sample weights and create a modified dataset of size $D$ by duplicating samples such that each sample in the modified set has equal weight.
\item Resample a set of $N$ points (bootstrap sample) with replacement from the modified set.
\item Compute the statistic of interest (bootstrap statistic) from the bootstrap sample.
\item Repeat a large number of times to obtain an empirical distribution for the bootstrap statistic.
\end{enumerate}
The proposed resampling process accounts for the sample weights but, it should be emphasized that the bootstrap datasets produced from this method \emph{are not} stratified samples. In any given bootstrap resample, it is likely that certain strata will not be represented. The consequence of this loss of stratification is a potential increase in the variance of the bootstrap statistics. 

This new bootstrap method provides an admittedly imperfect monitor of convergence. The shortcomings of the proposed monitor motivate the need for further research to establish a more effective and general monitor.

%%%%%%%%%%%%%%%%%%%%%%%%%%%%%%%%%%%%%%%%%%%%%%%%%%%%%%
% 										SECTION 7.2.2
%%%%%%%%%%%%%%%%%%%%%%%%%%%%%%%%%%%%%%%%%%%%%%%%%%%%%%

\subsubsection{Results}

The bootstrap procedure outlined in the previous section is used to determine confidence intervals for the response statistics from the empirical CDF of the bootstrap statistics. We are interested in evaluating statistics of the peak velocity of mass $M_2$ denoted by $V_2$. Convergence is defined such that the bootstrap 95\% confidence bounds for each statistic of interest are sufficiently narrow. Denoting a computed response statistic by $T$ with bootstrap 95\% confidence intervals given by $T_{95}^l$ and $T_{95}^u$ corresponding to the lower and upper bound respectively, the convergence criterion is expressed as follows:
\begin{linenomath}
\begin{equation}
\label{eqn:11}
\left|\dfrac{T_{95}^u-T_{95}^l}{T}\right|\le \epsilon_{th}
\end{equation}
\end{linenomath}
Note that the convergence criterion based on confidence intervals from an empirical CDF was selected in order to account for skewness of the likely non-Gaussian bootstrap CDF. A convergence criterion based on variance of the bootstrap statistics, for example, would not account for this reality.

Convergence was compared for RSS and SRS. Considering the performance comparison present previously and the need for maximal flexibility in minimizing sample size, HLHS was not considered a good candidate for these analyses. Initially 90,000 analyses were performed using SRS and the convergence evaluated. To achieve superior convergence in all statistics, far fewer calculations were required using RSS. In all, 4,000 RSS analyses were conducted although less than this were necessary to match the convergence or SRS. Plots of the first four moments (with bootstrap confidence intervals) as a function of sample number for peak velocity $V_2$ are provided in Figure \ref{fig:7} using both RSS (left) and SRS (right).
%HLHS was not considered because it was determined that the exponential sample size growth rate would not be conducive to the very small sample sizes necessary for this problem. Using RSS, 4,000 samples were performed while for SRS, 90,000 samples were performed.  Clearly the RSS statistics are converging much more rapidly. 
\begin{figure}[H]
\centering
\includegraphics[width=0.4\columnwidth]{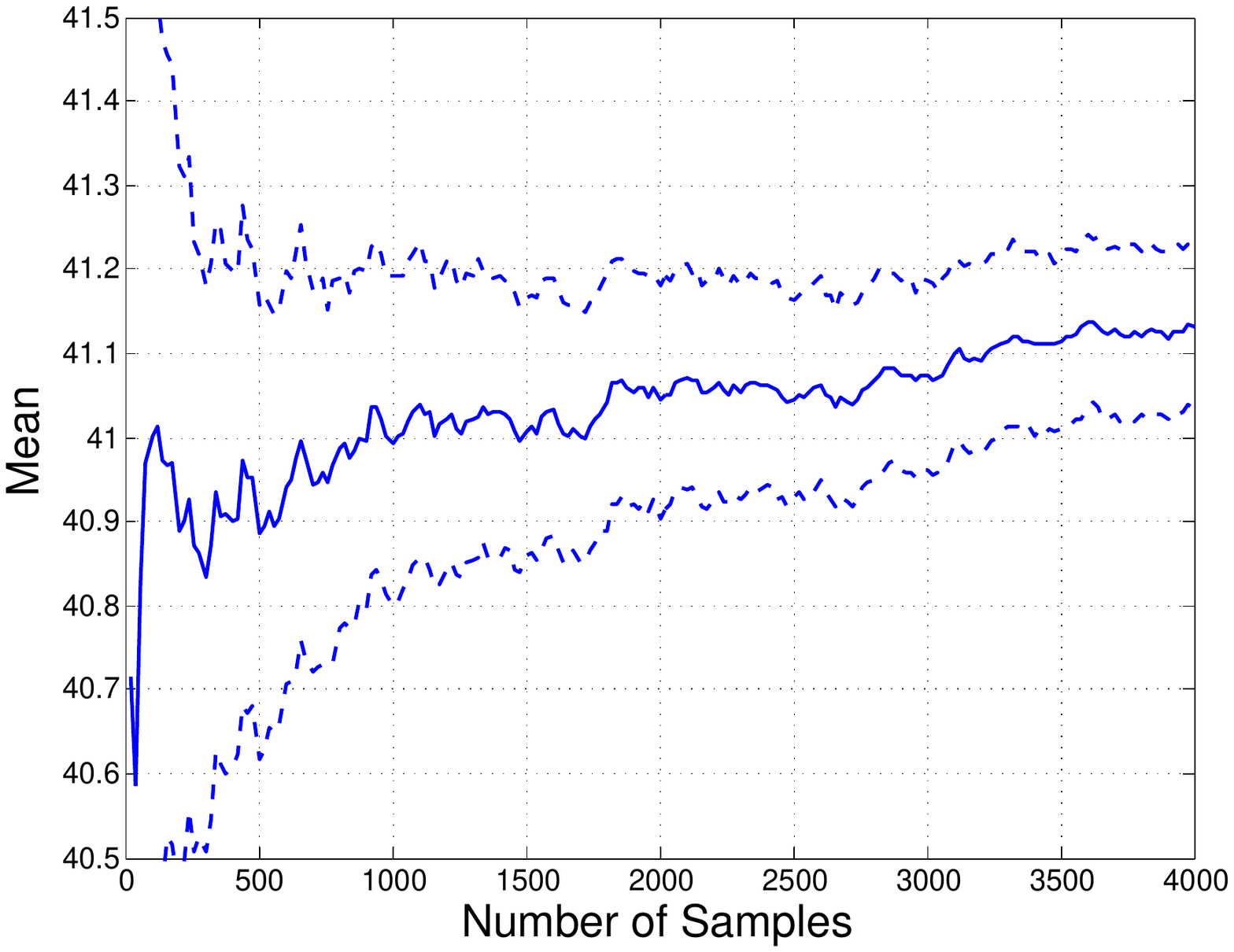}
\includegraphics[width=0.4\columnwidth]{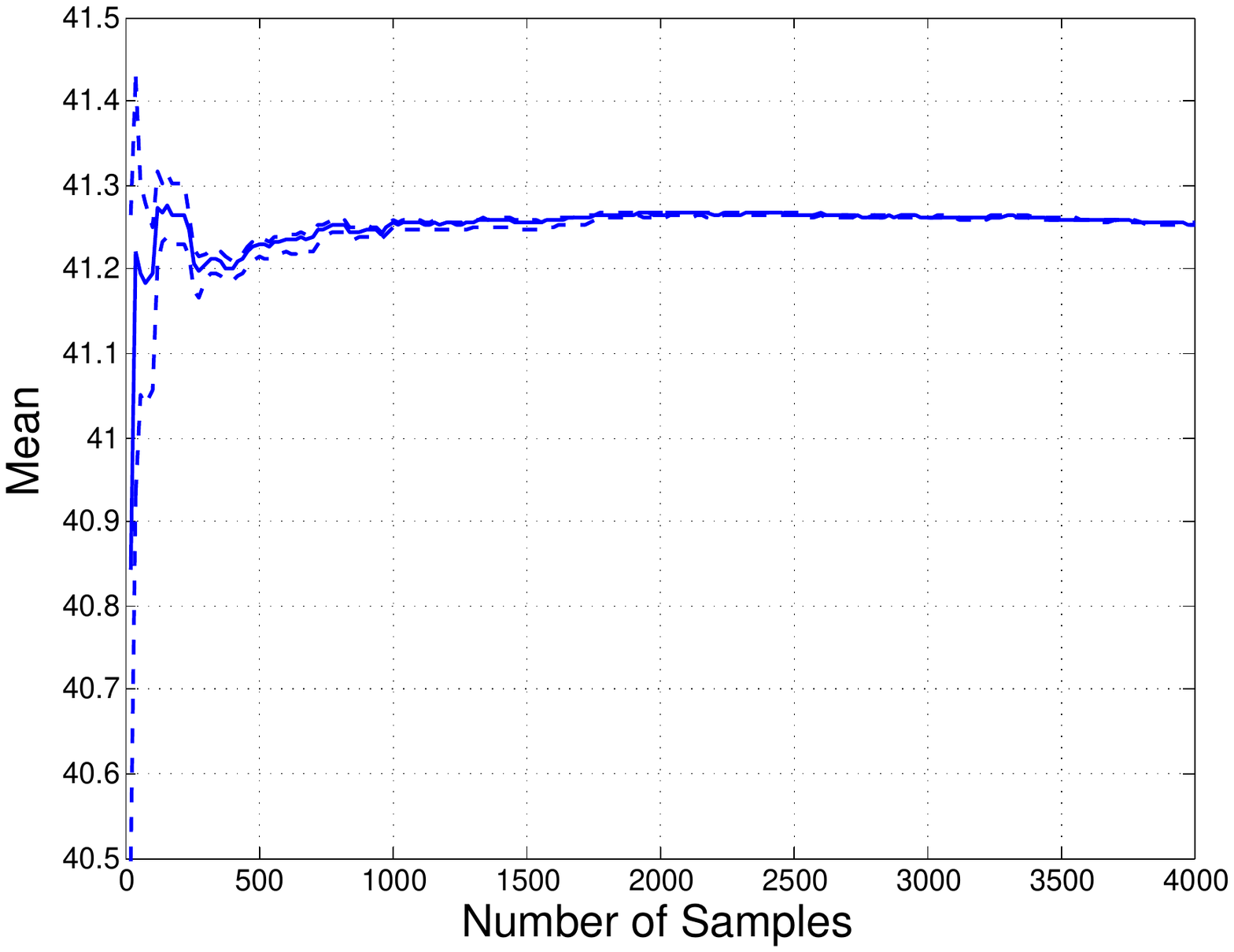}
\includegraphics[width=0.4\columnwidth]{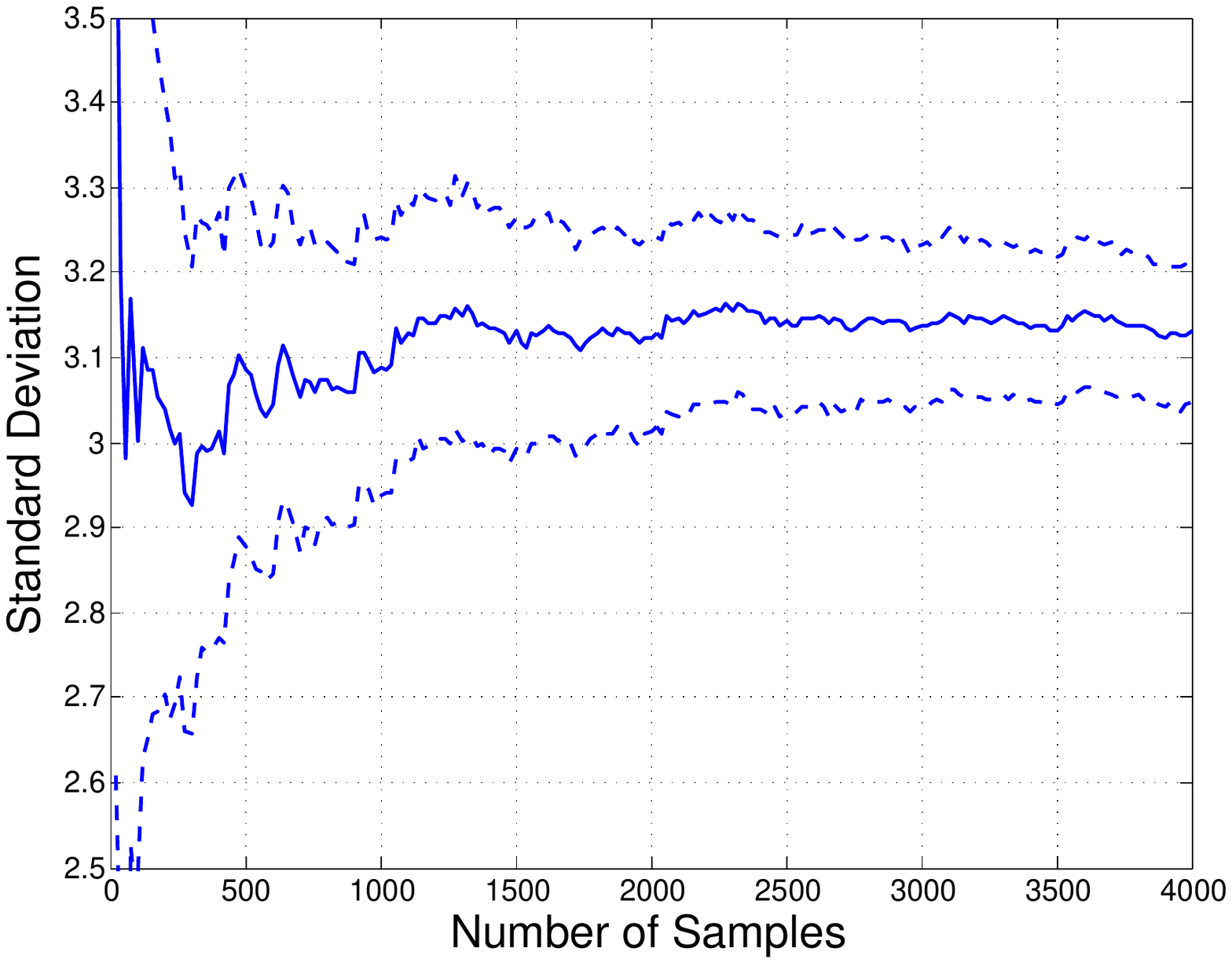}
\includegraphics[width=0.4\columnwidth]{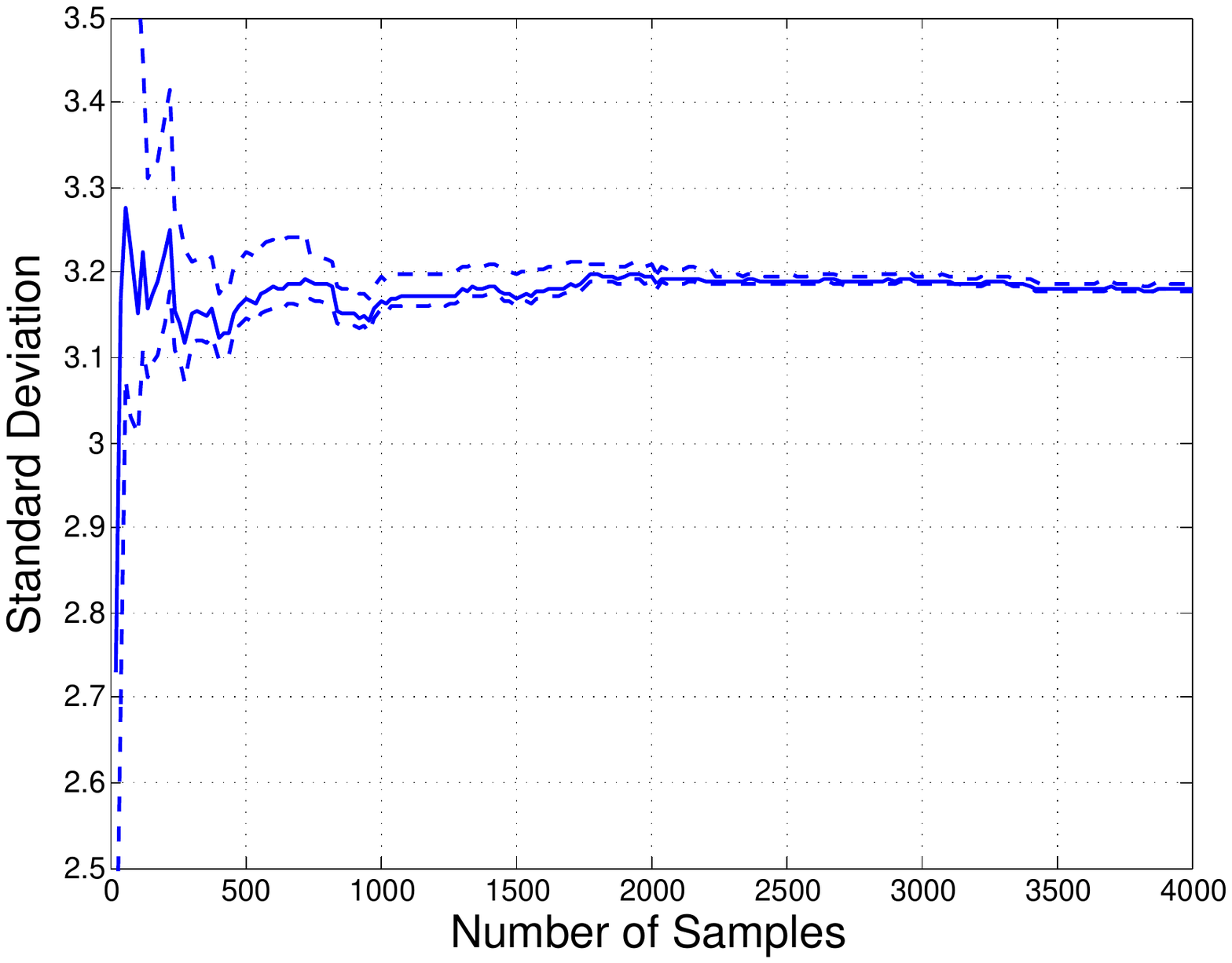}
\includegraphics[width=0.4\columnwidth]{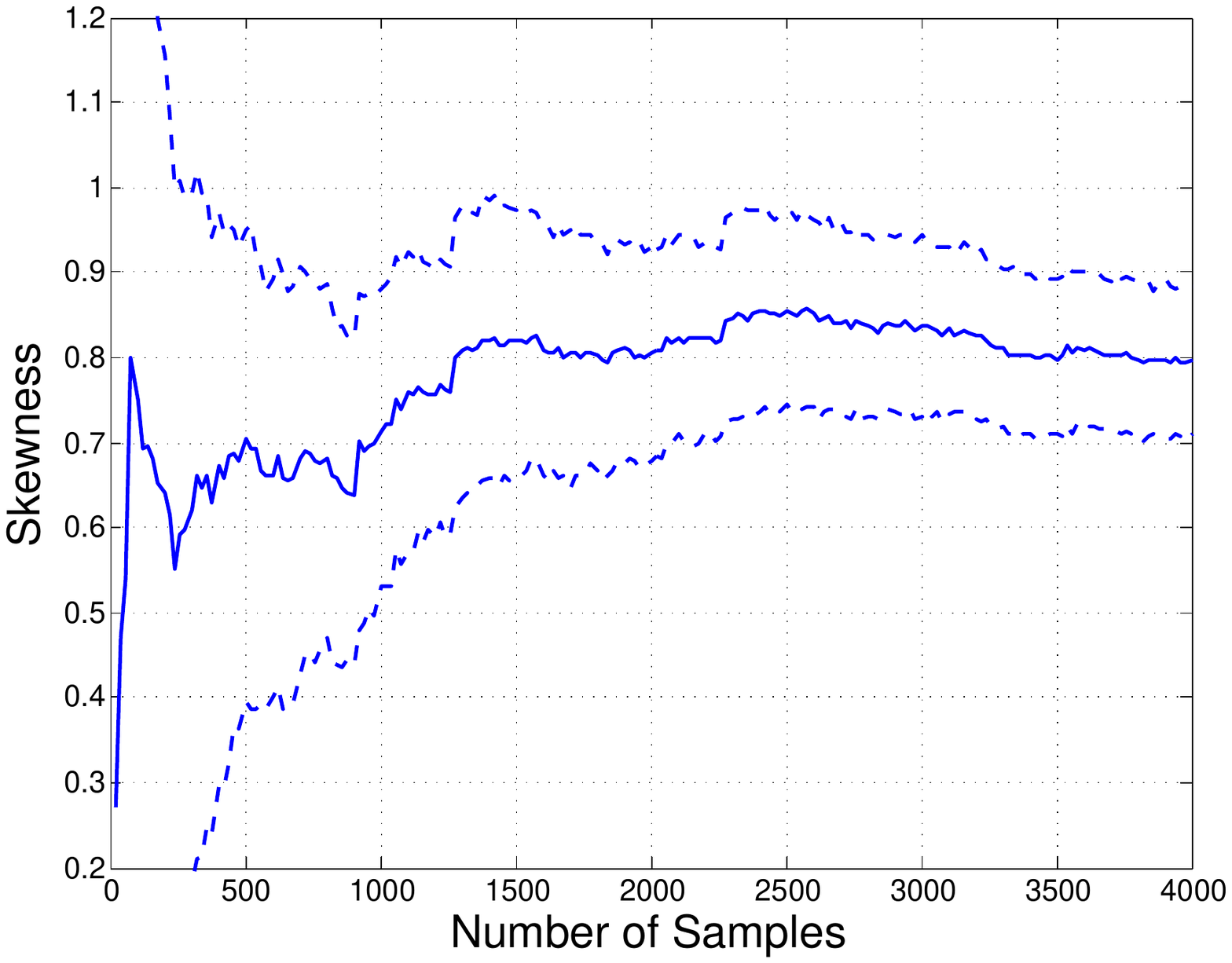}
\includegraphics[width=0.4\columnwidth]{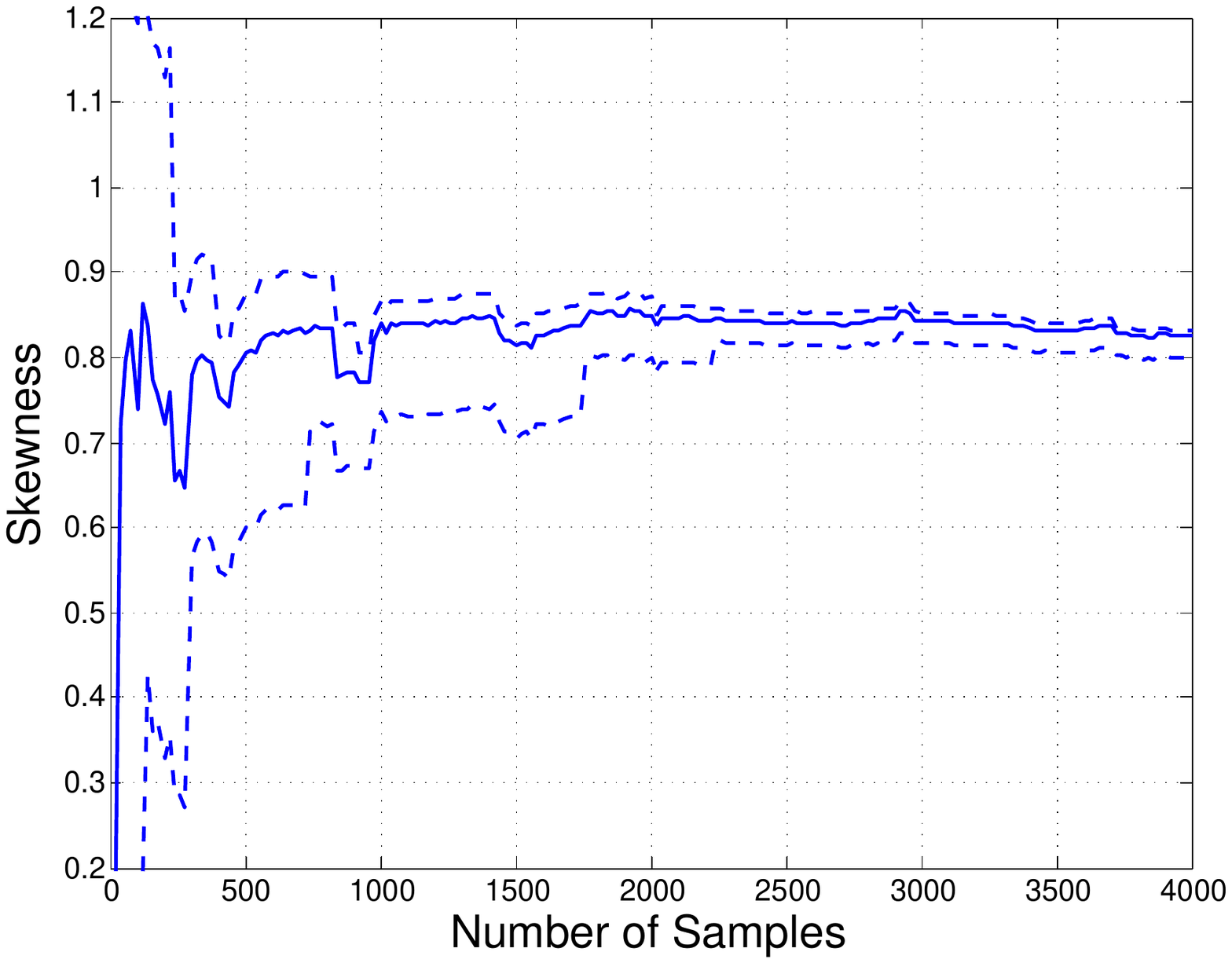}
\includegraphics[width=0.4\columnwidth]{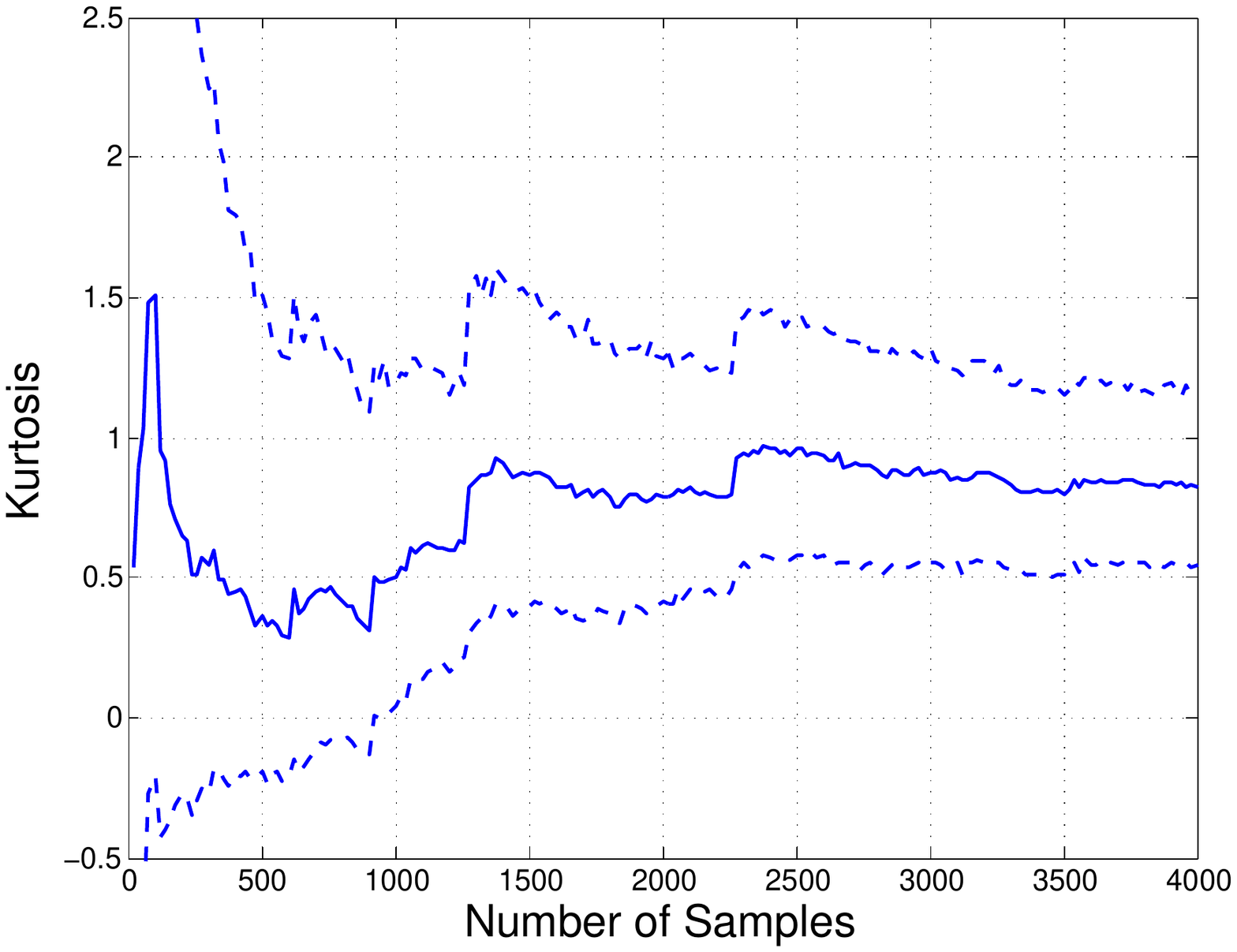}
\includegraphics[width=0.4\columnwidth]{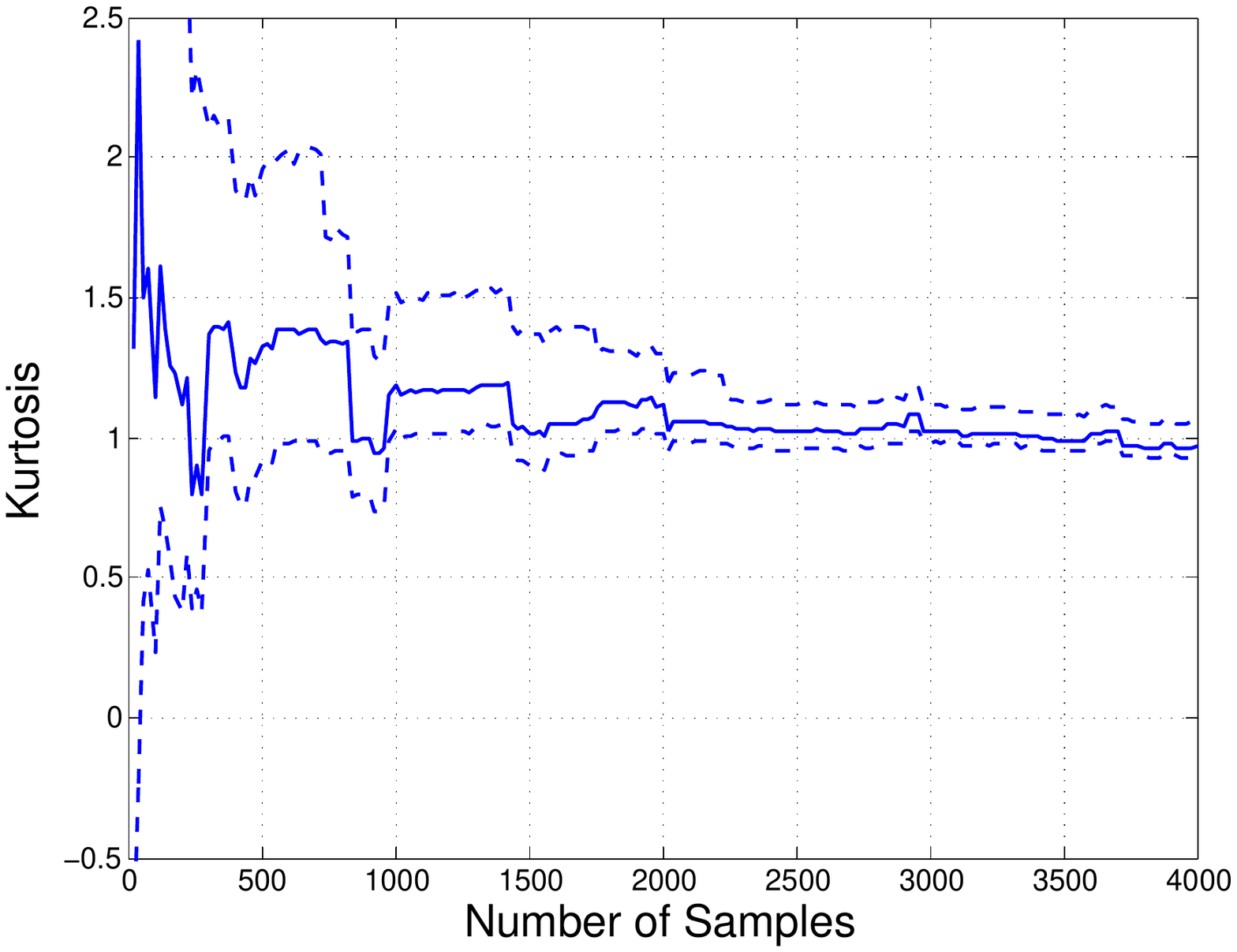}
\caption{Evolution of response statistics with added samples for peak velocity of mass $M_2$ using SRS (left) and RSS (right) showing 95\% bootstrap confidence intervals for 4,000 samples.}
\label{fig:7}
\end{figure}

By establishing convergence thresholds, we determine the number of samples required to produce sample statistics that are considered adequate. Table \ref{tab:3} shows the number of samples required for various convergence criteria of the first four statistical moments for both RSS and SRS. It is noted that the mean value converges extremely rapidly with a cutoff threshold of $0.1\%$ requiring only 300 samples for RSS compared to nearly 85,000 for SRS. Furthermore, a reasonably small error can be gained in standard deviation with only hundreds of samples. Skewness and Kurtosis meanwhile require thousands of samples to converge even using RSS. Nonetheless, the improvement in convergence over SRS is an order of magnitude or better. 
\begin{table}
\caption{Number of samples required to achieve various levels of convergence. -'s indicate that this level of convergence was not obtained in the number of samples performed (4,000 for RSS and 90,000 for SRS).}
\centering
\begin{tabular}{lllllll}
\hline
\bf{$\mathcal{S}_{V_2}$} & \bf{$15\%$} & \bf{$5\%$} & \bf{$2.5\%$} & \bf{$1\%$} & \bf{$0.5\%$} & \bf{$0.1\%$} \\
\hline
$\mu$ & \\
\hspace{3pt} SRS & $<20$ & 40 & 140 & 840 & 3,480 & 84,900\\
\hspace{3pt} RSS & $<20$ & $<20$ & $<20$ & $60$ & $140$ & 300\\
$\sigma$ & \\
\hspace{3pt} SRS & $460$ & $4,500$ & $18,800$ & - & - & -\\
\hspace{3pt} RSS & $120$ & $280$ & $460$ & $1,760$ & $3,760$ & - \\
$\gamma_{1}$ & \\
\hspace{3pt} SRS & $10,750$ & $73,350$ & - & - & - & - \\
\hspace{3pt} RSS & $1,760$ & $2,240$ & - & - & - & - \\
$\gamma_{2}$ & \\
\hspace{3pt} SRS & $69,700$ & - & - & - & - & -\\
\hspace{3pt} RSS & $2,920$ & - & - & - & - & -\\
\hline
\end{tabular}
\label{tab:3}
\end{table}

These data are encouraging when considering the prospects for an uncertainty quantification study for the large-scale multi-physics problem, particularly if the study is interested in evaluating low-order statistics (mean and standard deviation) of various response quantities. As indicated by the plots in Figure \ref{fig:7}, the method converges very rapidly for these statistics even when the response quantity of interest (here peak velocity of mass $M_2$) deviates appreciably from Gaussian. As a result, this large-scale multi-physics study was performed but details of this study are beyond the scope of this paper.

%%%%%%%%%%%%%%%%%%%%%%%%%%%%%%%%%%%%%%%%%%%%%%%%%%%%%%
% 
% 										SECTION 8
%
%%%%%%%%%%%%%%%%%%%%%%%%%%%%%%%%%%%%%%%%%%%%%%%%%%%%%%

\section{Conclusions}
In this work, we promote an adaptive approach to Monte Carlo-based UQ rooted in stratified sampling. To motivate the advantages of SS for certain classes of problems the space-filling, orthogonality, and projective properties of SS are studied and compared with simple random sampling (SRS) and Latin hypercube sampling (LHS). It is shown that SS provides superior properties to many low-to-moderate dimensional problems - especially when strong variable interactions occur. To enable the adaptive approach, a new sample size extension methodology - Refined Stratified Sampling (RSS) - is proposed that adds samples sequentially by dividing the existing strata. RSS is proven to reduce variance compared to existing sample size extension methods for SS that add samples to existing strata and is shown to possess comparable or improved convergence when compared to existing extension methods for LHS while affording maximal flexibility. Using RSS, sample can be added one-at-a-time while sample size grows exponentially using hierarchical LHS. Optimality of the stratum refinement is examined and its extension to high dimensional applications discussed. Several future research opportunities along these lines are identified that may facilitate significant breakthroughs in sample-based UQ if achieved. In particular, the potential ability to adaptively identify a partitioning (stratification) of the input space that optimizes the partitioning of the output space could serve to minimize sample size dramatically compared to the current state-of-the-art. Additionally, the RSS method, when combined with new methods for decomposing the sample space into low-dimensional subspaces will allow its extension to very high dimensional problems.

Two examples are presented. The first involved the forward propagation of uncertainty through a low-dimensional stochastic function where convergence can be explicitly evaluated at each sample size extension. The second involved a real multi-physics computational model possessing uncertain parameters for the response of a floating structure subject to an underwater shock. In the second example, a new bootstrap procedure was proposed for resampling from a stratified design to compute confidence intervals for response statistics.

\section*{Appendix}

Example 1 analyzes the number of simulations required to achieve convergence for the 3-dimensional stochastic function given in Eq. \eqref{eqn:cubic_function} with 10 different sets of input distributions using SRS, LHS, and RSS. Table \ref{tab:2} provides the results of this study for all three sample size extension methods and each set of input distributions defined in Table \ref{tab:1}. The columns show the number of samples required to achieve different proportions of converged calculations. For example, 10\% of the 1,000 (or 100) calculation sets converged within 72, 40, and 42 samples for Distribution A using SRS, HLHS, and RSS respectively. In general, the RSS method shows a drastic reduction (often greater than an order of magnitude) in the number of samples needed for convergence when compared to SRS and even affords a significant savings when compared to HLHS. In particular, it is apparent that oversampling is necessary with HLHS given its exponentially increasing sample size.

\begin{table}
%\begin{table}[!ht]
\caption{Number of samples required to achieve different proportions of convergence from 1,000 sample sets using SRS, HLHS, and RSS.}
\centering
\begin{tabular}{l|llllll}
\hline
 & \multicolumn{6}{c}{SRS}\\\hline
Dist. & 10\% & 25\% & 50\% & 75\% & 90\% & 95\% \\\hline
A & 72 & 156 & 636 & 2,663 & 6,878 & 12,880\\
B & 74 & 199 & 710 & 2,713 & 7,669 & 14,023\\
C & 73 & 189 & 687 & 2,891 & 8,969 & 15,710\\
D & 81 & 217 & 788 & 3,086 & 9,453 & 17,503\\
E & 93 & 237 & 901 & 3,693 & 11,654 & 20,151\\
F & 101 & 276 & 993 & 4,080 & 15,513 & 29,009\\
G & 125 & 371 & 1,777 & 6,938 & 21,263 & 36,146\\
H & 134 & 496 & 2,301 & 10,919 & 36,507 & 71,982\\
I & 185 & 619 & 4,329 & 24,708 & 84,181 & 165,086\\
J & 199 & 1,004 & 7,136 & 32,866 & 115,060 & 236,090\\
\hline\hline
 & \multicolumn{6}{c}{}\\
 & \multicolumn{6}{c}{HLHS}\\\hline
Dist. & 10\% & 25\% & 50\% & 75\% & 90\% & 95\% \\\hline
A & 40 & 80 & 160 & 640 & 1,280 & 2,560\\
B & 40 & 160 & 320 & 1,280 & 2,560 & 5,120\\
C & 80 & 160 & 320 & 1,280 & 2,560 & 5,120\\
D & 40 & 160 & 640 & 1,280 & 2,560 & 5,120\\
E & 80 & 160 & 640 & 1,280 & 2,560 & 5,120\\
F & 160 & 640 & 1,280 & 2,560 & 10,240 & 10,240\\
G & 320 & 1,280 & 2,560 & 5,120 & 10,240 & 20,480\\
H & 640 & 2,560 & 5,120 & 10,240 & 40,960 & 40,960\\
I & 640 & 2,560 & 5,120 & 20,480 & 40,960 & 81,920\\
J & 1,280 & 5,120 & 10,240 & 40,960 & 81,920 & 163,840\\
\hline\hline
& \multicolumn{6}{c}{}\\
 & \multicolumn{6}{|c}{RSS}\\\hline
Dist. & 10\% & 25\% & 50\% & 75\% & 90\% & 95\% \\\hline
A & 42 & 73 & 144 & 322 & 505 & 942\\
B & 48 & 84 & 177 & 358 & 670 & 933\\
C & 51 & 88 & 196 & 387 & 666 & 954\\
D & 53 & 93 & 205 & 392 & 710 & 1,130\\
E & 59 & 113 & 236 & 459 & 881 & 1,339\\
F & 64 & 130 & 278 & 676 & 1,382 & 1,897\\
G & 92 & 219 & 560 & 1,621 & 3,643 & 5,975\\
H & 132 & 368 & 1,056 & 3,842 & 14,628 & 32,475\\
I & 165 & 463 & 1,861 & 7,219 & 29,408 & 65,372\\
J & 228 & 775 & 3,211 & 13,336 & 57,076 & 125,842\\
\hline
\end{tabular}
\label{tab:2}
\end{table}

\section*{References}

\bibliographystyle{elsarticle-num}
\bibliography{bibliography}{}

\begin{thebibliography}{10}
\expandafter\ifx\csname url\endcsname\relax
  \def\url#1{\texttt{#1}}\fi
\expandafter\ifx\csname urlprefix\endcsname\relax\def\urlprefix{URL }\fi
\expandafter\ifx\csname href\endcsname\relax
  \def\href#1#2{#2} \def\path#1{#1}\fi

\bibitem{Janssen_RESS_13}
H.~Janssen, Monte-carlo based uncertainty analysis: Sampling efficiency and
  sampling convergence, Reliability Engineering and System Safety
  109~(123-132).

\bibitem{Tocher_63}
K.~Tocher, The art of simulation, The English Universities Press, 1963.

\bibitem{Helton_Davis_RESS_03}
J.~Helton, F.~Davis, Latin hypercube sampling and the propagation of
  uncertainty in analyses of complex systems, Reliability Engineering and
  System Safety 81 (2003) 23--69.

\bibitem{McKay_et_al_Tech_79}
M.~McKay, R.~Beckman, W.~Conover, A comparison of three methods of selecting
  values of input variables in the analysis of output from a computer code,
  Technometrics 21~(2) (1979) 239--245.

\bibitem{Olsson_et_al_SS_03}
A.~Olsson, G.~Sandberg, O.~Dahlblom, On latin hypercube sampling for structural
  reliability analysis, Structural Safety 25 (2003) 47--68.

\bibitem{Wang_JMD_03}
G.~Wang, Adaptive response surface method using inherited latin hypercube
  design points, Transactions of the ASME, Journal of Mechanical Design 125
  (2003) 210--220.

\bibitem{Stein_Tech_87}
M.~Stein, Large sample properties of simulations using latin hypercube
  sampling, Technometrics 29~(2) (1987) 143--151.

\bibitem{Owen_JRSSB_92}
A.~Owen, A central limit theorem for latin hypercube sampling, Journal of the
  Royal Statistical Society. Series B 54~(2) (1992) 541--551.

\bibitem{Huntington_Lyrintzis_PEM_98}
D.~Huntington, C.~Lyrintzis, Improvements to and limitations of latin hypercube
  sampling, Probabilistic Engineering Mechanics 13~(4) (1998) 245--253.

\bibitem{Fang_Ma_JOC_01}
K.-T. Fang, C.-X. Ma, Wrap-around l2-discrepancy of random sampling, latin
  hypercube, and uniform designs, Journal of Complexity 17 (2001) 608--624.

\bibitem{Fang_et_al_MoC_02}
K.-T. Fang, C.-X. Ma, P.~Winker, Centered l2-discrepancy of random sampling and
  latin hypercube design, and construction of uniform designs, Mathematics of
  Computation 71 (2002) 275--296.

\bibitem{Glasserman_2003}
P.~Glasserman, Monte Carlo Methods in Financial Engineering, Springer, 2003.

\bibitem{Rubinstein_Kroese_08}
R.~Rubinstein, D.~Kroese, Simulation and the Monte-Carlo Method, John Wiley and
  Sons, New York, NY, 2008.

\bibitem{Kalos_Whitlock_2008}
M.~Kalos, P.~Whitlock, Monte Carlo Methods, Wiley, 2008.

\bibitem{Gilman_Conf_68}
M.~Gilman, A brief survey of stopping rules for monte carlo, in: Proceedings of
  the second conference on applications of simulations, New York, NY, USA,
  1968.

\bibitem{Tong_RESS_06}
C.~Tong, Refinement strategies for stratified sampling methods, Reliability
  Engineering and System Safety 91 (2006) 1257--1265.

\bibitem{Sallaberry_et_al_RESS_08}
C.~Sallaberry, J.~Helton, S.~Hora, Extension of latin hypercube samples with
  correlated variables, Reliability Engineering and System Safety 93 (2008)
  1047--1059.

\bibitem{Vorechovsky_et_al_ICOSSAR_13}
M.~Vorechovsky, D.~Novak, R.~Rusina, Sample size extension in stratified
  sampling: Theory and software implementation, in: G.~Deodatis, B.~Ellingwood,
  D.~Frangopol (Eds.), Safety, Reliability, Risk and Life-Cycle Performance of
  Structures and Infrastructures: Proceedings of the 11th International
  Conference on Structural Safety and Reliability (ICOSSAR 2013), IASSAR, CRC
  Press, New York, 2013.

\bibitem{Iman_Conf_81}
R.~Iman, Statistical methods for including uncertainties associated with the
  geologic isolation of radioactive waste which allow for comparison with
  licensing criteria, in: Proceedings of the symposium on uncertainties
  associated with the regulation of the geologic disposal of high-level
  radioactive waste, Gatlinburg, TN, USA, 1981.

\bibitem{McKay_TR_95}
M.~McKay, Evaluating prediction uncertainty, Tech. Rep. NUREG/CR-6311, Los
  Alamos National Laboratory (1995).

\bibitem{Qian_Bio_09}
P.~Qian, Nested latin hypercube designs, Biometrika 96~(4) (2009) 957--970.

\bibitem{Xiong_EO_09}
F.~Xiong, Y.~Xiong, W.~Chen, S.~Yang, Optimizing latin hypercube design for
  sequential sampling of computer experiments, Engineering Optimization 41
  (2009) 793--810.

\bibitem{Rennen_et_al_SMO_10}
G.~Rennen, B.~Husslage, E.~Van~Dam, D.~Hertog, Nested maximin latin hypercube
  designs, Structural and Multidisciplinary Optimization 41 (2010) 371--395.

\bibitem{Lepage_JCP_78}
G.~Lepage, A new algorithm for adaptive multidimensional integration, Journal
  of Computational Physics 27 (1978) 192--203.

\bibitem{Press_Farrar_CiP_90}
W.~Press, G.~Farrar, Recursive stratified sampling for multidimensional monte
  carlo integration, Computers in Physics 4 (1990) 190--195.

\bibitem{Shields_Sundar_IJRS_2015}
M.~Shields, V.~Sundar, Targeted random sampling: A new approach for efficient
  reliability estimation of complex systems, International Journal of
  Reliability and Safety.

\bibitem{Johnson_et_al_JSPI_90}
M.~Johnson, L.~Moor, D.~Ylvisaker, Minimax and maximin distance designs,
  Journal of Statistical Planning and Interence 26 (1990) 131--148.

\bibitem{Morris_Mitchell_JSPI_95}
M.~Morris, T.~Mitchell, Exploratory designs for computational experiments,
  Journal of Statistical Planning and Interence 43 (1995) 381--402.

\bibitem{Ye_et_al_JSPI_2000}
K.~Ye, W.~Li, A.~Sudjianto, Algorithmic construction of optimal symmetric latin
  hypercube designs, Journal of Statistical Planning and Interence 90 (2000)
  145--159.

\bibitem{Joseph_Hung_SS_08}
V.~Joseph, Y.~Hung, Orthogonal-maximin latin hypercube designs, Statistica
  Sinica 18~(171-186).

\bibitem{Liefvendahl_Stocki_JSPI_06}
M.~Liefvendahl, R.~Stocki, A study on algorithms for optization of latin
  hypercubes, Journal of Statistical Planning and Interence 136~(9) (2006)
  3231--3247.

\bibitem{Park_JSPI_94}
J.~Park, Optimal latin-hypercube designs for computer experiments, Journal of
  Statistical Planning and Interence 39 (1994) 95--111.

\bibitem{Iman_Conover_CSSC_82}
R.~Iman, W.~Conover, A distribution-free approach to inducing rank correlation
  among input variables, Communications in Statistics: Simulation and
  Computation 11~(3) (1982) 311--334.

\bibitem{Florian_PEM_92}
A.~Florian, An efficient sampling scheme: Updated latin hypercube sampling,
  Probabilistic Engineering Mechanics 7 (1992) 123--130.

\bibitem{Vorechovsky_Novak_PEM_09}
M.~Vorechovsky, D.~Novak, Correlation control in small-sample monte carlo type
  simulations i: A simulated annealing approach, Probabilistic Engineering
  Mechanics 24 (2009) 452--462.

\bibitem{Tang_JASA_93}
B.~Tang, Orthogonal array-based latin hypercubes, Journal of the American
  Statistical Association 88 (1993) 1392--1397.

\bibitem{Ye_JASA_98}
K.~Ye, Orthogonal column latin hypercubes and their application in computer
  experiments, Journal of the American Statistical Association 93 (1998)
  1430--1439.

\bibitem{Cioppa_Lucas_Tech_07}
T.~Cioppa, T.~Lucas, Efficient nearly orthogonal and space-filling latin
  hypercubes, Technometrics 49~(1) (2007) 45--55.

\bibitem{Fang_et_al_06}
K.-T. Fang, R.~Li, A.~Sudjianto, Design and Modeling for Computer Experiments,
  Chapman and Hall/CRC, London, UK, 2006.

\bibitem{Dalbey_Karystinos_AIAA_10}
K.~Dalbey, G.~Karystinos, Fast generation of space-filling latin hypercube
  sample designs, in: Proceedings of the 13th AIAA/ISSMO Multidisciplinary
  Analysis and Optimization Conference, 2010.

\bibitem{Hickernell_98}
F.~Hickernell, Random and Quasi-Random Point Sets, Springer, 1998, Ch. Lattice
  rules: how well do they measure up?

\bibitem{Aurenhammer_ACS_91}
F.~Aurenhammer, Voronoi diagrams - a survey of a fundamental geometric data
  structure, ACM Computing Surveys 23~(3) (1991) 345--405.

\bibitem{Wu_Hamada_00}
C.~Wu, M.~Hamada, Experiments: Planning, Analysis, and Parameter Design
  Optimization, Wiley, New York, 2000.

\bibitem{Saltelli_et_al_08}
A.~Saltelli, M.~Ratto, T.~Andres, F.~Campolongo, J.~Cariboni, D.~Gatelli,
  M.~Saisana, S.~Tarantola, Global Sensitivity Analysis: The Primer, Wiley,
  2008.

\bibitem{Grigoriu_AMM_09}
M.~Grigoriu, Reduced order models for random function. application to
  stochastic problems, Applied Mathematical Modeling 33 (2009) 161--175.

\bibitem{Roy_Oberkampf_CMAME_11}
C.~Roy, W.~Oberkampf, A comprehensive framework for verification, validation,
  and uncertainty quantification in scientific computing, Computer Methods in
  Applied Mechanics and Engineering 200 (2011) 2131--2144.

\bibitem{Teferra_et_al_RESS_14}
K.~Teferra, M.~Shields, A.~Hapij, R.~Daddazio, Mapping model validation metrics
  to subject matter expert scores for model adequacy assessment, Reliability
  Engineering and System Safety 132 (2014) 9--19.

\bibitem{Bleich_Sandler_IJSS_70}
H.~Bleich, I.~Sandler, Interaction between structures and bilinear fluids,
  International Journal of Solids and Structures 6 (1970) 617--639.

\bibitem{DiMaggio_et_al_JAM_81}
F.~DiMaggio, I.~Sandler, D.~Rubin, Uncoupling approximations in fluid-structure
  interaction problems with cavitation, Journal of Applied Mechanics 48 (1981)
  753--756.

\bibitem{Sprague_Geers_SaV_01}
M.~Sprague, T.~Geers, Computational treatments of cavitation effects in
  near-free-surface underwater shock, Shock and Vibration 7 (2001) 105--122.

\bibitem{Sprague_Geers_JCP_03}
M.~Sprague, T.~Geers, Spectral elements and field separation for an acoustic
  field subject to cavitation, Journal of Computational Physics 184 (2003)
  149--162.

\bibitem{Stultz_Daddazio_SaV_02}
K.~Stultz, R.~Daddazio, The applicability of fluid-structure interaction
  approaches to the analysis of floating targets subjected to undex loading,
  in: Proceedings of the 73rd Shock and Vibration Symposium, Newport, RI, 2002.

\bibitem{Efron_AoS_79}
B.~Efron, Bootstrap methods: Another look at the jackknife, The Annals of
  Statistics 7 (1979) 1--26.

\end{thebibliography}

\end{document}